\documentclass[letterpaper,conference]{ieeeconf}
\IEEEoverridecommandlockouts                              
\overrideIEEEmargins
\usepackage[utf8]{inputenc}
\usepackage{textcomp}

\usepackage{graphicx}
\usepackage{color}
\usepackage[table]{xcolor}
\usepackage{hhline}
\usepackage{amsmath,amsthm}
\usepackage{mathtools}
\usepackage[version=4]{mhchem}
\usepackage{siunitx}
\usepackage{graphicx}
\usepackage{longtable,tabularx}
\usepackage{booktabs}
\usepackage{multirow}
\usepackage{hhline}
\usepackage{dsfont}
\usepackage[linesnumbered,ruled,vlined]{algorithm2e}
\SetKwInOut{KwParam}{Parameters}
\let\oldnl\nl
\newcommand{\nonl}{\renewcommand{\nl}{\let\nl\oldnl}}
\DontPrintSemicolon
\usepackage[style=ieee ,sorting=none,maxbibnames=99,giveninits, doi=false, backend=bibtex]{biblatex}
\addbibresource{bibliography.bib}
\AtBeginBibliography{\scriptsize}

\setlength\LTleft{0pt} 
\usepackage{amsmath} 
\usepackage{amsfonts}
\usepackage{amsthm}

\newtheorem{theorem}{Theorem}

\theoremstyle{definition}



\newtheorem{problem}{Problem}[section]
\newtheorem{remark}{Remark}
\newtheorem{lemma}{Lemma}[section]

\usepackage{float} 
\usepackage[skip=0em,font=footnotesize]{caption}








\usepackage[margin = 1in]{geometry}

\usepackage{tikz}
\usetikzlibrary{shapes,arrows,calc,positioning}
\tikzstyle{bigblock} = [draw, fill=blue!20, rectangle, 
    minimum height=6em, minimum width=8em]
\tikzstyle{medblock} = [draw, fill=blue!20, rectangle, 
    minimum height=4em, minimum width=4em]    
\tikzstyle{mux} = [draw, fill=black!20, rectangle, 
    minimum height=5em, minimum width=0.1em]    
\tikzstyle{smallblock} = [draw, fill=blue!20, rectangle, 
    minimum height=2em, minimum width=3em]
    
\tikzstyle{data_block} = [draw, fill=green!20, rectangle, 
    minimum height=2em, minimum width=3em]
\tikzstyle{ops_block} = [draw, fill=blue!20, rectangle, 
    minimum height=2em, minimum width=3em]    
\tikzstyle{est_block} = [draw, fill=red!20, rectangle, 
    minimum height=2em, minimum width=3em]    
    
\tikzstyle{sum} = [draw, fill=blue!20, circle, node distance=1cm,minimum height=0.5cm]
\tikzstyle{signal} = [coordinate]
\tikzstyle{pinstyle} = [pin edge={to-,thin,black}]
\tikzstyle{block} = [draw, fill=blue!20, rectangle, 
    minimum height=3em, minimum width=9em]
\tikzstyle{blockS} = [draw, fill=blue!20, rectangle, 
    minimum height=3em, minimum width=4em]    
\tikzstyle{input} = [coordinate]
\tikzstyle{output} = [coordinate]
\usetikzlibrary{matrix}

\usetikzlibrary{positioning}
\usetikzlibrary{math}

\usepackage{hyperref}
\usepackage{xcolor}
\hypersetup{
    colorlinks,
    linkcolor={blue!100!black},
    citecolor={blue!50!black},
    urlcolor={blue!80!black}
}







\newcommand{\bc}{\begin{center}}
\newcommand{\ec}{\end{center}}
\newcommand{\benum}{\begin{enumerate}}
\newcommand{\eenum}{\end{enumerate}}
\newcommand{\nn}{\nonumber}
\newcommand{\matl}{\left[ \begin{array}}
\newcommand{\matr}{\end{array} \right]}

\renewcommand{\matl}{\begin{bmatrix}}
\renewcommand{\matr}{\end{bmatrix}}

\newcommand{\matls}{\left[ \begin{smallmatrix}}
\newcommand{\matrs}{\end{smallmatrix} \right]}
\newcommand{\isdef}{\stackrel{\triangle}{=}}


\newcommand{\rmT}{{\rm T}}

\newcommand{\rmc}{{\rm c}}
\newcommand{\rmd}{{\rm d}}

\newcommand{\rmh}{{\rm h}}
\newcommand{\rmi}{{\rm i}}

\newcommand{\rmr}{{\rm r}}
\newcommand{\rms}{{\rm s}}
\newcommand{\rmt}{{\rm t}}

\newcommand{\rmv}{{\rm v}}

\newcommand{\BBR}{{\mathbb R}}








\usepackage{enumitem,amssymb}
\newlist{todolist}{itemize}{2}
\setlist[todolist]{label=$\square$}
\usepackage{pifont}

\pagenumbering{gobble}

\newcommand{\argmin}{\mathop{\mathrm{argmin}}\limits}


\newtheorem{define}{Definition}
\newtheorem{assume}{Assumption}


\newcommand{\R}{\mathbb R}

\newcommand{\bmx}[1]{\begin{bmatrix}#1\end{bmatrix}} 
\newcommand{\bkt}[1]{\left[#1\right]} 
\newcommand{\pth}[1]{\left(#1\right)} 
\newcommand{\brc}[1]{\left \{#1\right \}} 
\newcommand{\nrm}[1]{\left \lVert#1\right \rVert} 


\newcommand{\bmxs}[1]{\begin{bsmallmatrix}#1\end{bsmallmatrix}}

\newcommand{\gleq}{\preceq} 
\newcommand{\ggeq}{\succeq} 


\DeclarePairedDelimiter{\ceil}{\lceil}{\rceil}
\DeclarePairedDelimiter{\floor}{\lfloor}{\rfloor}
\DeclarePairedDelimiter{\abs}{\lvert}{\rvert}




\makeatletter
\let\oldceil\ceil
\def\ceil{\@ifstar{\oldceil}{\oldceil*}}
\makeatother
\makeatletter
\let\oldfloor\floor
\def\floor{\@ifstar{\oldfloor}{\oldfloor*}}
\makeatother
\makeatletter
\let\oldnorm\norm
\def\norm{\@ifstar{\oldnorm}{\oldnorm*}}
\makeatother
\makeatletter
\let\oldabs\abs
\def\abs{\@ifstar{\oldabs}{\oldabs*}}
\makeatother


\title{Computing Safe Control Inputs using Discrete-Time Matrix Control Barrier Functions via Convex Optimization}

\author{
James Usevitch, 
Juan Augusto Paredes Salazar,
and
Ankit Goel
\thanks{James Usevitch is with the Department of Electrical and Computer Engineering, Brigham Young University, Provo, UT 84602. {\tt\small james\_usevitch@byu.edu}}%
\thanks{Juan Augusto Paredes Salazar and Ankit Goel are with the Department of Mechanical Engineering, University of Maryland, Baltimore County,1000 Hilltop Circle, Baltimore, MD 21250. {\tt\small japarede, ankgoel@umbc.edu }}%
}

\begin{document}

\maketitle

\begin{abstract}
Control barrier functions (CBFs) have seen widespread success in providing forward invariance and safety guarantees for dynamical control systems.
%
The majority of formulations consider continuous-time formulations, but recent focus has shifted to discrete-time dynamics.
A crucial limitation of discrete-time formulations is that CBFs that are nonconcave in their argument require the solution of nonconvex optimization problems to compute safety-preserving control inputs, which inhibits real-time computation of control inputs guaranteeing forward invariance.
This paper presents a novel method for computing safety-preserving control inputs for discrete-time systems with nonconvex safety sets, utilizing convex optimization and the recently developed class of matrix control barrier function techniques. 
The efficacy of our methods is demonstrated through numerical simulations on a bicopter system.
\end{abstract}
%

\section{Introduction}

Guaranteeing safety and collision avoidance for dynamical systems is a fundamental problem in control theory and robotics.
Control barrier functions (CBFs) have become a widely applied technique to obtain safety guarantees for systems with applications in aerospace \cite{breeden2023autonomous, molnar2025collision}, robotics \cite{ferraguti2022safety, cortez2019control}, self-driving cars \cite{chen2017obstacle, alan2023control}, and numerous other domains \cite{garg2024advances}. 
The majority of prior work on CBFs has considered continuous-time systems, but a growing body of work has analyzed CBFs from a discrete-time perspective \cite{agrawal2017discrete, Cosner-RSS-23, zeng2021safety, liu2023iterative, khajenejad2021tractable, zeng2021safety}.

One key challenge of discrete-time CBFs is that the optimization formulation for computing a safety-preserving control input is generally nonconvex.
Unlike continuous-time CBFs, the safety constraint is, in general, non-affine in the control input; therefore, the convexity of the optimization problem heavily depends on the form of the function defining the safe set. 
Many common collision avoidance scenarios require the use of nonconvex safe set functions, which necessitate the use of nonconvex optimization solvers that have longer runtimes and give few, if any, guarantees on finding a globally optimal solution.

Prior work has dealt with this challenge in several ways. In \cite{agrawal2017discrete}, the authors considered linear systems and convex quadratic discrete-time CBFs to maintain convexity of the optimization program computing safe control inputs. 
The authors of \cite{ahmadi2019safe, zeng2021safety, Cosner-RSS-23} assume that the composition of the CBF with the dynamics is concave in the control input. This results in a convex optimization problem; however, the assumption does not hold for many common collision avoidance scenarios involving convex obstacles.
In \cite{khajenejad2021tractable}, the notion of partially affine CBFs was introduced, which guarantees convexity of the optimization problem under a specific form of partitioned control-affine discrete-time dynamics.
Despite the progress made by these prior works, the challenge of computing safe control inputs for discrete-time systems using solely convex optimization remains an open problem.

\begin{figure}
    \centering
    \includegraphics[width=0.9\columnwidth]{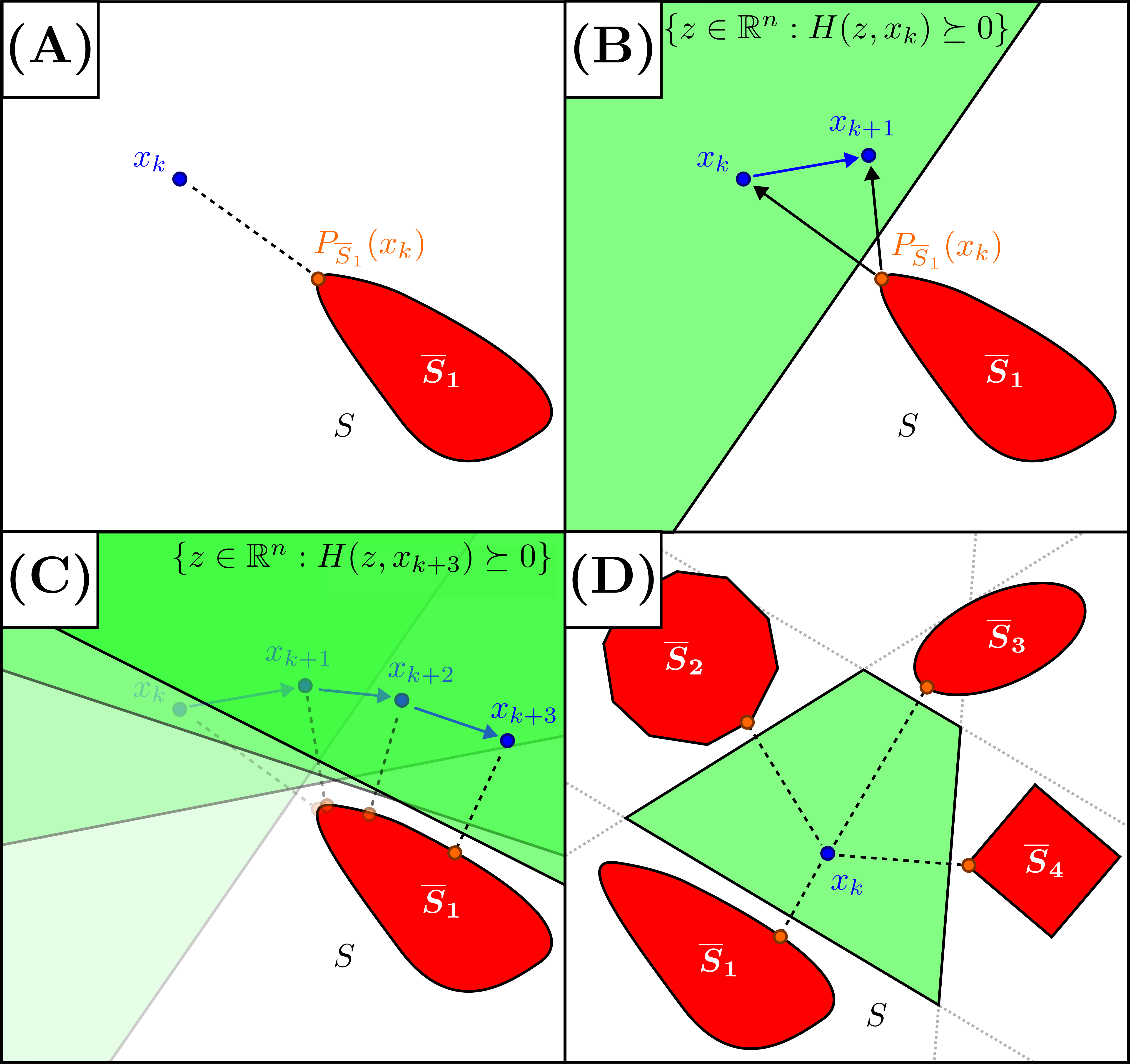}
    \caption{Illustration of our proposed method to maintain convexity of computing safe control inputs for discrete-time systems. \textbf{(A):} The current state is projected onto the unsafe set. \textbf{(B):} A halfspace is formed which excludes the obstacle, and a control input keeping the next state in the halfspace is computed. \textbf{(C):} The process is repeated iteratively. \textbf{(D):} The proposed method can be applied to scenarios where the unsafe set is a subset of a union of convex regions.}
    \label{fig:illustration}
\end{figure}

In addition, prior work has only considered scalar-valued control barrier functions.
Recently, the notion of continuous-time matrix control barrier functions (MCBFs) was introduced \cite{ong2025matrix, ong2025combinatorial}. 
MCBFs, of which scalar-valued CBFs are a special case, allow for more general classes of constraints over the cone of positive semidefinite matrices using the Loewner order.
To our knowledge, the notion of MCBFs has not been extended to discrete-time dynamics.

This paper introduces the concept of exponential discrete-time matrix control barrier functions and presents a novel method for computing safe control inputs using solely convex optimization tools. 
Our proposed method operates by iteratively constructing convex subsets of the safe set obtained through a projection-based method.
The method only requires solving two sequential convex optimization problems, and our experimental results suggest that the method is considerably faster and more reliable than nonconvex optimization approaches.
The main contributions of this work thus are 
\begin{itemize}
    \item A novel definition of exponential discrete-time matrix control barrier functions,
    \item A novel method to compute safe control inputs for discrete-time systems with nonconvex safe sets using solely convex optimization. This includes the new notions of \emph{safe subset functions} and \emph{Subset-based Discrete-Time Exponential Matrix Control Barrier Functions} (SDTE-MCBFs).
    \item A novel extension of indefinite matrix safe sets to discrete-time matrix control barrier functions
\end{itemize}
We note that the work \cite{liu2023iterative} considers a method of projecting onto convex obstacles, similar to the one presented in this paper. 
However, this prior work only considers the special case of scalar-valued safe set functions, does not consider the zeroing CBF property of asymptotic convergence to the safe set, and does not consider disjunctive boolean compositions of multiple CBFs.

The outline of this paper is as follows: notation and the problem formulation are given in Section \ref{sec:notation_problem}. Our main results are given in Section \ref{sec:main_results}. Numerical simulations demonstrating the efficacy of our proposed method are given in \ref{sec:simulations}. A brief conclusion is given in Section \ref{sec:conclusions}.

\section{Notation and Problem Formulation}
\label{sec:notation_problem}

The closure of a set \(S\) is denoted \(\text{cl}(S)\), the convex hull is denoted \(\text{co}(S)\), and the power set is denoted \(2^S\).
The set of symmetric real matrices of size \(p\times p\) is denoted \(\mathbb{S}^p\). The cone of positive semidefinite symmetric \(p \times p\) matrices is denoted \(\mathbb{S}_+^p\), and the set of positive definite \(p \times p\) symmetric matrices is denoted \(\mathbb{S}_{++}^p\).
Given matrices \(A,B \in \R^{n\times n}\), the notation \(A \ggeq 0\) and \(B \succ 0\) indicate that \(A\) is positive semidefinite and \(B\) is positive definite as per the Loewner order.

The following theorem will be used to bound the eigenvalues of the matrices in \(\mathbb{S}^p\).
\begin{theorem}[{Weyl's Inequality \cite[Thm 4.3.1]{horn2012matrix}}]
    \label{thm:weyl}
    Let \(A, B \in \mathbb{S}^p\). Let the respective eigenvalues of \(A,B\) and \(A+B\) be \(\{\lambda_i(A)\}_{i=1}^p\), \(\{\lambda_i(B)\}_{i=1}^p\), and \(\{\lambda_i(A+B)\}_{i=1}^p\). Let the eigenvalues be ordered such that \(\lambda_1 \leq \lambda_2 \leq \cdots \leq \lambda_p\). Then the following holds:
    \begin{align}
        \lambda_{i-j+1}(A) + \lambda_{j}(B) \leq \lambda_i(A+B).
    \end{align}
\end{theorem}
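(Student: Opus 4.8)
The plan is to prove Weyl's inequality through the Courant--Fischer variational (min--max) characterization of eigenvalues of symmetric matrices. Specifically, for any \(M \in \mathbb{S}^p\) with eigenvalues ordered \(\lambda_1(M) \leq \cdots \leq \lambda_p(M)\) and any \(k \in \{1,\dots,p\}\), one has the ``max--min'' form
\[
\lambda_k(M) = \max_{\substack{V \subseteq \R^p \\ \dim V = p-k+1}} \ \min_{\substack{x \in V \\ \|x\| = 1}} x^\T M x,
\]
which I would either invoke as a standard fact or derive quickly from the spectral decomposition of \(M\). The strategy is to apply this representation simultaneously to \(A\) (with index \(k = i-j+1\)), to \(B\) (with index \(k = j\)), and to \(A+B\) (with index \(k = i\)), and then exhibit a single test subspace on which all three bounds can be compared.

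Concretely, I would let \(V_A\) be a subspace of dimension \(p - (i-j+1) + 1 = p - i + j\) achieving the maximum in the characterization of \(\lambda_{i-j+1}(A)\), and let \(V_B\) be a subspace of dimension \(p - j + 1\) achieving the maximum for \(\lambda_j(B)\). The key step is a dimension count: since \(V_A, V_B\) are subspaces of \(\R^p\), their intersection satisfies \(\dim(V_A \cap V_B) \geq \dim V_A + \dim V_B - p = (p-i+j) + (p-j+1) - p = p - i + 1 \geq 1\), where the implicit hypotheses \(1 \le j \le i \le p\) guarantee that all of these dimensions are admissible. Hence I can select a subspace \(W \subseteq V_A \cap V_B\) with \(\dim W = p - i + 1\).

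For any unit vector \(x \in W\), I would then split \(x^\T (A+B) x = x^\T A x + x^\T B x\) and bound each term: because \(x \in V_A\), \(x^\T A x \geq \min_{y \in V_A,\, \|y\|=1} y^\T A y = \lambda_{i-j+1}(A)\), and similarly \(x^\T B x \geq \lambda_j(B)\). Taking the minimum over unit \(x \in W\) and then using that \(W\) is one particular subspace of dimension \(p-i+1\), hence a feasible competitor in the max--min formula for \(\lambda_i(A+B)\), yields
\[
\lambda_i(A+B) \geq \min_{\substack{x \in W \\ \|x\|=1}} x^\T(A+B)x \geq \lambda_{i-j+1}(A) + \lambda_j(B),
\]
which is the desired inequality.

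The main obstacle---indeed essentially the only nontrivial point---is setting up the Courant--Fischer characterization correctly under the increasing-order eigenvalue convention, tracking the index shift \(k \mapsto p-k+1\) in the subspace dimensions, and verifying that the dimension arithmetic produces a common subspace of exactly the right size to serve as a competitor for \(\lambda_i(A+B)\). The remaining ingredients (splitting the quadratic form and monotonicity of min/max over nested families of subspaces) are routine. Since the statement is classical, in the paper I would simply cite \cite[Thm 4.3.1]{horn2012matrix}, but the argument above is the self-contained proof.
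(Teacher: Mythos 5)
Your proof is correct: the Courant--Fischer max--min characterization is stated with the right subspace dimensions for the increasing-order convention, the dimension count \(\dim(V_A\cap V_B)\ge (p-i+j)+(p-j+1)-p=p-i+1\ge 1\) is valid under \(1\le j\le i\le p\), and using \(W\) as a competitor for \(\lambda_i(A+B)\) closes the argument. The paper itself offers no proof---it simply cites \cite[Thm 4.3.1]{horn2012matrix}---and your argument is exactly the standard proof given in that reference, so there is nothing to reconcile.
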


The following will be required for our analysis of convex sets. 
\begin{define}[Projection Mapping]
    Given a closed, convex set \(C \subset \R^n\), the projection mapping \(P_C : \R^n \to \R^n \) is defined as \(P_C(x) = \arg\min_{z\in C} \nrm{z-x}\).
\end{define}

\begin{define}[{Normal Cone \cite{rockafellar1998variational}}]
    \label{def:normal_cone}
    The normal cone mapping of a convex set \(C \subset \R^n\), denoted \(N_C : \R^n \to 2^{\R^n}\) is defined as \(N_C(x) \triangleq \{v \in \R^n : v^\intercal (z-x) \leq 0 \text{ for all } z \in C \}.\)
\end{define}

\begin{theorem}[{\cite[Prop. 6.17]{rockafellar1998variational}}]
    \label{thm:Normal_Projection_Cones}
    Given a convex set \(C \subset \R^n\), the normal cone mapping \(N_C\) and projection mapping \(P_C\) are related as follows, where \(I\) indicates the identity mapping:
    \begin{align}
        N_C &= P_C^{-1} - I, &
        P_C &= (I + N_C)^{-1}. \label{eq:P_C_N_C}
    \end{align}
\end{theorem}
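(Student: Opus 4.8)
The plan is to derive both identities in \eqref{eq:P_C_N_C} from the single \emph{variational characterization} of the Euclidean projection: for a closed convex set $C$, any $x \in C$, and any $y \in \R^n$,
\[
x = P_C(y) \quad\Longleftrightarrow\quad \inner{y-x}{z-x} \le 0 \ \text{ for all } z \in C,
\]
i.e., $x = P_C(y)$ exactly when $y - x \in N_C(x)$ in the sense of Definition~\ref{def:normal_cone}. Once this equivalence is in hand, both relational identities follow by elementary bookkeeping with inverse relations. Throughout I read $P_C^{-1}$ and $(I+N_C)^{-1}$ as inverse set-valued maps and restrict attention to $x \in C$; for $x \notin C$ both sides of each identity reduce to $\emptyset$ under the standard convention $N_C(x)=\emptyset$.

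For the forward implication, suppose $x = P_C(y)$, fix $z \in C$, and for $t \in (0,1]$ note $x + t(z-x) \in C$ by convexity. Minimality of $x$ gives $\nrm{y - x}^2 \le \nrm{y - x - t(z-x)}^2 = \nrm{y-x}^2 - 2t\inner{y-x}{z-x} + t^2 \nrm{z-x}^2$; dividing by $t$ and letting $t \downarrow 0$ yields $\inner{y-x}{z-x} \le 0$, so $y - x \in N_C(x)$. For the converse, suppose $x \in C$ and $y - x \in N_C(x)$. For any $z \in C$,
\[
\nrm{y - z}^2 = \nrm{y-x}^2 + 2\inner{y - x}{x - z} + \nrm{x - z}^2 \ge \nrm{y-x}^2,
\]
because $\inner{y-x}{x-z} = -\inner{y-x}{z-x} \ge 0$ and $\nrm{x-z}^2 \ge 0$; since $C$ is closed and convex the minimizer of $z \mapsto \nrm{y-z}$ over $C$ is unique, so $x = P_C(y)$.

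With the equivalence established, I read off the first identity: for $x \in C$,
\[
v \in N_C(x) \iff P_C(x+v) = x \iff x + v \in P_C^{-1}(x) \iff v \in P_C^{-1}(x) - x,
\]
which is exactly $N_C = P_C^{-1} - I$. Adding the identity relation to both sides gives $P_C^{-1} = I + N_C$ as set-valued maps, and inverting both sides gives $P_C = (P_C^{-1})^{-1} = (I + N_C)^{-1}$, the second identity; single-valuedness of the right-hand side is precisely the uniqueness of the projection.

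I do not expect a substantial obstacle here; the only points requiring care are (i) keeping the set-valued/relational reading of $P_C^{-1}$ and $(I+N_C)^{-1}$ consistent and disposing of the trivial $x \notin C$ case, and (ii) invoking existence and uniqueness of the projection onto a closed convex set, which is where closedness of $C$ (assumed in the definition of $P_C$) enters.
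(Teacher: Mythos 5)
Your proof is correct, but note that the paper does not prove this statement at all: it is imported verbatim as Proposition 6.17 of the cited Rockafellar--Wets reference, so there is no in-paper argument to compare against. What you have written is the standard self-contained derivation — the variational characterization $x = P_C(y) \iff y - x \in N_C(x)$ established via the convexity/limit argument in one direction and the norm expansion plus uniqueness of the projection in the other, followed by graph-level bookkeeping with the inverse relations — and it establishes exactly the two identities in \eqref{eq:P_C_N_C}. One small point you handled well and is worth keeping explicit: the identities hold under the convention $N_C(x) = \emptyset$ for $x \notin C$ (the Rockafellar--Wets convention), whereas the paper's Definition \ref{def:normal_cone} as literally written does not restrict to $x \in C$ and can yield a nonempty cone at exterior points (e.g.\ $x - P_C(x)$ satisfies the defining inequality); since $P_C^{-1}(x)$ is empty for $x \notin C$, the first identity would fail at such points without that convention, so your restriction is not cosmetic but necessary for the statement to be read correctly.
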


\subsection{Background on Discrete-Time Safety Filters}

Consider a discrete-time dynamical system with the control-affine form
\begin{align}
    x_{k+1} = f(x_k) + g(x_k) u_k. \label{eq:discrete_dynamics}
\end{align}
Here, \(x_k \in \R^n\) denotes the state at step \(k\), \(u \in \R^m\) denotes the control input at step \(k\), and the functions \(f : \R^n \to \R^n \) and \(g : \R^n \to \R^{n \times m}\) are locally Lipschitz in their arguments. Without loss of generality, we assume an initial step value of \(k = 0\).

The state space \(\R^n\) is divided into a safe set \(S_h \subset \R^n\) and an unsafe set \(\overline{S}_h = \R^n \backslash S_h\). The set \(S_h\) is defined as the superlevel sets of a function \(h : \R^n \to \R\) as follows:
\begin{subequations}
\label{eq:scalar_h_function}
\begin{align}
    S_h &= \{x \in \R^n : h(x) \geq 0\}, \\
    \text{int}(S_h) &= \{x \in \R^n : h(x) > 0\}, \\
    \partial S_h &= \{x \in \R^n : h(x) = 0\}.
\end{align}
\end{subequations}
The objective is to render the set \(S_h\) forward invariant with respect to the dynamics \eqref{eq:discrete_dynamics}, which is defined as follows:
\begin{define}
    A set \(\mathcal{C} \subset \R^n\) is forward invariant with respect to the dynamics \eqref{eq:discrete_dynamics} if \(x_k \in \mathcal{C}\) for all \(k \geq 0\).
    The set \(\mathcal{C}\) is called \emph{safe} if it is both forward invariant and \(\mathcal{C} \subseteq S_h\).
\end{define}
Let \(\mathcal{D}_h \subseteq \R^n\) be a domain such that \(S_h \subset \text{int}(\mathcal{D}_h)\).
It has been shown in prior literature that satisfaction of the following condition for all \(x_k \in \mathcal{D}_h\) is sufficient to ensure forward invariance of the set \(S_h\) \cite{agrawal2017discrete}:
\begin{align}
    &\sup_{u} \bkt{h(x_{k+1}(x_k,u)) - h(x_k)} \geq - \gamma h(x_k), \label{eq:safety_condition} \\
    &\iff \sup_u \bkt{h(f(x_k) + g(x_k) u)} \geq h(x_k) - \gamma h(x), \nonumber 
\end{align}
where \(0 < \gamma \leq 1\).

\begin{define}
    A function \(h : \R^n \to \R\) is called a discrete-time exponential control barrier function for the system \eqref{eq:discrete_dynamics} if the condition \eqref{eq:safety_condition} holds for all \(x \in \mathcal{D}_h\).
\end{define}

We define the set 
\begin{align}
&K_h(x_k) =  \nonumber \\
&\{u \in \R^m : h(f(x_k) + g(x_k) u) \geq h(x_k) - \alpha(h(x)) \}. \label{eq:Kh_exp_CBF}
\end{align}
\begin{theorem}[\cite{agrawal2017discrete}]
    Suppose that \eqref{eq:safety_condition} holds for all \(x_k \in \mathcal{D}_h\) and \(h(x_0) \geq 0\). Then any control law \(u_k(\cdot)\) such that \(u_k(x_k) \in K_h(x_k)\) renders the set \(S_h\) forward invariant.
\end{theorem}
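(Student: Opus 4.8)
The plan is to establish forward invariance of \(S_h\) by induction on the step index \(k\ge 0\), with the sole analytic ingredient being the defining inequality of \(K_h(x_k)\) in \eqref{eq:Kh_exp_CBF} combined with the sign restriction \(0<\gamma\le 1\). (Throughout I take the class-\(\mathcal K\) term \(\alpha\) appearing in \eqref{eq:Kh_exp_CBF} to be the linear function \(\alpha(r)=\gamma r\) that appears in the forward-invariance condition \eqref{eq:safety_condition}; this is the only reading consistent with the statement.) The base case is immediate: the hypothesis \(h(x_0)\ge 0\) is, by the definition \eqref{eq:scalar_h_function}, precisely the statement \(x_0\in S_h\).

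For the inductive step, assume \(x_k\in S_h\), i.e.\ \(h(x_k)\ge 0\). Because the applied input satisfies \(u_k(x_k)\in K_h(x_k)\), evaluating \eqref{eq:Kh_exp_CBF} at \(u=u_k(x_k)\) and using the dynamics \eqref{eq:discrete_dynamics} gives
\[
h(x_{k+1}) \;=\; h\bigl(f(x_k)+g(x_k)u_k(x_k)\bigr)\;\ge\; h(x_k)-\gamma h(x_k)\;=\;(1-\gamma)\,h(x_k).
\]
Since \(0<\gamma\le 1\) we have \(1-\gamma\ge 0\), and the inductive hypothesis gives \(h(x_k)\ge 0\); hence \((1-\gamma)h(x_k)\ge 0\), so \(h(x_{k+1})\ge 0\), i.e.\ \(x_{k+1}\in S_h\). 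By induction \(x_k\in S_h\) for all \(k\ge 0\), which is exactly forward invariance of \(S_h\) with respect to \eqref{eq:discrete_dynamics}.

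The only point requiring comment rather than computation is that the statement is conditioned on the existence of a control law with \(u_k(x_k)\in K_h(x_k)\); this is where \eqref{eq:safety_condition} enters, since \(\sup_u[h(x_{k+1}(x_k,u))-h(x_k)]\ge-\gamma h(x_k)\) ensures \(K_h(x_k)\neq\emptyset\) (attained at a maximizer, or via a limiting argument using continuity of \(h,f,g\) when the supremum is not attained), so that the induction above is not vacuous. I do not expect any genuine obstacle here: the argument collapses to the one-line estimate displayed above, and the only thing to get right is the bookkeeping observation that \(1-\gamma\ge 0\) is exactly what converts ``\(h\) may decrease by at most the fraction \(\gamma\) of its current value'' into ``\(h\) remains nonnegative,'' which is why the constraint \(0<\gamma\le 1\) cannot be dropped.
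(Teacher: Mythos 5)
Your induction is correct: the one-step estimate \(h(x_{k+1})\ge(1-\gamma)h(x_k)\ge 0\) together with \(h(x_0)\ge 0\) is exactly the standard argument, and it is the same induction the paper itself uses when it proves the matrix-valued generalization (the first Lemma of Section \ref{sec:main_results}, where \(H(x_{k+1})\ggeq(1-\gamma)H(x_k)\) propagates positive semidefiniteness); the scalar theorem here is only cited to \cite{agrawal2017discrete}, not reproved. Your reading of \(\alpha(h(x))\) in \eqref{eq:Kh_exp_CBF} as \(\gamma h(x_k)\) is the intended one, and your side remark on nonemptiness of \(K_h(x_k)\) is a harmless extra since the theorem already presupposes \(u_k(x_k)\in K_h(x_k)\).
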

In words, \(K_h\) is the set of control inputs that render the set \(S_h\) forward invariant. A control input \(u_k \in K_h(x_k)\) is called \emph{safe}. It is also straightforward to show that control inputs \(K_h\) also render the set \(S\) exponentially stable in \(\mathcal{D}_h\).

Suppose a nominal feedback control law \(u_\text{nom}(\cdot)\) is given for system \eqref{eq:discrete_dynamics} above.
Suppose that several discrete-time CBFs \(\{h_i(x_k)\}_{i=1,\ldots,M}\) are given.
If the condition \eqref{eq:safety_condition} holds for all \(h_i\), a safe control input \(u^*(x_k) \in K_h(x_k)\) that minimally modifies \(u_\text{nom}\) in the sense of the Euclidean norm can be computed via the following (possibly nonconvex) optimization problem:
\begin{subequations}
\label{eq:multi_discrete_CBF_opt}
\begin{align}
    &u^*(x_k) = 
    \underset{u \in \R^m}{\arg\min} \nrm{u - u_\text{nom}(x_k)}_2 \\
    &\hspace{1em} \text{s.t.} \hspace{0.5em} h_i(f(x_k) + g(x_k) u) \geq h_i(x_k) -  \gamma_i h_i(x_k)  \label{eq:nonconvex_constraint} \\
    &\hspace{3em} \forall i \in \{1,\ldots,M\}.
\end{align}
\end{subequations}

Several limitations exist to this notion of discrete-time control barrier function and safety filtering as presented above. First, observe that the constraints in \eqref{eq:nonconvex_constraint} are convex if and only if each \(h_i(\cdot)\) is concave in its argument.\footnote{Recall that the composition of a function concave (convex) in its argument \(h(\cdot)\) with an affine mapping \(u \mapsto f(x_k) + g(x_k)u\) is likewise concave (convex) \cite[Sec. 3.2.2]{boyd2004convex}.}
If any \(h_i(\cdot)\) is nonconcave, then the entire optimization problem \eqref{eq:multi_discrete_CBF_opt} is nonconvex. This can inhibit real-time computation of the control input \(u^*(x_k)\) on control systems. 

In addition, the formulation \eqref{eq:multi_discrete_CBF_opt} is unable to represent more complex semidefinite matrix constraints, including constraints on matrix eigenvalues and safe sets in the form of spectrahedra. The recent work \cite{ong2025matrix} introduces the notion of \emph{matrix control barrier functions} to address this gap.
More specifically, given a matrix valued function \(H : \R^n \to \mathbb{S}^p\), they consider safe sets of the form
\begin{align}
    S = \{x \in \R^n : H(x) \ggeq 0 \}. \label{eq:matrix_set_constraint}
\end{align}
However, \cite{ong2025matrix} considers only continuous-time dynamical systems.
Conditions to render sets of the form \(S\) forward invariant for discrete-time systems have not been considered in the literature.

This paper presents novel methods to address both of these gaps. Our methods are applicable to matrix-valued safe sets of the form \eqref{eq:matrix_set_constraint}, of which the scalar-valued safe sets of the form \eqref{eq:scalar_h_function} are a special case. The problem addressed by this paper is as follows:
\begin{problem}
    Given the discrete-time dynamics \eqref{eq:discrete_dynamics}, derive a method to compute control inputs \(u_k\) rendering sets of the form in \eqref{eq:matrix_set_constraint} forward invariant using solely convex optimization methods.
\end{problem}

\section{Main Results}
\label{sec:main_results}

\subsection{Discrete-Time Exponential Matrix Control Barrier Functions}

We consider safe sets \(S\) of the form \eqref{eq:matrix_set_constraint}. The following result presents a discrete-time condition under which the set \(S\) is rendered forward invariant.
\begin{lemma}
    Consider the dynamics \eqref{eq:discrete_dynamics} and the set \(S\) in \eqref{eq:matrix_set_constraint}. Let \(\mathcal{D} \supset S\) be an open superset of \(S\). The set \(S\) is forward invariant if there exists a constant \(0 < \gamma \leq 1\) such that the following condition holds for all \(x \in \mathcal{D}\):
    \begin{align}
        H(x_{k+1}) - H(x_k) \ggeq -\gamma H(x_k). \label{eq:matrix_safety_long}
    \end{align}
\end{lemma}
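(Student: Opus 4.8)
The plan is to rewrite the matrix inequality \eqref{eq:matrix_safety_long} in the equivalent form \(H(x_{k+1}) \ggeq (1-\gamma)H(x_k)\), obtained by adding \(H(x_k)\) to both sides (the Loewner order is preserved under addition of a fixed symmetric matrix), and then to prove by induction on \(k\) that \(H(x_k) \ggeq 0\) for every \(k \geq 0\) whenever \(x_0 \in S\). The base case is immediate: \(x_0 \in S\) means exactly \(H(x_0) \ggeq 0\) by the definition of \(S\) in \eqref{eq:matrix_set_constraint}.

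For the inductive step, suppose \(x_k \in S\). Since \(\mathcal{D} \supset S\), we have \(x_k \in \mathcal{D}\), so the hypothesis \eqref{eq:matrix_safety_long} applies at \(x_k\) and yields \(H(x_{k+1}) \ggeq (1-\gamma) H(x_k)\). Because \(0 < \gamma \leq 1\), the scalar \(1-\gamma\) lies in \([0,1)\), and scaling a positive semidefinite matrix by a nonnegative scalar leaves it positive semidefinite, so \((1-\gamma)H(x_k) \ggeq 0\). Transitivity of the Loewner order then gives \(H(x_{k+1}) \ggeq 0\), i.e.\ \(x_{k+1} \in S\). By induction, \(x_k \in S\) for all \(k \geq 0\), which is precisely forward invariance of \(S\). (If \eqref{eq:matrix_safety_long} is read with \(x_{k+1} = f(x_k) + g(x_k)u_k\) for any control \(u_k\) chosen so that the inequality holds, the same induction applies verbatim to the resulting closed-loop trajectory.)

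The argument is essentially mechanical; the only point needing care is that one must know \(x_k \in \mathcal{D}\) before invoking \eqref{eq:matrix_safety_long} at step \(k\), which is why the inductive hypothesis must be the stronger statement \(x_k \in S\) (hence \(x_k \in \mathcal{D}\)) rather than merely membership in \(\mathcal{D}\). As an alternative to the Loewner-order manipulation one could argue spectrally: writing \(H(x_{k+1}) = (1-\gamma)H(x_k) + \big(H(x_{k+1}) - (1-\gamma)H(x_k)\big)\) with the second term positive semidefinite, Weyl's inequality (Theorem \ref{thm:weyl}, with \(i=j=1\)) gives \(\lambda_1(H(x_{k+1})) \geq (1-\gamma)\lambda_1(H(x_k)) \geq 0\); but the direct Loewner argument is cleaner and no eigenvalue bound is actually required. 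I anticipate no substantive obstacle here — the subtlety is confined to making the domain bookkeeping explicit.
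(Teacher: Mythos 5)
Your proof is correct and follows essentially the same route as the paper's: rewrite \eqref{eq:matrix_safety_long} as \(H(x_{k+1}) \ggeq (1-\gamma)H(x_k)\) and induct on \(k\), using that \((1-\gamma)H(x_k) \ggeq 0\) and transitivity of the Loewner order (the paper spells this out via the quadratic form \(y^\intercal H(x_{k+1}) y \geq y^\intercal (1-\gamma)H(x_k) y \geq 0\), which is the same step). Your explicit remark about needing \(x_k \in S \subset \mathcal{D}\) before invoking the hypothesis is a minor bookkeeping point the paper leaves implicit, but it does not change the argument.
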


\begin{proof}
    We proceed by induction.
    By the definition of forward invariance, assume that \(x_0 \in S\). This implies that \(H(x_0) \ggeq 0\).
    Observe that \eqref{eq:matrix_safety_long} can be rewritten as
    \begin{align}
        H(x_{k+1}) \ggeq (1-\gamma)H(x_k). \label{eq:matrix_safety_short}
    \end{align}
    Since \(0 < \gamma \leq 1\), the matrix \((1-\gamma)H(x_0)\) is positive semidefinite. Equation \eqref{eq:matrix_safety_short} implies that \(H(x_1)\) is therefore positive semidefinite.
    This can be seen by noting from \eqref{eq:matrix_safety_short} that \(H(x_{1}) - (1-\gamma)H(x_0) \ggeq 0 \implies y^\intercal (H(x_{1}) - (1-\gamma)H(x_0)) y \geq 0\) for all \(y \in \R^p\), implying that \(y^\intercal H(x_1) y \geq y^T (1-\gamma) H(x_0) y \geq 0\) since \((1-\gamma)H(x_0)\) is positive semidefinite. It follows that \(H(x_1) \ggeq 0\).
    Now assume that \(H(x_k)\) is positive semidefinite for any \(k \geq 0\). 
    This implies that \((1-\gamma)H(x_k)\) is also positive semidefinite.
    Similar arguments can be used to conclude that \(H(x_{k+1}) \ggeq 0\), which concludes the proof.
\end{proof}

\begin{define}
    \label{def:DT_MCBF}
    A function \(H : \mathbb{R}^n\to \mathbb{S}^p\) is called an discrete-time exponential matrix control barrier function (DTE-MCBF) for the system \eqref{eq:discrete_dynamics} if there exists a constant \(0 < \gamma \leq 1\) and a domain \(\mathcal{D} \supset S\) such that for all \(x \in \mathcal{D}\) there exists a \(u \in \R^m\) satisfying
    \begin{align}
        H(f(x) + g(x)u) \ggeq (1-\gamma) H(x)
    \end{align}
\end{define}

We define the set of safety-preserving control inputs \(K_H : \R^n \to 2^{\R^m}\) as
\begin{align}
    K_H(x) \triangleq \{u \in \R^m : H(f(x) + g(x)u) \ggeq (1-\gamma) H(x) \}. \label{eq:orig_K_H}
\end{align}

Let the eigenvalues of \(H(x_k)\) be ordered as \(\lambda_1(x_k) \leq \lambda_2(x_k) \leq \cdots \leq \lambda_p(x_k)\).
The following theorem demonstrates that DTE-MCBFs are zeroing CBFs in the sense that all negative eigenvalues converge exponentially to zero under under any control law \(u(x_k) \in K_H(x_k)\).

\begin{theorem}
    \label{thm:zeroing_thm}
    Let \(H\) be a DTE-MCBF for the system \eqref{eq:discrete_dynamics} with constant \(0 < \gamma \leq 1\). For brevity, we denote \(H_k \triangleq H(x_k)\). Let the eigenvalues of \(H_k\) be ordered as \(\lambda_1(H_k) \leq \lambda_2(H_k) \leq \cdots \leq \lambda_p(H_k)\). 
    Suppose that \(u_k \in K_H(x_k)\) for all \(k \geq 0\).
    Then the following holds:
    \begin{align}
        \lambda_1(H_k) \geq (1-\gamma)^k \lambda_1(H_0).
    \end{align}
\end{theorem}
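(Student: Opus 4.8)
The plan is to prove the one-step estimate $\lambda_1(H_{k+1}) \ge (1-\gamma)\,\lambda_1(H_k)$ and then iterate it. Since $u_k \in K_H(x_k)$, the definition of $K_H$ in \eqref{eq:orig_K_H} gives $H_{k+1} = H(f(x_k)+g(x_k)u_k) \ggeq (1-\gamma)H_k$; equivalently, the symmetric matrix $E_k \triangleq H_{k+1} - (1-\gamma)H_k$ is positive semidefinite, so $\lambda_1(E_k) \ge 0$. Writing $H_{k+1} = (1-\gamma)H_k + E_k$ as a sum of two elements of $\mathbb{S}^p$ and applying Weyl's inequality (Theorem \ref{thm:weyl}) with $A = (1-\gamma)H_k$, $B = E_k$, and indices $i = j = 1$ yields $\lambda_1\big((1-\gamma)H_k\big) + \lambda_1(E_k) \le \lambda_1(H_{k+1})$.

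Next I would discharge the scalar bookkeeping: because $0 < \gamma \le 1$ we have $1-\gamma \ge 0$, so multiplication by $1-\gamma$ preserves the ordering of eigenvalues and hence $\lambda_1\big((1-\gamma)H_k\big) = (1-\gamma)\lambda_1(H_k)$. Together with $\lambda_1(E_k) \ge 0$ this gives the desired one-step bound $\lambda_1(H_{k+1}) \ge (1-\gamma)\lambda_1(H_k)$. Finally, induct on $k$: the case $k=0$ is an identity, and if $\lambda_1(H_k) \ge (1-\gamma)^k \lambda_1(H_0)$ then, multiplying through by the nonnegative constant $1-\gamma$ and chaining with the one-step bound, $\lambda_1(H_{k+1}) \ge (1-\gamma)\lambda_1(H_k) \ge (1-\gamma)^{k+1}\lambda_1(H_0)$, which closes the induction.

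The only place a sloppy argument could fail is the sign of $1-\gamma$: its nonnegativity is exactly what makes $\lambda_1$ (and not $\lambda_p$) the relevant extreme eigenvalue of $(1-\gamma)H_k$, and it is what allows multiplying the induction hypothesis by $1-\gamma$ without reversing the inequality. Everything else is routine. (The use of Weyl's inequality can be replaced by the Courant--Fischer characterization $\lambda_1(M) = \min_{\|y\| = 1} y^\intercal M y$: for every unit vector $y$ one has $y^\intercal H_{k+1} y \ge (1-\gamma)\, y^\intercal H_k y \ge (1-\gamma)\lambda_1(H_k)$, and minimizing over $y$ gives the one-step bound directly; I would likely keep the Weyl-based version to match the tools already set up in the paper.)
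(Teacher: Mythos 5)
Your proposal is correct and follows essentially the same route as the paper: decompose \(H_{k+1} = (1-\gamma)H_k + \Delta_\gamma H_k\) with \(\Delta_\gamma H_k \ggeq 0\) from the definition of \(K_H\), apply Weyl's inequality with \(i=j=1\), use \(1-\gamma \geq 0\) to pull out the scalar, and iterate. The remark about the Courant--Fischer alternative is a fine aside, but the argument as written matches the paper's proof step for step.
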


\begin{proof}
    Define \(\Delta_\gamma H_k \triangleq H_{k+1} - (1-\gamma)H_k = H(f(x_k) + g(x_k)u_k) - (1-\gamma)H(x_k)\). By Definition \ref{def:DT_MCBF}, the control input \(u_k \in K_H(x_k)\ \forall k \geq 0\) implies that \(\Delta_\gamma H \ggeq 0\).

    Observe that \(H_{k+1} = (1-\gamma)H_k + \Delta_\gamma H_k\). By Weyl's Inequality (Theorem \ref{thm:weyl}), it follows that
    \begin{align}
        \lambda_1((1-\gamma)H_k) + \lambda_1(\Delta_\gamma H_k) &\leq \lambda_1\pth{(1-\gamma)H_k + \Delta_\gamma H_k}, \nonumber \\
        &= \lambda_1(H_{k+1}). \label{eq:lambda_Hk}
    \end{align}
    However, note that \(\lambda_1(\Delta_\gamma H_k) \geq 0\) since \(\Delta_\gamma H \ggeq 0\) by the definition of \(K_H\) in \eqref{eq:orig_K_H}. This implies that \(\lambda_1((1-\gamma)H_k) \leq \lambda_1((1-\gamma)H_k) + \lambda_1(\Delta_\gamma H_k)\). It follows from \eqref{eq:lambda_Hk} that
    \begin{align}
        \lambda_1(H_{k+1}) &\geq \lambda_1((1-\gamma)H_k) \\
        &= (1-\gamma) \lambda_1(H_k)\ \forall k \geq 0. \label{eq:exponential_scaling}
    \end{align}
    The result follows by repeated application of \eqref{eq:exponential_scaling}:
    \begin{align*}
        \lambda_1(H_k) \geq (1-\gamma) \lambda_1(H_{k-1}) \geq \cdots \geq (1-\gamma)^k \lambda_1(H_0).
    \end{align*}
\end{proof}

In particular, Theorem \ref{thm:zeroing_thm} implies that if \(x_0 \not\in S\), the state \(x_k\) will converge exponentially to the safe set \(S\) under a control law \(u(x_k) \in K_H(x_k)\).

If \(H\) is a DTE-MCBF and a nominal feedback control law \(u_{\rm nom}(x_k)\) is given, a safe control input \(u^*(x_k) \in K_H(x_k)\) can be computed via the following optimization problem:
\begin{subequations}
\label{eq:direct_MCBF_opt}
\begin{align}
        &u^*(x) = \underset{u \in \R^m}{\arg\min} \nrm{u - u_{\rm nom}(x)}_2 \\
        &\hspace{1em} \text{s.t.} \hspace{1em} H(f(x) + g(x)u) \ggeq (1-\gamma) H(x) \label{eq:MCBF_constraint}
\end{align}
\end{subequations}

However, this optimization problem may not be convex in general due to the constraint \eqref{eq:MCBF_constraint}. In the next section we address this issue.

\subsection{Computing Safe Inputs Using Convex Projection}

We begin by considering the following two classes of safe set functions \(H\). The definition of matrix concavity is given in the Appendix (Section \ref{sec:Appendix}).
\begin{itemize}
    \item Safe sets that are matrix concave in their argument
    \item Safe sets that are not matrix concave in their argument.
\end{itemize}
For safe sets that are matrix concave in their argument, observe that the constraint \eqref{eq:MCBF_constraint} is matrix convex in the optimization variable \(u\). This follows because the composition of a convex function with an affine function is also convex. The optimization problem \eqref{eq:direct_MCBF_opt} is therefore convex and can be solved with standard conic optimization solvers.

We turn our attention to the more difficult case of safe sets \(H\) that are \emph{not} matrix concave in their arguments. Recall that \(H\) defines the safe set \(S\) via its superlevel sets. The following assumption is made on the unsafe set \(\overline{S}\):
\begin{assume}
    \label{assume:unsafe_set}
    Given a safe set \(S\) defined by the sublevel sets of \(H\),
    the closure of the unsafe set \(\text{cl}(\overline{S}) = \text{cl}(\R^n \backslash S) \) is a subset of a finite number of closed, convex sets. More precisely, \(\text{cl}(\overline{S}) \subset \bigcup_{i=1}^Q \overline{S}_i\). In addition, for each \(\overline{S}_i\) there exists a matrix convex function \(\overline{H}_i : \R^n \to \mathbb{S}\) such that \(\overline{S}_i = \{x \in \R^n : \overline{H}_i \gleq 0 \}\).
\end{assume}
In the trivial case any nonconvex unsafe set \(\overline{S}\) is a subset of its convex hull; i.e., \(\text{cl}(\overline{S}) \subset \text{co}(\overline{S})\). As this can be overly conservative, however, Assumption \ref{assume:unsafe_set} also includes cases where the unsafe set can be decomposed into a finite number of closed, convex sets.

Since the set \(S\) is nonconvex and therefore the function \(H\) is not matrix concave, the optimization constraint \eqref{eq:MCBF_constraint} is nonconvex. However, we can instead iteratively consider a \emph{convex subset} of \(S\) based on the current state \(x_k\).
Our proposed method is as follows: At each timestep \(k\) such that \(x_k \not\in \overline{S}_i\), observe that the sets \(\{x_k\}\) and \(\overline{S}_i\) are both nonempty, disjoint, closed convex sets, with the set \(\{x_k\}\) being trivially compact. By the separating hyperplane theorem \cite[Sec. 2.5.1]{boyd2004convex}, there exists a hyperplane defined by \(a_i : \R^n \to \R^n, b_i : \R^n \to \R\) such that \(\pth{a_i(x_k)}^\intercal x_k - b_i(x_k) > 0\) and \(\pth{a_i(x_k)}^\intercal z < b_i(x_k)\) for all \(z \in \overline{S}_i\).

The form of \(a_i, b_i\) is determined by the vector \(x_k - P_{\overline{S}_i}{x_k}\), where \(P_{\overline{S}_i}{x_k}\) is the projection of \(x_k\) onto the set \(\overline{S}_i\).
This projected point can be computed via the following convex optimization problem:
\begin{subequations}
\label{eq:convex_projection}
\begin{alignat}{2}
P_{\overline{S}_i}(x_k) = \arg\min_{z \in \R^n} &&& \nrm{z - x_k} \\
\text{s.t.} &&\hspace{1em}& \overline{H}_i(z) \gleq 0.
\end{alignat}
\end{subequations}
Once this projected point has been computed, the supporting hyperplane parameters \(a_i(x_k), b_i(x_k)\) can be explicitly defined as
\begin{subequations}
\label{eq:a_i_b_i}
\begin{align}
    a_i(x_k) &\triangleq \frac{x_k - P_{\overline{S}_i}(x_k)}{\nrm{x_k - P_{\overline{S}_i}(x_k)}_2}, \\
    b_i^\epsilon (x_k) &\triangleq \epsilon +  \pth{\frac{x_k - P_{\overline{S}_i}(x_k)}{\nrm{x_k - P_{\overline{S}_i}(x_k)}_2}}^\intercal P_{\overline{S}_i}\pth{x_k},
\end{align}
\end{subequations}
for some \(\epsilon > 0\).

\begin{lemma}
    \label{lem:unsafe_inequality}
    Let \(a_i, b_i^\epsilon\) be defined as in \eqref{eq:a_i_b_i} for some \(\epsilon > 0\). Suppose that \(x_k \notin \overline{S}_i\). Then the following holds: 
    \begin{align}
        a_i(x_k)^\intercal z < b_i(x_k)\ \forall z \in \overline{S}_i.
    \end{align}
\end{lemma}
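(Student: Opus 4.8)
The plan is to apply the classical characterization of the projection onto a closed convex set in terms of the normal cone, which is exactly Theorem \ref{thm:Normal_Projection_Cones} combined with Definition \ref{def:normal_cone}. Write $p \triangleq P_{\overline{S}_i}(x_k)$ and $d \triangleq x_k - p$. Since $x_k \notin \overline{S}_i$ and $\overline{S}_i$ is closed, we have $d \neq 0$, so $a_i(x_k) = d/\nrm{d}_2$ is well-defined and the normalization in \eqref{eq:a_i_b_i} causes no division by zero.

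First I would establish that $d \in N_{\overline{S}_i}(p)$. From Theorem \ref{thm:Normal_Projection_Cones} we have $N_C = P_C^{-1} - I$, so $x_k \in P_{\overline{S}_i}^{-1}(p)$ gives $x_k - p \in N_{\overline{S}_i}(p)$; equivalently one can invoke the standard variational inequality $\langle x_k - p, z - p\rangle \le 0$ for all $z \in \overline{S}_i$, which is precisely the statement that $d$ lies in the normal cone at $p$. Then for any $z \in \overline{S}_i$, by Definition \ref{def:normal_cone}, $d^\intercal(z - p) \le 0$, i.e. $d^\intercal z \le d^\intercal p$. Dividing by $\nrm{d}_2 > 0$ yields $a_i(x_k)^\intercal z \le a_i(x_k)^\intercal p$.

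Next I would close the gap to a strict inequality using the $\epsilon$ in the definition of $b_i^\epsilon$. By \eqref{eq:a_i_b_i}, $b_i^\epsilon(x_k) = \epsilon + a_i(x_k)^\intercal p$, so $a_i(x_k)^\intercal z \le a_i(x_k)^\intercal p < \epsilon + a_i(x_k)^\intercal p = b_i^\epsilon(x_k)$ for every $z \in \overline{S}_i$, which is the claim. (I would note a minor notational caveat: the statement writes $b_i$ where the construction defines $b_i^\epsilon$; I would just use $b_i^\epsilon$ throughout, matching \eqref{eq:a_i_b_i}.)

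There is no real obstacle here — the whole argument is the elementary projection/normal-cone inequality plus a one-line $\epsilon$-shift. The only thing to be careful about is the well-definedness of $a_i(x_k)$, which requires $x_k \neq P_{\overline{S}_i}(x_k)$; this is guaranteed precisely by the hypothesis $x_k \notin \overline{S}_i$ together with closedness of $\overline{S}_i$ (so the projection, which always lies in $\overline{S}_i$, cannot equal $x_k$). I would state that explicitly at the start of the proof so the normalization is justified, and also implicitly rely on Assumption \ref{assume:unsafe_set} guaranteeing $\overline{S}_i$ is closed and convex so that $P_{\overline{S}_i}$ is single-valued and Theorem \ref{thm:Normal_Projection_Cones} applies.
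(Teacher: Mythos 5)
Your proposal is correct and follows essentially the same route as the paper: both use Theorem \ref{thm:Normal_Projection_Cones} to place \(x_k - P_{\overline{S}_i}(x_k)\) in the normal cone \(N_{\overline{S}_i}(P_{\overline{S}_i}(x_k))\), invoke Definition \ref{def:normal_cone} to get \(\pth{x_k - P_{\overline{S}_i}(x_k)}^\intercal\pth{z - P_{\overline{S}_i}(x_k)} \leq 0\), and then use the \(\epsilon\)-shift in \(b_i^\epsilon\) to obtain the strict inequality. Your added remarks on well-definedness of the normalization and the \(b_i\) versus \(b_i^\epsilon\) notation are minor clarifications, not a different argument.
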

\begin{proof}
    We first demonstrate that the vector \(x_k - P_{\overline{S}_i}(x_k)\) is in the normal cone \(N_{\overline{S}_i}(P_{\overline{S}_i}(x_k))\). Using Theorem \ref{thm:Normal_Projection_Cones}, observe that
    \begin{align}
        x_k - P_{\overline{S}_i}(x_k) &\in P_{\overline{S}_i}^{-1}\pth{P_{\overline{S}_i}(x_k)}- I(P_{\overline{S}_i}(x_k)) \\
        &= N_{\overline{S}_i}\pth{P_{\overline{S}_i}(x_k)}
    \end{align}
    Since \(\nrm{x_k - P_{\overline{S}_i}(x_k)}_2 > 0\) it follows that \( \frac{x_k - P_{\overline{S}_i}(x_k)}{\nrm{x_k - P_{\overline{S}_i}(x_k)}_2} \in N_{\overline{S}_i}(P_{\overline{S}_i}(x_k))\)
    Next, by Definition \ref{def:normal_cone} it holds that \(\pth{ \frac{x_k - P_{\overline{S}_i}(x_k)}{\nrm{x_k - P_{\overline{S}_i}(x_k)}_2}}^\intercal \pth{z - P_{\overline{S}_i}(x_k)} \leq 0 < \epsilon\ \forall z \in \overline{S}_i\).
    Rearranging yields
    \begin{align*}
        &\pth{ \frac{x_k - P_{\overline{S}_i}(x_k)}{\nrm{x_k - P_{\overline{S}_i}(x_k)}_2}}^\intercal z <  \\
        &\hspace{4.5em} \epsilon - \pth{ \frac{x_k - P_{\overline{S}_i}(x_k)}{\nrm{x_k - P_{\overline{S}_i}(x_k)}_2}}^\intercal P_{\overline{S}_i}(x_k)\ \forall z \in \overline{S}_i
    \end{align*}
    which by \eqref{eq:a_i_b_i} is equivalent to \(a_i(x_k)^\intercal z < b_i^\epsilon(x_k)\) for all \(z \in \overline{S}_i\).
\end{proof}
    Lemma \ref{lem:unsafe_inequality} implies that \(\overline{S}_i\) is a subset of the interior of the halfspace defined by \(\{y \in \R^n : a_i(x_k)^\intercal y < b_i^\epsilon (x_k)\}\), \(y \in \R^m\) for all values of \(\epsilon > 0\).
    On the other hand, by definition observe that the condition \(a_i(x_k)^\intercal z - b_i^\epsilon \geq 0\) is equivalent to
    \begin{align}
        \pth{\frac{x - P_{\overline{S}_i}(x_k)}{\nrm{x - P_{\overline{S}_i}(x_k)}}}^\intercal \pth{z - P_{\overline{S}_i}(x_k)} \geq \epsilon.
    \end{align}
    When \(x_k \not\in \overline{S}_i\), the parameter \(\epsilon\) ensures that the equations \eqref{eq:a_i_b_i} are well-defined by controlling the distance between the hyperplane \(a_i(x_k)^\intercal y = b_i^\epsilon(x_k)\) and the boundary of the set \(\overline{S}_i\). 
    %
    A positive \(\epsilon\) adds a buffer between the boundary and the hyperplane;
    thus, the parameter \(\epsilon\) is called the \textit{buffer distance}.
    %


Now consider the following revised safe set functions:
\begin{align}
    h_i^\epsilon(z, x_k) &\triangleq a_i(x_k)^\intercal z - b_i^\epsilon(x_k),\ \forall i=1,\ldots,Q, \label{eq:h_prime} \\
    H^\epsilon(z,x_k) &\triangleq \bmx{
    h_1^\epsilon(z,x_k) & & 0 \\ & \ddots & \\ 0 & & h_Q^\epsilon(z,x_k)
    } \label{eq:H_prime_set}
\end{align}
Each \(h_i^\epsilon\) is linear in \(z\), which implies \(H^\epsilon\) is therefore matrix concave in \(z\). In particular, the constraint \(H^\epsilon(z,x_k) \ggeq 0\) enforces  the inequality \(\min_i\pth{\brc{h_i^\epsilon(z, x_k)}_{i=1}^Q} \geq 0\). 

\begin{lemma}
    \label{lem:convex_subset}
    Let \(x_k \not\in \overline{S}_i\) for all \(i=1,\ldots,Q\). 
    Then the set \(\{z \in \R^n : H^\epsilon(z, x_k) \ggeq 0 \}\) is disjoint from \(\overline{S}_i\) for all \(i=1,\ldots,Q\).
\end{lemma}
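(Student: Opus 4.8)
The plan is to reduce the Loewner inequality $H^\epsilon(z,x_k) \ggeq 0$ to a finite list of scalar inequalities and then apply Lemma \ref{lem:unsafe_inequality} coordinate by coordinate. First I would use the fact that $H^\epsilon(z,x_k)$ is diagonal by construction \eqref{eq:H_prime_set}: a diagonal symmetric matrix is positive semidefinite if and only if each of its diagonal entries is nonnegative. Hence $H^\epsilon(z,x_k) \ggeq 0$ holds exactly when $h_i^\epsilon(z,x_k) = a_i(x_k)^\intercal z - b_i^\epsilon(x_k) \geq 0$ for every $i = 1,\ldots,Q$, so that $\{z \in \R^n : H^\epsilon(z,x_k) \ggeq 0\} = \bigcap_{i=1}^{Q} \{z \in \R^n : a_i(x_k)^\intercal z \geq b_i^\epsilon(x_k)\}$. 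This is the same characterization already noted after \eqref{eq:H_prime_set}, namely that the constraint enforces $\min_i h_i^\epsilon(z,x_k) \geq 0$.

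Next I would fix an arbitrary index $i \in \{1,\ldots,Q\}$. Since the hypothesis of the lemma gives $x_k \notin \overline{S}_i$, the hypotheses of Lemma \ref{lem:unsafe_inequality} are satisfied, and that lemma yields $a_i(x_k)^\intercal z < b_i^\epsilon(x_k)$ for every $z \in \overline{S}_i$, i.e. $h_i^\epsilon(z,x_k) < 0$ on all of $\overline{S}_i$. Consequently, no point of $\overline{S}_i$ can satisfy the $i$-th of the scalar inequalities that define $\{z : H^\epsilon(z,x_k) \ggeq 0\}$, and therefore no point of $\overline{S}_i$ lies in the intersection of all $Q$ of them. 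This establishes $\{z \in \R^n : H^\epsilon(z,x_k) \ggeq 0\} \cap \overline{S}_i = \emptyset$, and since $i$ was arbitrary the conclusion follows for all $i = 1,\ldots,Q$.

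I do not expect a genuine obstacle here; the statement is essentially a repackaging of Lemma \ref{lem:unsafe_inequality} across the $Q$ diagonal blocks. The only points requiring a little care are (i) stating cleanly the equivalence between the Loewner condition on the diagonal matrix $H^\epsilon$ and the coordinatewise scalar conditions, and (ii) noting that Lemma \ref{lem:unsafe_inequality} delivers a \emph{strict} inequality $h_i^\epsilon < 0$ on $\overline{S}_i$ — it is precisely this strictness, together with the requirement $h_i^\epsilon \geq 0$ in $\{z : H^\epsilon \ggeq 0\}$, that forces disjointness. The assumption $x_k \notin \overline{S}_i$ for every $i$ is exactly what guarantees the projections $P_{\overline{S}_i}(x_k)$ and the normalizations in \eqref{eq:a_i_b_i} are well-defined, so the per-index argument applies uniformly.
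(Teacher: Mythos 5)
Your proof is correct and follows the same route as the paper, which simply cites Lemma \ref{lem:unsafe_inequality} together with \eqref{eq:h_prime}--\eqref{eq:H_prime_set}; you have merely spelled out the implicit step that the diagonal structure of \(H^\epsilon\) makes the Loewner condition equivalent to the coordinatewise inequalities \(h_i^\epsilon(z,x_k)\geq 0\), after which each \(\overline{S}_i\) is excluded by the strict inequality from Lemma \ref{lem:unsafe_inequality}.
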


\begin{proof}
    The result follows from Lemma \ref{lem:unsafe_inequality} and equations \eqref{eq:h_prime}-\eqref{eq:H_prime_set}.
\end{proof}
By Lemma \ref{lem:convex_subset}, the set \(\{z \in \R^n : H^\epsilon(z, x_k) \ggeq 0 \}\) is a convex subset of the safe set \(S\).
However, observe that each \(h_i^\epsilon\) and \(H^\epsilon\) is a two-argument function where the second argument determines the subset of the safe set being considered and the first argument is the point whose safety is being determined.
This two-argument form motivates a new class of safe set functions and control barrier functions which are outlined in the next two definitions.

\begin{define}
    \label{def:safe_subset_function}
    Let \(S \subset \R^n\).
    A function \(H : \R^n \times \R^n \to \mathbb{S}^p\) is called a \emph{weak safe subset function} for \(S\) if 
    there exists a domain \(\mathcal{D} \supset S\) such that 
    the following holds for all \(x \in \mathcal{D}\):
    \begin{align}
        \{z \in \R^n : H(z,x) \ggeq 0 \} \subseteq S.
    \end{align}
    A function \(H\) is called a \emph{safe subset function} if it is a weak safe subset function and \(H(x,x) \ggeq 0\) for all \(x \in S\).
\end{define}

\begin{define}
    \label{def:SDTE-MCBF}
    A safe subset function \(H : \R^n \times \R^n \to \mathbb{S}^p\) is called a 
    Subset-based Discrete-Time Exponential Matrix Control Barrier Function (SDTE-MCBF)
    for the system \eqref{eq:discrete_dynamics} if there exists a constant \(0 < \gamma \leq 1\) and a domain \(\mathcal{D} \supset S\) such that 
    for all \(x \in \mathcal{D}\) there exists a control input \(u \in \R^m \) satisfying
    \begin{align}
        H(f(x) + g(x) u, x) \ggeq (1-\gamma)H(x,x).
    \end{align}
\end{define}
When the safe subset described by \(H\) is obtained through projection-based methods (as described previously in this section), we will refer to \(H\) as a \emph{Projection-based Discrete-Time Exponential Matrix Control Barrier Function} (PDTE-MCBF). A PDTE-MCBF is a special case of SDTE-CBFs since it is possible that a safe subset could be obtained by means other than projections. However, we leave further investigation of this distinction for future work.


Given a safe subset function \(H\), consider the following set of control inputs:
{\small
\begin{align}
    &K_{H}(x_k) \triangleq \\
    &\hspace{2em}\{u \in \R^m : H(x_{k+1}, x_k) \ggeq (1-\gamma) H(x_k, x_k)\} \nonumber \\
    &= \{u \in \R^m : H(f(x_k) + g(x_k) u, x_k) \ggeq (1-\gamma) H^\epsilon(x_k, x_k) \}
\end{align}
}
The following theorem demonstrates that \(K_{H}(x_k)\) represents the set of control inputs \(u\) that render the set \(S\) forward invariant.

\begin{theorem}
    \label{thm:forward_invariant}
    Let \(H\) be an SDTE-MCBF on a domain \(\mathcal{D} \supset S\) for the system \eqref{eq:discrete_dynamics}.
    Then any control law \(u(x_k)\) such that \(u(x_k) \in K_{H}(x_k)\) for all \(k \geq 0\) renders the safe set \(S\) forward invariant.
\end{theorem}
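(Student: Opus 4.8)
The plan is to proceed by induction on $k$, in the same spirit as the proof of the opening Lemma of this section on forward invariance of matrix-valued safe sets. Forward invariance presupposes that $x_0 \in S$, which settles the base case.

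For the inductive step, assume $x_k \in S$. Since $S \subset \mathcal{D}$ we have $x_k \in \mathcal{D}$, so the properties of the safe subset function $H$ from Definition \ref{def:safe_subset_function} are available at $x_k$. Because $H$ is a genuine safe subset function (not merely a weak one), $x_k \in S$ implies $H(x_k, x_k) \ggeq 0$, and since $0 < \gamma \leq 1$ the matrix $(1-\gamma) H(x_k, x_k)$ is positive semidefinite. The hypothesis $u(x_k) \in K_H(x_k)$ gives
\begin{align*}
    H(x_{k+1}, x_k) \ggeq (1-\gamma) H(x_k, x_k) \ggeq 0,
\end{align*}
and transitivity of the Loewner order — the $y^\intercal(\cdot)y \geq 0$ argument already used earlier in the paper — yields $H(x_{k+1}, x_k) \ggeq 0$. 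Hence $x_{k+1} \in \{z \in \R^n : H(z, x_k) \ggeq 0\}$, and the defining inclusion $\{z \in \R^n : H(z,x_k) \ggeq 0\} \subseteq S$ of a (weak) safe subset function forces $x_{k+1} \in S$. This closes the induction and establishes forward invariance.

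I do not expect a genuine obstacle here. The two places that need a little care are: (i) verifying that the iterates never leave the domain $\mathcal{D}$ on which $H$ is a safe subset function — this is automatic, since each $x_k$ is shown to lie in $S$ and $S \subset \mathcal{D}$; and (ii) using the strictly stronger condition $H(x,x)\ggeq 0$ for $x\in S$ that separates a safe subset function from a merely weak one, which is exactly what makes the right-hand side $(1-\gamma)H(x_k,x_k)$ positive semidefinite and hence guarantees that $x_{k+1}$ lands inside the current convex subset of $S$ rather than in some unrelated region. The writeup should therefore be short.
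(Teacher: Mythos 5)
Your proposal is correct and follows essentially the same route as the paper's own proof: induction on $k$, using $H(x_k,x_k)\ggeq 0$ from the safe subset property, the constraint defining $K_H(x_k)$, and the inclusion $\{z \in \R^n : H(z,x_k)\ggeq 0\}\subseteq S$ to conclude $x_{k+1}\in S$. If anything, your write-up is slightly more explicit than the paper's (which compresses the final step into ``it follows that $x_1\in S$''), so no changes are needed.
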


\begin{proof}
    Suppose \(x_0 \in S\).
    The fact that \(H\) is a SDTE-MCBF implies that it is a safe subset function by Definition \ref{def:SDTE-MCBF}, which implies \(H(x_0, x_0) \ggeq 0\) by Definition \ref{def:safe_subset_function}. The fact that \(u(x_0) \in K_{H}(x_0)\) implies that \(H(f(x_0) + g(x_0)u(x_0) \ggeq (1-\gamma) H(x_0,x_0) \ggeq 0 \). It follows that \(x_{1} \in S\).
    Proceeding inductively, assume that \(x_k \in S\). By similar arguments it follows that \(x_{k+1} \in S\) and \(H(f(x_k) + g(x_k)u(x_k),x_k) \ggeq (1-\gamma)H(x_k,x_k) \ggeq 0\), which concludes the proof.
\end{proof}

Given these definitions, we now formally prove that the function \(H^\epsilon\) can be classified as a safe subset function.

\begin{lemma}
    Let \(S\) be nonempty, and suppose \(\overline{S}\) satisfies Assumption \ref{assume:unsafe_set}. Let \(H^\epsilon\) be defined as in \eqref{eq:H_prime_set}.
    Choose \(\mathcal{D} = S\) and define \(S^\epsilon = \{z \in S : \textup{dist}(x,\overline{S}) \geq \epsilon \} \subset \mathcal{D}\), where the distance is defined in terms of the Euclidean norm.
    Then \(H^\epsilon\) is a safe subset function for \(S^\epsilon\).
\end{lemma}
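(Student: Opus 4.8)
The plan is to verify the two clauses of Definition~\ref{def:safe_subset_function} separately, reusing the normal-cone facts set up for Lemma~\ref{lem:unsafe_inequality}. Since $H^\epsilon(z,x)$ in \eqref{eq:H_prime_set} is diagonal, $H^\epsilon(z,x)\ggeq 0$ holds if and only if $h_i^\epsilon(z,x)\ge 0$ for every $i=1,\dots,Q$, which by \eqref{eq:h_prime}--\eqref{eq:a_i_b_i} is equivalent to $a_i(x)^\intercal\big(z - P_{\overline{S}_i}(x)\big)\ge\epsilon$ for every $i$, each $a_i(x)$ being a unit vector. Implicit throughout is the well-definedness requirement $x\notin\overline{S}_i$ for all $i$, so that the projections in \eqref{eq:a_i_b_i}, and hence $H^\epsilon(\cdot,x)$, make sense; on the ``tight-cover'' reading of Assumption~\ref{assume:unsafe_set} in which each $\overline{S}_i\subseteq\text{cl}(\overline{S})$ this is automatic for $x\in S$, and in general it is what delimits the effective domain.

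\emph{Weak-safe-subset inclusion.} I would fix $x\in\mathcal{D}=S$ and any $z$ with $H^\epsilon(z,x)\ggeq 0$, and show $z\in S^\epsilon$. As in the proof of Lemma~\ref{lem:unsafe_inequality} (via Theorem~\ref{thm:Normal_Projection_Cones}), $a_i(x)\in N_{\overline{S}_i}\big(P_{\overline{S}_i}(x)\big)$ with $\nrm{a_i(x)}_2=1$. Hence for any $w\in\overline{S}_i$, Definition~\ref{def:normal_cone} gives $a_i(x)^\intercal\big(P_{\overline{S}_i}(x)-w\big)\ge 0$, so $a_i(x)^\intercal(z-w)=a_i(x)^\intercal\big(z-P_{\overline{S}_i}(x)\big)+a_i(x)^\intercal\big(P_{\overline{S}_i}(x)-w\big)\ge\epsilon$, and Cauchy--Schwarz yields $\nrm{z-w}_2\ge a_i(x)^\intercal(z-w)\ge\epsilon$. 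Since this holds for all $w\in\overline{S}_i$ and all $i$, we obtain $\mathrm{dist}\big(z,\bigcup_{i=1}^{Q}\overline{S}_i\big)\ge\epsilon$, and Assumption~\ref{assume:unsafe_set} ($\text{cl}(\overline{S})\subseteq\bigcup_i\overline{S}_i$) then gives $\mathrm{dist}(z,\overline{S})\ge\epsilon$; in particular $z\notin\overline{S}$, i.e.\ $z\in S$, hence $z\in S^\epsilon$. This establishes $\{z:H^\epsilon(z,x)\ggeq 0\}\subseteq S^\epsilon$ for every $x\in\mathcal{D}$.

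\emph{Diagonal condition, and the main obstacle.} It remains to check $H^\epsilon(x,x)\ggeq 0$ for every $x\in S^\epsilon$. By the componentwise reduction this is $a_i(x)^\intercal\big(x-P_{\overline{S}_i}(x)\big)\ge\epsilon$ for each $i$, and since $a_i(x)=\big(x-P_{\overline{S}_i}(x)\big)/\nrm{x-P_{\overline{S}_i}(x)}_2$ the left side equals $\nrm{x-P_{\overline{S}_i}(x)}_2=\mathrm{dist}(x,\overline{S}_i)$. So the claim reduces to $\mathrm{dist}(x,\overline{S}_i)\ge\epsilon$ for all $i$ whenever $x\in S^\epsilon$. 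I expect this to be the delicate point: $S^\epsilon$ is defined through the distance to $\overline{S}$, whereas what is actually needed is the distance to each convex over-approximation $\overline{S}_i$ (this is also precisely where well-definedness of $H^\epsilon(\cdot,x)$ can fail). When the cover is tight ($\overline{S}_i\subseteq\text{cl}(\overline{S})$) it is immediate, since then $\mathrm{dist}(x,\overline{S}_i)\ge\mathrm{dist}(x,\text{cl}(\overline{S}))=\mathrm{dist}(x,\overline{S})\ge\epsilon$; for a loose cover (e.g.\ $\overline{S}_1=\text{co}(\overline{S})$) one must instead read $S^\epsilon$ as the set of points $\epsilon$-far from $\bigcup_i\overline{S}_i$, or fold the $\overline{S}_i$ into the problem data so that $\bigcup_i\overline{S}_i$ is itself the declared unsafe set, after which the bookkeeping closes. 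Combining this with the inclusion above, $H^\epsilon$ is a safe subset function for $S^\epsilon$ in the sense of Definition~\ref{def:safe_subset_function}.
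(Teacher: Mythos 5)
Your proposal is correct and follows essentially the same route as the paper: reduce \(H^\epsilon(z,x)\ggeq 0\) to the componentwise inequalities \(a_i(x)^\intercal\pth{z-P_{\overline{S}_i}(x)}\geq\epsilon\), use the normal-cone fact from Lemma \ref{lem:unsafe_inequality} to get \(\text{dist}(z,\overline{S}_i)\geq\epsilon\) for each \(i\) (your Cauchy--Schwarz step is just the spelled-out version of what the paper dismisses as ``properties of convex sets''), and then check the diagonal condition \(H^\epsilon(x,x)\ggeq 0\) on \(S^\epsilon\). The one place you diverge is the diagonal step, and your caution there is warranted rather than a defect: the paper's proof simply asserts \(\nrm{x-P_{\overline{S}_i}(x)}_2\geq\epsilon\) for \(x\in S^\epsilon\), which, exactly as you say, only follows from \(\text{dist}(x,\overline{S})\geq\epsilon\) when each \(\overline{S}_i\subseteq\text{cl}(\overline{S})\); for a loose cover (e.g.\ \(\overline{S}_1=\text{co}(\overline{S})\) with \(\overline{S}\) an annulus and \(x\) in the hole) one can have \(x\in S^\epsilon\cap\overline{S}_i\), in which case the claimed inequality fails and \(a_i(x)\) is not even well defined. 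So your tight-cover caveat, or equivalently reading \(S^\epsilon\) as the set of points \(\epsilon\)-far from \(\bigcup_i\overline{S}_i\), is the hypothesis the paper implicitly uses; with that reading your two parts close the proof exactly as the paper intends.
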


\begin{proof}
    By the properties of convex sets and Lemma \ref{lem:unsafe_inequality}, \(a_i(x)^\intercal z - b_i^\epsilon(x) \geq 0 \implies \pth{\frac{x - P_{\overline{S}_i}(x)}{\nrm{x - P_{\overline{S}_i}(x)}}}^\intercal \pth{z - P_{\overline{S}_i}(x)} \geq \epsilon\) implies that \(\text{dist}(z,\overline{S}_i) \geq \epsilon\) for all \(i=1,\ldots,Q\). By equation \eqref{eq:H_prime_set}, it follows that all points in the set \(\{z \in \R^n : H^\epsilon(z,x) \ggeq 0\}\) satisfy \(\text{dist}(z,\overline{S}) \geq \epsilon\), and therefore \(\{z \in \R^n : H^\epsilon(z,x) \ggeq 0\} \subset S^\epsilon\). The function \(H^\epsilon\) is therefore a weak safe set function.
    
    Next, choose any \(x \in S^\epsilon\). It holds that \(\nrm{x - P_{\overline{S}_i}(x)}_2 \geq \epsilon\), which implies that \(a_i(x)^\intercal x - b_i^\epsilon(x) \geq 0\) for all \(i=1,\ldots,Q\). It follows that \(H^\epsilon(x,x) \ggeq 0\) for all \(x \in S^\epsilon\).
\end{proof}

Since the definition of SDTE-MCBF relies upon the specific dynamics under consideration, proofs that \(H^\epsilon\) form a SDTE-MCBF will require case-by-case analysis.
Note that any \(H^\epsilon\) as defined in \eqref{eq:H_prime_set} that qualifies as a SDTE-MCBF is also a PDTE-MCBF since the subset of the safe set is obtained via projections.
Under the assumption that \(H^\epsilon\) is an SDTE-MCBF (or PDTE-MCBF) and given a nominal control input \(u_\text{nom}(x_k)\) and the projected points \(\{ P_{\overline{S}_i}\pth{x_k} \}_{i=1}^Q\), it is possible to compute a safety-preserving control input \(u^*(x_k) \in K_H^\epsilon(x_k)\) using the following convex QP:
\begin{subequations}
\label{eq:safety_preserving_u}
\begin{alignat}{2}
    &u^*(x_k) = \underset{u}{\arg\min} \nrm{u - u_\text{nom}} \\
     & \text{s.t. }  H^\epsilon(f(x_k) + g(x_k)u, x_k) \ggeq (1-\gamma) H^\epsilon(x_k,x_k)
\end{alignat}
\end{subequations}


\begin{algorithm}
\caption{Safe Input Computation for PDTE-MCBF}
\label{alg:main_algorithm}
    \textbf{Inputs:} State \(x_k\), nominal control input \(u_\text{nom}(x_k)\), parameter \(\epsilon > 0\)\;
    \textbf{Output:} Safety-preserving control input \(u^*(x_k)\)\;
        \For{\(i=1,\ldots,Q\)}{
            \(P_{\overline{S}_i} \gets  \arg\min_{z \in \R^n} \nrm{z - x_k} \text{s.t. } \overline{H}_i(x) \gleq 0 \)\;
            \hspace{1em}(See Eqs. \eqref{eq:convex_projection} or \eqref{eq:proj_mat_constraint})
        }
        \(u^*(x_k) \gets \arg\min_{u \in \R^m} \nrm{u - u_\text{nom}}  \text{ s.t. } H^\epsilon(f(x_k)+g(x_k)u, x_k) \ggeq 0 \)\;
        \hspace{1em}(See Eqs. \eqref{eq:h_prime}, \eqref{eq:H_prime_set}, \eqref{eq:safety_preserving_u})\;
    \textbf{return} \(u^*(x_k)\)
\end{algorithm}

Algorithm \ref{alg:main_algorithm} summarizes the process of computing safe control inputs \(u^*(x_k)\) for the form of PDTE-MCBFs defined in this paper.
For simplicity, algorithm \ref{alg:main_algorithm} shows each projection point \(P_{\overline{S}_i}(x_k)\) being computed by a separate optimization problem instance. However, we note that all projection points \(\vec{P} \triangleq \bmx{P_{\overline{S}_1} & \cdots & P_{\overline{S}_Q}}^\intercal \) can be computed simultaneously in a single optimization instance as follows:
\begin{subequations}
\label{eq:proj_mat_constraint}
\begin{align}
    \vec{P} = \underset{\vec{z} \triangleq \bmxs{z_1,\cdots,z_Q}^\intercal}{\arg\min} &&& \sum_{i=1}^Q \nrm{z_i - x_k} \\
    \text{s.t.} &&& \overline{H}_i(z_i) \gleq 0\ \forall i=1,\ldots,Q. 
\end{align}
\end{subequations}

We point out that our proposed method only requires solving two sequential convex optimization problems per time step \(k\), the second of which is a QP. 
In special cases where \(\overline{S}_i\) takes a specific form such as a circle or ellipse, it is possible to obtain the projection point directly in closed-form.
Compared to prior nonconvex formulations for discrete-time CBFs, our convex optimization method has polynomial-time computational complexity and guarantees convergence to globally optimal solutions.


\subsection{Zeroing Properties of the Projection Method}

The original notion of discrete-time exponential control barrier functions renders safe sets exponentially stable in the state space \cite{agrawal2017discrete}. In other words, if the initial state \(x_0\) is in the unsafe set \(\overline{S}\), a safe control input derived from a discrete-time exponential CBF causes \(x_k\) to converge exponentially to the safe set \(S\). 
The convex projection method under Assumption \ref{assume:unsafe_set} in general operates only when the state \(x_k\) does not coincide with any of the projections \(P_{\overline{S}_i}(x_k)\) onto the unsafe sets \(\overline{S}_i\), \(i=1,\ldots,Q\). If \(x_k \in \overline{S}_i\), then \(P_{\overline{S}_i}(x_k) = x_k\) and the prior formulas for \(a_i, b_i\) result in ill-defined constraints.
Deriving methods to ensure \(x_k \in \overline{S}\) converges exponentially to \(S\) is more difficult in general for the projection method proposed previously; however  
this section presents methods to render \(S\) asymptotically stable when \(x_k \in \overline{S}\).

If \(x_k \in \partial \overline{S}_i\), we may use the normal cone \(N_{\overline{S}_i}(x_k)\) to compute a safe control input. Choose any \(x_N(x_k) \in N_{\overline{S}_i}(x_k)\) such that \(x_N(x_k) \neq x_k\) and \( x_N(x_k) - x_k \in N_{\overline{S}_i}(x_k)\) . Define the 
functions \(\widehat{a}_i^{\partial} : \R^n \to \R^n, \widehat{b}_i^{\partial,\epsilon} : \R^n \to \R\) as
\begin{subequations}
\label{eq:normal_cone_ai_bi}
\begin{align}
    \widehat{a}_i^\partial(x_k) &= \frac{x_N(x_k) - x_k}{\nrm{x_N(x_k) - x_k}}, \\
    \widehat{b}_i^{\partial, \epsilon}(x_k) &= \epsilon + \frac{(x_N(x_k) - x_k)}{\nrm{x_N(x_k) - x_k}}^\intercal x_k.
\end{align}
\end{subequations}
We can then define the safe set function
\begin{align}
    \widehat{h}_i^{\partial, \epsilon}(z, x_k) \triangleq \pth{\widehat{a}_i^\partial(x_k)}^\intercal z - \widehat{b}_i^{\partial, \epsilon}(x_k).
\end{align}

\begin{lemma}
    \label{lem:unsafe_inequality_normal_cone}
    Let \(x_k \in \partial \overline{S}_i\) for all \(i=1,\ldots,Q\). Let \(\epsilon > 0\).
    Then the set \(\{z \in \R^n : h_i^{\partial, \epsilon}(z, x_k) \geq 0 \}\) is disjoint from \(\overline{S}_i \) for all \(i=1,\ldots,Q\).
\end{lemma}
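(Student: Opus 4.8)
The plan is to mirror the argument of Lemma~\ref{lem:unsafe_inequality}, replacing the projection point $P_{\overline{S}_i}(x_k)$ by the boundary point $x_k$ itself and the projection residual by the chosen normal-cone direction $x_N(x_k) - x_k$. Because $x_k \in \partial \overline{S}_i$ and $\overline{S}_i$ is closed and convex, the supporting hyperplane theorem guarantees that $N_{\overline{S}_i}(x_k)$ contains a nonzero vector, so a valid choice of $x_N(x_k) \neq x_k$ with $x_N(x_k) - x_k \in N_{\overline{S}_i}(x_k)$ exists; this is precisely the condition under which $\widehat{a}_i^\partial$ and $\widehat{b}_i^{\partial,\epsilon}$ in \eqref{eq:normal_cone_ai_bi} are well-defined.

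First I would invoke Definition~\ref{def:normal_cone} directly: since $x_N(x_k) - x_k \in N_{\overline{S}_i}(x_k)$, we have $\bigl(x_N(x_k) - x_k\bigr)^\intercal (z - x_k) \leq 0$ for every $z \in \overline{S}_i$. As $x_N(x_k) \neq x_k$, dividing by the positive scalar $\nrm{x_N(x_k) - x_k}$ preserves the inequality and yields $\bigl(\widehat{a}_i^\partial(x_k)\bigr)^\intercal (z - x_k) \leq 0 < \epsilon$ for all $z \in \overline{S}_i$.

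Next I would rearrange this to $\bigl(\widehat{a}_i^\partial(x_k)\bigr)^\intercal z < \epsilon + \bigl(\widehat{a}_i^\partial(x_k)\bigr)^\intercal x_k = \widehat{b}_i^{\partial,\epsilon}(x_k)$, the last equality being the definition in \eqref{eq:normal_cone_ai_bi}. Hence $h_i^{\partial,\epsilon}(z,x_k) = \bigl(\widehat{a}_i^\partial(x_k)\bigr)^\intercal z - \widehat{b}_i^{\partial,\epsilon}(x_k) < 0$ for every $z \in \overline{S}_i$, so no point of $\overline{S}_i$ can satisfy $h_i^{\partial,\epsilon}(z,x_k) \geq 0$. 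This is exactly the claimed disjointness of $\{z \in \R^n : h_i^{\partial,\epsilon}(z,x_k) \geq 0\}$ from $\overline{S}_i$, and repeating the argument for each $i = 1,\ldots,Q$ finishes the proof.

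There is essentially no hard step: the conclusion is a one-line consequence of the defining inequality of the normal cone once the geometric setup is fixed. The only point deserving care is the well-definedness of $\widehat{a}_i^\partial$ — I would make explicit that the hypothesis $x_k \in \partial\overline{S}_i$ is what forces $N_{\overline{S}_i}(x_k)$ to be nontrivial (for an interior point it is $\{0\}$ and the normalization would fail), so that a nonzero normal direction $x_N(x_k)-x_k$ can be selected. With that observation the proof is entirely self-contained from Definition~\ref{def:normal_cone}, with no appeal to Theorem~\ref{thm:Normal_Projection_Cones} needed.
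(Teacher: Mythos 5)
Your proposal is correct and follows essentially the same route as the paper: the paper's proof simply notes that $x_N(x_k)-x_k \in N_{\overline{S}_i}(x_k)$ and defers to the arguments of Lemma~\ref{lem:unsafe_inequality}, which is precisely the normal-cone inequality and rearrangement you carry out explicitly. Your added remark that $x_k \in \partial\overline{S}_i$ guarantees a nonzero normal direction (so the normalization in \eqref{eq:normal_cone_ai_bi} is well-defined) is a useful detail the paper leaves implicit, but it does not change the approach.
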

\begin{proof}
    By definition, \( x_N(x_k) - x_k \in N_{\overline{S}_i}(x_k)\). The result follows using similar arguments as Lemma \ref{lem:unsafe_inequality} and Lemma \ref{lem:convex_subset}
\end{proof}
The safe set function \(\widehat{h}_i^{\partial, \epsilon}\) may be used in place of \(h_i^\epsilon\) in Algorithm \ref{alg:main_algorithm} when \(x_k \in \partial \overline{S}_i\).

The more difficult case is when the state is in the interior of the unsafe set; i.e., \(x_k \in \text{int}\pth{\overline{S}_i}\). In this case the normal cone is trivially the zero vector: \(N_{\overline{S}_i}(z) = \{0\}\ \forall z \in \overline{S}_i\).
We therefore use the notion of \emph{depth} of a convex set. The depth function \(\text{depth} : \R^n \times 2^{\R^n} \to \R\) is defined as
\begin{align}
    \text{depth}\pth{x,C} \triangleq \text{dist}(x,\R^n \backslash S),
\end{align}
where \(C\) is a subset of \(\R^n\). For convex sets such as \(\overline{S}_i\) under Assumption \ref{assume:unsafe_set}, the depth can be computed as \cite[Sec. 8.5.1]{boyd2004convex}
\begin{align}
    \text{depth}(x_k,\overline{S}_i) = \max_R &&& R \label{eq:general_depth} \\
    \text{s.t.} &&& \mathfrak{g}_j(x,R) \leq 0\ j=1,\ldots,J', \nonumber
\end{align}
This optimization problem is convex, but the functions \(\mathfrak{g}_j\) are difficult to evaluate in general. However, the optimization problem in \eqref{eq:general_depth} is tractable for specific classes of convex functions. When \(\overline{S}_i\) is a polytope in halfspace form \(a_{i,j}'^\intercal x_{k} \leq b_{i,j}'\), \(\text{depth}(x_k,\overline{S}_i)\) can be computed via the following linear program \cite[Sec. 8.5.1]{boyd2004convex}:
\begin{align}
    \text{depth}(x_k,\overline{S}_i) = \max_R &&& R \\
    \text{s.t.} &&& a_{i,j}'^\intercal x_k + R\nrm{a_{i,j}'} \leq b_{i,j}
\end{align}
When 
\(\overline{S}_i\) takes the form of an intersection of ellipsoids defined by quadratic inequalities \(\overline{S}_i = \{x : x^\intercal M_{i,j} x + 2 v_{i,j}^\intercal x + d_{i,j} \leq 0,\ j=1,\ldots,J' \), \(M_{i,j} \in \R^{n\times n}\), \(v_{i,j} \in \R^n\), \(d_{i,j} \in \R\), the depth can be computed via the following convex SDP \cite[Sec. 8.5.1]{boyd2004convex}:
\begin{align*}
    &\text{depth}(x_k, \overline{S}_i) = \\
    &\hspace{1em} \max_{R\in \R,\ \lambda_{i,j} \in \R} R\\
    &\hspace{1em} \text{s.t.} \bmxs{
        -\lambda_{i,j} - d_{i,j} + v_{i,j}^\intercal M_{i,j}^{-1} v_{i,j} & 0 & (x + M_{i,j}^{-1} v_{i,j})^\intercal \\
        0 & \lambda_{i,j} I & R I \\
        x_k + M_{i,j}^{-1} v_{i,j} & RI & M_i^{-1}
    } \ggeq 0, \\
    &\hspace{1em} j=1,\ldots,J'
\end{align*}
Due to the nature of the objective function \eqref{eq:general_depth}, the value of \(\text{depth}(x_k, \overline{S}_i)\) is equal to the value of the optimal point \(R^*(x_k)\).
Using the implicit function theorem, it is possible to obtain a descent direction for the depth by computing \(\frac{\partial}{\partial x} \text{depth}(x_k, \overline{S}_i) = \frac{\partial R^*(x_k)}{\partial x}\) using standard methods to backpropagate through convex optimization programs \cite{agrawal2019differentiable}. The following halfspaces can then be defined:
\begin{subequations}
\label{eq:ai_bi_R}
\begin{align}
    a_i^R(x_k) &\triangleq \frac{-\partial R_i^*(x_k) / \partial x}{\nrm{\partial R_i^*(x_k) / \partial x}}, \\
    b_i^{R}(x_k) &\triangleq \pth{\frac{-\partial R_i^*(x_k) / \partial x}{\nrm{\partial R_i^*(x_k) / \partial x}}}^\intercal x_k - R_i^*(x_k)
\end{align}
\end{subequations}

We can then define the function 
\begin{align}
h_i^{R}(z,x_k) = a_i^R(x_k)^\intercal z - b_i^{R}(x_k), \label{eq:h_i_R}
\end{align}
which may be used in place of \(h_i^\epsilon\) in Algorithm \ref{alg:main_algorithm} when \(x_k \in \textup{int}(\overline{S}_i)\).

\begin{lemma}
    \label{lem:depth_decrease}
    Let \(\overline{S}_i\) be defined as in Assumption \ref{assume:unsafe_set}.
    Let \(h_i^R\) be defined as in \eqref{eq:h_i_R}.
    Let \(x_k \in \textup{int}(\overline{S}_i)\) and
    let \(z \in \R^n\) be any vector \(z\) satisfying \(h_i^R(z,x_k) \geq (1-\gamma) h_i^R(x_k,x_k)\) for some \(0 < \gamma \leq 1\).
    Then \(\textup{depth}(z,\overline{S}_i) < \textup{depth}(x_k, \overline{S}_i)\).
\end{lemma}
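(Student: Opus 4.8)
The plan is to combine two facts: (i) the map $\phi(x)\triangleq\text{depth}(x,\overline{S}_i)$ is concave on the convex set $\overline{S}_i$, and (ii) by the construction in \eqref{eq:ai_bi_R}, $a_i^R(x_k)$ is a unit \emph{descent} direction for $\phi$ at $x_k$. I would then show the hypothesis $h_i^R(z,x_k)\ge(1-\gamma)h_i^R(x_k,x_k)$ forces $z-x_k$ to have a strictly positive component along $a_i^R(x_k)$, and conclude via the first-order (supergradient) inequality for $\phi$. Throughout I may assume $z\in\overline{S}_i$: if $z\notin\overline{S}_i$ then $\text{depth}(z,\overline{S}_i)=0<\text{depth}(x_k,\overline{S}_i)$ and the claim is immediate, since $x_k\in\text{int}(\overline{S}_i)$ gives $\phi(x_k)>0$.

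\textbf{Step 1 (supergradient inequality for the depth).} Since $\overline{S}_i$ is closed and convex by Assumption \ref{assume:unsafe_set}, $\phi$ is concave on $\overline{S}_i$, and on $\text{int}(\overline{S}_i)$ it equals the optimal value $R_i^*(x)$ of the convex program \eqref{eq:general_depth}. At a point $x_k$ where the method is invoked, $\partial R_i^*(x_k)/\partial x$ is a well-defined nonzero vector — exactly the condition under which $a_i^R(x_k)$ is defined — and it is a supergradient of $\phi$ at $x_k$, so
\begin{align*}
\phi(z)\;\le\;\phi(x_k)+\pth{\tfrac{\partial R_i^*(x_k)}{\partial x}}^{\intercal}(z-x_k)\qquad\text{for all }z\in\overline{S}_i.
\end{align*}
Substituting $\partial R_i^*(x_k)/\partial x=-\nrm{\partial R_i^*(x_k)/\partial x}\,a_i^R(x_k)$ from \eqref{eq:ai_bi_R} turns this into
\begin{align*}
\phi(z)\;\le\;\phi(x_k)-\nrm{\tfrac{\partial R_i^*(x_k)}{\partial x}}\,a_i^R(x_k)^\intercal(z-x_k).
\end{align*}

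\textbf{Step 2 (the hypothesis forces a strictly positive step).} From the definitions \eqref{eq:ai_bi_R} and \eqref{eq:h_i_R}, subtracting cancels the offset $b_i^R(x_k)$: $h_i^R(z,x_k)-h_i^R(x_k,x_k)=a_i^R(x_k)^\intercal(z-x_k)$, while $h_i^R(x_k,x_k)=-\phi(x_k)$ (here one must keep the sign convention of $b_i^R$ straight so that the barrier value is negative in the interior of the unsafe set). Plugging into $h_i^R(z,x_k)\ge(1-\gamma)h_i^R(x_k,x_k)$ gives
\begin{align*}
a_i^R(x_k)^\intercal(z-x_k)\;\ge\;\gamma\,\phi(x_k)\;>\;0,
\end{align*}
using $\gamma>0$ and $\phi(x_k)>0$. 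Combining with Step 1 and $\nrm{\partial R_i^*(x_k)/\partial x}>0$ yields $\phi(z)\le\phi(x_k)-\gamma\,\nrm{\partial R_i^*(x_k)/\partial x}\,\phi(x_k)<\phi(x_k)$, i.e.\ $\text{depth}(z,\overline{S}_i)<\text{depth}(x_k,\overline{S}_i)$.

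The step I expect to be the main obstacle is the rigorous justification in Step 1 that $\partial R_i^*(x_k)/\partial x$, produced by differentiating through the conic depth program as in \cite{agrawal2019differentiable}, is a genuine supergradient of the concave function $\phi$ and is nonzero. Concavity of the distance to the complement of a convex set (restricted to that set) is classical, and for concave functions the first-order inequality holds for any supergradient, so differentiability is a convenience rather than a necessity; the one substantive side condition is $\partial R_i^*(x_k)/\partial x\neq 0$ — i.e.\ that $x_k$ is not a depth-maximizing ``Chebyshev-center''-type point of $\overline{S}_i$ — which should be recorded as a standing hypothesis wherever $h_i^R$ is used, since otherwise $a_i^R$ is undefined and no strict decrease can be guaranteed. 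A secondary, purely bookkeeping point is keeping the signs of $a_i^R$ and $b_i^R$ consistent so that $h_i^R(x_k,x_k)=-\text{depth}(x_k,\overline{S}_i)$ and the hypothesis indeed translates into a strictly positive step along the descent direction.
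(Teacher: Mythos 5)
Your proposal follows essentially the same route as the paper's proof: what the paper phrases as membership of the normalized $-\partial R_i^*(x_k)/\partial x$ in the normal cone of the superlevel set $C_i(x_k)=\{y\in\R^n: R_i^*(y)\ge R_i^*(x_k)\}$ is exactly the concavity/supergradient property of the depth function that you invoke in Step 1, and both arguments then reduce the hypothesis to a strictly positive inner product of $z-x_k$ with $a_i^R(x_k)$. Your version is in fact a little sharper: the supergradient inequality yields the quantitative bound $\textup{depth}(z,\overline{S}_i)\le\pth{1-\gamma\nrm{\partial R_i^*(x_k)/\partial x}}\textup{depth}(x_k,\overline{S}_i)$, it covers the endpoint $\gamma=1$ cleanly, and you state explicitly two side conditions the paper leaves implicit, namely $\partial R_i^*(x_k)/\partial x\neq 0$ and the trivial case $z\notin\overline{S}_i$. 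The one place you diverge from the printed text is the sign bookkeeping you flag in Step 2: with $b_i^R$ as printed in \eqref{eq:ai_bi_R} one gets $h_i^R(x_k,x_k)=+R_i^*(x_k)$ (the paper's own proof says as much), and under that convention the hypothesis $h_i^R(z,x_k)\ge(1-\gamma)h_i^R(x_k,x_k)$ only gives $a_i^R(x_k)^\intercal(z-x_k)\ge-\gamma R_i^*(x_k)$ and is satisfied by $z=x_k$, so no strict decrease can follow; the rearranged inequality displayed in the paper's proof is likewise not consistent with its printed definitions. Your convention, $h_i^R(x_k,x_k)=-\textup{depth}(x_k,\overline{S}_i)$ (equivalently, $+R_i^*(x_k)$ in the definition of $b_i^R$), is the one under which the lemma is true and the argument goes through, and you were right to single it out. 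The remaining soft spot is shared with the paper: the assertion that the sensitivity $\partial R_i^*(x_k)/\partial x$ obtained by differentiating through the depth program is a genuine supergradient of the concave depth function is asserted rather than proved in both places, though you at least identify concavity of the depth on the convex set $\overline{S}_i$ as the underlying classical fact.
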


\begin{proof}
    By definition, \(R_i^*(x_k) = \textup{depth}(x_k, \overline{S}_i)\).
    The vector \(\frac{-\partial R_i^*(x_k) / \partial x}{\nrm{\partial R_i^*(x_k) / \partial x}}\) can be considered an element of the normal cone to the convex set \(C_i(x_k) = \{y \in \R^n : R_i^*(x_k) \leq R_i^*(y) \}\), where \(x_k \in \partial C_i(x_k)\).
    Observe that \(h_i^R(x_k,x_k) = R_i^*(x_k)\).
    The equation \(h_i^R(z,x_k) \geq h_i^R(x_k,x_k\) can therefore be rewritten as
    \begin{align}
        \frac{-\partial R_i^*(x_k) / \partial x}{\nrm{\partial R_i^*(x_k) / \partial x}}^\intercal \pth{z - x_k} \geq (1-\gamma) R_i^*(x_k).
    \end{align}
    Since \(R_i^*(x_k) > 0\) for \(x_k \in \overline{S}_i\) and \(0 < \gamma \leq 1\), we therefore have \(R_i^*(z) < R_i^*(x_k)\), or equivalently \(\textup{depth}(z, \overline{S}_i) < \textup{depth}(x_k, \overline{S}_i)\). 
\end{proof}

Finally, observe that \(h_i^R(f(x) + g(x) u,x_k)\) is affine and therefore concave in \(u\), which empowers rapid computation of feasible control inputs \(u\) using convex optimization techniques.

\subsection{Indefinite Matrix Safe Sets}

The recent work \cite{ong2025matrix} introduced the notion of \emph{indefinite safe sets} defined as
\begin{align}
    S &= \{x \in \R^n : H(x) \not\prec 0\},\\
    &= \{x \in \R^n : \lambda_p(H(x)) \geq 0 \}.
\end{align}
Here, recall that by convention the eigenvalues of \(H\) are ordered \(\lambda_1(H(x)) \leq \lambda_2(H(x)) \leq \cdots \leq \lambda_p(H(x))\).
Indefinite safe sets have application to enforcing disjunctive boolean constraints on multiple CBFs \cite{ong2025matrix}. We generalize this notion of indefinite safe sets as follows:
\begin{align}
    S^{(j)} &= \{x \in \R^n : \lambda_j(H(x)) \geq 0 \},\ 1 \leq j \leq p. \label{eq:multi_indef_safe_set}
\end{align}
In words, \(S^{(j)}\) represents the states where eigenvalues \(j\) through \(p\) are non-negative. If we let \(r = p - j\), then this represents enforcing \(r\) out of \(p\) constraints to be active at any time. We next show that the set \(S^{(j)}\) may be rendered forward invariant by enforcing the following constraint at every time step \(k \geq 0\) for some \(0 < \gamma \leq 1\):
\begin{align}
    H(x_{k+1}) - H(x_k) \ggeq  -\gamma \lambda_{j}(H(x_k)) I_{p \times p}.
\end{align}
Given the dynamics \eqref{eq:discrete_dynamics}, define the set-valued mapping \(K_H^{(j)} : \R^n \to 2^{\R^m} \) as
{\small
\begin{align}
    &K_H^{(j)}(x_k) \triangleq \big\{u \in \R^m : \label{eq:K_H_choose} \\
    &H(f(x) + g(x)u) \ggeq H(x_k) - \gamma \lambda_j(H(x_k))I_{p\times p}) \big\}. \nonumber
\end{align}
}

\begin{theorem}
    Let \(S^{(j)}\) be defined as in \eqref{eq:multi_indef_safe_set} and let \(K_H^{(j)}\) be defined as in \eqref{eq:K_H_choose}. Let \(x_0 \in S^{(j)}\). Then any control input \(u(x_k) \in K_H^{(j)}(x_k)\) \(\forall k \geq 0\) renders the set \(S^{(j)}\) forward invariant.
    Furthermore, such a \(u(x_k)\) renders the set \(S^{(j)}\) exponentially stable in \(\R^n\).
\end{theorem}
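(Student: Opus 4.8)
The plan is to mimic the inductive structure of the earlier DTE-MCBF lemma and of Theorem~\ref{thm:zeroing_thm}, but tracking the $j$-th smallest eigenvalue rather than the smallest. First I would rewrite the defining constraint of $K_H^{(j)}$ in the ``shifted'' form
\begin{align}
    H(x_{k+1}) \ggeq H(x_k) - \gamma \lambda_j(H(x_k)) I_{p\times p}, \nonumber
\end{align}
and introduce the matrix $B_k \triangleq H(x_{k+1}) - H(x_k) + \gamma\lambda_j(H(x_k)) I_{p\times p}$, which is positive semidefinite whenever $u_k \in K_H^{(j)}(x_k)$. The key algebraic identity is $H(x_{k+1}) = \bigl(H(x_k) - \gamma\lambda_j(H(x_k))I\bigr) + B_k$, so that Weyl's inequality (Theorem~\ref{thm:weyl}) applied with $i=j$, $j=1$ in its statement gives
\begin{align}
    \lambda_j\bigl(H(x_k) - \gamma\lambda_j(H(x_k))I\bigr) + \lambda_1(B_k) \leq \lambda_j(H(x_{k+1})). \nonumber
\end{align}
Since $\lambda_1(B_k)\geq 0$ and subtracting a multiple of the identity shifts every eigenvalue uniformly, the left side equals $\lambda_j(H(x_k)) - \gamma\lambda_j(H(x_k)) = (1-\gamma)\lambda_j(H(x_k))$, yielding the one-step recursion $\lambda_j(H(x_{k+1})) \geq (1-\gamma)\lambda_j(H(x_k))$.

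From this recursion the two conclusions follow quickly. For forward invariance: if $x_0 \in S^{(j)}$ then $\lambda_j(H(x_0)) \geq 0$, and the recursion together with $0 < \gamma \leq 1$ (so $1-\gamma \geq 0$) gives $\lambda_j(H(x_k)) \geq 0$ for all $k$ by induction, i.e.\ $x_k \in S^{(j)}$; this is the discrete analogue of the argument already used in the proof of the DTE-MCBF lemma, applied to the eigenvalue $\lambda_j$ instead of positive semidefiniteness of the whole matrix. For exponential stability in $\R^n$: iterating the recursion gives $\lambda_j(H(x_k)) \geq (1-\gamma)^k \lambda_j(H(x_0))$ exactly as in Theorem~\ref{thm:zeroing_thm}; when $\lambda_j(H(x_0)) < 0$ this forces $\lambda_j(H(x_k))$ up to a nonnegative value geometrically fast (the negative part decays at rate $(1-\gamma)^k$), so the distance of $x_k$ from $S^{(j)}$, controlled by the negative part $\max\{0,-\lambda_j(H(x_k))\}$, is bounded by $(1-\gamma)^k \max\{0,-\lambda_j(H(x_0))\}$.

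The main obstacle is the correct and careful invocation of Weyl's inequality: the statement in Theorem~\ref{thm:weyl} is indexed as $\lambda_{i-j+1}(A)+\lambda_j(B)\leq\lambda_i(A+B)$, and I must choose the index substitution so that the ``$A$'' term produces exactly $\lambda_j$ of the shifted matrix and the ``$B$'' term produces $\lambda_1(B_k)\geq 0$; this means taking the outer index equal to $j$ and the inner index equal to $1$, which is legitimate since $1 \leq j \leq p$. A secondary point worth stating explicitly (but not belaboring) is that eigenvalues of $A - cI$ are those of $A$ shifted by $-c$ with the same ordering, which is what lets the $-\gamma\lambda_j(H(x_k))I$ term pass cleanly through $\lambda_j(\cdot)$. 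Everything else is the routine induction already rehearsed twice in the paper, so I would present the one-step Weyl estimate as the heart of the proof and then invoke ``by the same inductive argument as in Theorem~\ref{thm:zeroing_thm}'' for the iteration.
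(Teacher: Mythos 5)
Your proposal is correct and follows essentially the same route as the paper's proof: introduce the positive semidefinite slack matrix ($B_k$, the paper's $\Delta^{(j)}H_k$), apply Weyl's inequality with the outer index $j$ to obtain the one-step recursion $\lambda_j(H(x_{k+1})) \geq (1-\gamma)\lambda_j(H(x_k))$, and iterate for forward invariance and exponential convergence. The only cosmetic difference is which summand you treat as $A$ versus $B$ in Weyl's inequality, a relabeling that changes nothing substantive.
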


\begin{proof}
    For brevity denote \(H_{k} = H(x_{k})\).
    Define
    \(\Delta^{(j)}H_k \triangleq H_{k+1} - \pth{H_k - \gamma \lambda_j(H_k) I_{p\times p}}\). Using Weyl's inequality (Theorem \ref{thm:weyl}) and setting \(i=j\), we have
    \begin{align*}
        &\lambda_{1}\pth{\Delta^{(j)}H_k} + \lambda_j\Big(H_k - \gamma \lambda_j(H_k) I_{p\times p}\Big) \leq \\
        &\hspace{2em}\lambda_j\pth{\Delta^{(j)}H_k + H_k - \gamma \lambda_j(H_k) I_{p\times p}}, \\
        &\hspace{2em}= \lambda_j\pth{H_{k+1}}.
    \end{align*}
    However, \(u(x_k) \in K_H^{(j)}(x_k)\) implies that \(\lambda_{1}(\Delta^{(j)}H_k) \geq 0\). It follows that
    \begin{align*}
        \lambda_j\Big(H_k - \gamma \lambda_j(H_k) I_{p\times p}\Big) &\leq \lambda_j(H_{k+1}), \\
        \implies (1-\gamma) \lambda_j(H_k) &\leq \lambda_j(H_{k+1}).
    \end{align*}
    By induction, we have
    \begin{align}
        \lambda_j(H_{k+1}) \geq (1-\lambda)^k \lambda_j(H_0).
    \end{align}
    It follows that \(S^{(j)}\) is both forward invariant and exponentially stable in \(\R^n\).
\end{proof}
The set of control inputs within \(K_H^{(j)}(x_k)\) ensure that the largest eigenvalues \(\lambda_j\) through \(\lambda_p\) remain positive, but the smallest eigenvalues \(\lambda_1\) through \(\lambda_{j-1}\) are free to remain negative. This behavior relates to recent work on combinatorial ``\(p\)-choose-\(r\)" CBFs \cite{ong2025combinatorial}, however our contributions here apply to discrete-time CBFs. Given a nominal control input \(u_\textup{nom}(x_k)\), control inputs within \(K_H^{(j)}(x_k)\) may be computed using the following optimization formulation:
\small
\begin{align}
    &u^{(j)*}(x_k) = \min_{u \in \R^m} \hspace{1em} \nrm{u - u_\textup{nom}(x_k)} \\ 
    &\textup{s.t. } H(f(x_k) + g(x_k)u) - H(x_k) + \gamma \lambda_j(H(x_k))I_{p\times p} \ggeq 0. \nonumber
\end{align}
\normalsize
The results in preceding sections may be applied to ensure that this optimization formulation is convex. 

\section{Numerical Simulations}
\label{sec:simulations}

In this section, the PDTE-MCBF technique developed in the previous section and shown in Algorithm \ref{alg:main_algorithm} is implemented in numerical examples.
This section also provides a performance comparison between the proposed PDTE-MCBF technique and nonconvex CBF formulations.
Note that in this work, the numerical examples are constrained to a plane in which only horizontal and vertical motion is permitted and thus only two-dimensional obstacles are considered.
Consequently, the subindices $\rmh$ and $\rmv$ used in the variables to denote a connection to horizontal and vertical coordinates, respectively.
For instance, $p_{\rmh}$ and $p_{\rmv}$ are used in these examples to denotes positions in the horizontal and vertical directions, respectively.

First, an overview of the obstacles considered in the numerical examples is provided, in which the unsafe sets, matrix convex functions, the projected point calculations, and the safe sets associated with each obstacle are discussed; in particular, analogous parameters to the previously defined buffer distance $\epsilon$ are defined for the safe sets of each obstacle.
Then, two systems are considered: a system in a plane composed of two double integrators and the outer-loop controller of a bicopter lateral flight system.
The nominal controllers in the examples are given by reference tracking LQR controllers. The details of this implementation are omitted for brevity and since the main focus of the paper is the performance of the CBF.
The implementation of the PDTE-MCBF technique requires the solution of a constrained linear least-squares optimization problem, as shown in \eqref{eq:safety_preserving_u}. 
In all examples, this problem is solved by using the \texttt{lsqlin} solver from Matlab with the \texttt{active-set} algorithm.
The solver for nonconvex CBF formulations is discussed in each numerical example.

Furthermore, in this work, the safe sets for obstacle avoidance depend only on the position states.
Hence, there exists $C_{\rm pos} \in \BBR^{2 \times n},$ such that $C_{\rm pos} x \in \BBR^2$ yields the position state vector related to state $x.$
Thus, \eqref{eq:a_i_b_i} is rewritten as
\footnotesize
\begin{subequations}
\label{eq:a_i_b_i_pos}
\begin{align}
    a_i(x_k) &\triangleq \frac{C_{\rm pos}^\rmT C_{\rm pos} (x_k - P_{\overline{S}_i}(x_k))}{\nrm{C_{\rm pos} (x_k - P_{\overline{S}_i}(x_k))}_2}, \\
    b_i^\epsilon (x_k) &\triangleq \epsilon +  \pth{\frac{C_{\rm pos} (x_k - P_{\overline{S}_i}(x_k))}{\nrm{C_{\rm pos}(x_k - P_{\overline{S}_i}(x_k))}_2}}^\intercal (C_{\rm pos} P_{\overline{S}_i}\pth{x_k}).
\end{align}
\end{subequations}
\normalsize
The matrix convex functions presented in the obstacle subsection are also defined in terms of the $C_{\rm pos}$ operator.

All computational results in this paper were obtained using a laptop PC running
Windows 10 Pro, version 22H2, 
OS build 19045.6332,
with an Intel Core i7-10750H processor running at 2.60 GHz 
and a 32GB, 3200 MHz RAM,
with MATLAB version R2024b Update 5.

\subsection{Obstacles} \label{subsec:ex_obstacles}

The four obstacles considered in this work are a circle, an ellipse, a convex polytope, and a spectrahedron, which are shown in Figure \ref{fig:ex_obstacles}.
Their properties and the projected point calculation procedure and the CBF associated with each of these obstacles are presented next.

\subsubsection{Circle}
\label{ssec:circle}
A circle $\mathfrak{C}$ is defined by a center position $p_{\mathfrak{C}} \in \BBR^2$ and a radius $r_{\mathfrak{C}}.$
For a circle $\mathfrak{C}$ obstacle, the unsafe set is given by
\begin{equation}
    \overline{S}_{\mathfrak{C}} = \{ p \in \BBR^2 \colon \lVert p - p_{\mathfrak{C}} \rVert_2 \le r_{\mathfrak{C}} \}, \label{eq:unsafe_set_circle}
\end{equation}
and thus, the corresponding matrix convex function describing the unsafe set is given by
\begin{equation}
    \overline{H}_{\mathfrak{C}} (x) = \lVert C_{\rm pos} x - p_{\mathfrak{C}} \rVert_2 - r_{\mathfrak{C}}. \label{eq:unsafe_set_matrix_circle}
\end{equation}
Note that the projection of a state vector $x_k$ onto the set $\overline{S}_{\mathfrak{C}}$, given by \eqref{eq:convex_projection} can be expressed in closed-form as
\small
\begin{equation}
    P_{\overline{S}_{\mathfrak{C}}} (x_k) =  \frac{\min\{r_{\mathfrak{C}},  \lVert C_{\rm pos} x_k - p_{\mathfrak{C}} \rVert_2\}}{\lVert C_{\rm pos} x_k - p_{\mathfrak{C}} \rVert_2} (C_{\rm pos} x_k - p_{\mathfrak{C}}) + p_{\mathfrak{C}}, \label{eq:proj_circle}
\end{equation}
\normalsize
which can be used to calculate the projected point without solving an optimization problem.
Finally, it follows from the unsafe set \eqref{eq:unsafe_set_circle} that the CBF associated with the circle obstacle safe set $S_{\mathfrak{C}}$ with buffer distance $\epsilon_{\mathfrak{C}}$ is given by
\begin{equation}
    h_{\mathfrak{C}} (x_k) = (r_{\mathfrak{C}} + \epsilon_{\mathfrak{C}})^2 - (C_{\rm pos } x_k - p_{\mathfrak{C}})^\rmT (C_{\rm pos } x_k - p_{\mathfrak{C}}), \label{eq:h_circle}
\end{equation}
which can be used to implement the safe control input constraints shown in \eqref{eq:nonconvex_constraint}.
Note that $\epsilon_{\mathfrak{C}}$ increases the radius of the unsafe set considered by the CBF.

\subsubsection{Ellipse}
\label{ssec:ellipse}
An ellipse $\mathfrak{El}$ is defined by a center position $p_{\mathfrak{El}} \in \BBR^2,$ the semi-major axis unit vector $v_{\mathfrak{El}} \in \BBR^2,$ such that $\lVert v_{\mathfrak{El}} \rVert_2 = 1,$ and the semi-major and semi-minor axes lengths  $\ell_{\mathfrak{El}, 1} \ge \ell_{\mathfrak{El}, 2}  > 0 $, respectively.
Let $Q_{\mathfrak{El}} \isdef \matl v_{\mathfrak{El}} &  v_{\mathfrak{El}, \perp} \matr \in \BBR^{2 \times 2},$ where $v_{\mathfrak{El}, \perp} \isdef \matl 0 & -1 \\ 1 & 0 \matr v_{\mathfrak{El}},$ and let
\begin{equation*}
    M_{\mathfrak{El}} \isdef Q_{\mathfrak{El}} \matl \ell_{\mathfrak{El}, 1}^{-2} & 0 \\ 0 & \ell_{\mathfrak{El}, 2}^{-2} \matr Q_{\mathfrak{El}}^\rmT.
\end{equation*}
Then, for an ellipse $\mathfrak{El}$ obstacle, the unsafe set is given by
\begin{equation}
    \overline{S}_{\mathfrak{El}} = \{ p \in \BBR^2 \colon (p - p_{\mathfrak{El}})^\rmT M_{\mathfrak{El}} (p - p_{\mathfrak{El}})  \le 1 \}, \label{eq:unsafe_set_ellipse}
\end{equation}
and thus, the corresponding matrix convex function is given by
\begin{equation}
    \overline{H}_{\mathfrak{El}} (x) =  (C_{\rm pos} x - p_{\mathfrak{El}})^\rmT M_{\mathfrak{El}} (C_{\rm pos} x - p_{\mathfrak{El}}) - 1. \label{eq:unsafe_set_matrix_ellipse}
\end{equation}
Note that the projection of a state vector $x_k$ onto the set $\overline{S}_{\mathfrak{El}}$, given by \eqref{eq:convex_projection} can be formulated as a second-order cone optimization problem;
in this work, this optimization problem is solved by using the \texttt{coneprog} solver from Matlab.
Finally, it follows from the unsafe set \eqref{eq:unsafe_set_ellipse} that the CBF associated with the ellipse obstacle safe set $S_{\mathfrak{El}}$ with buffer distance $\epsilon_{\mathfrak{El}}$ is given by
\begin{equation}
    h_{\mathfrak{El}} (x_k) = 1 - (C_{\rm pos} x_k - p_{\mathfrak{El}})^\rmT M_{\mathfrak{El}, \epsilon_{\mathfrak{El}}} (C_{\rm pos} x_k - p_{\mathfrak{El}}), \label{eq:h_ellipse}
\end{equation}
where
\small
\begin{equation*}
    M_{\mathfrak{El}, \epsilon_{\mathfrak{El}}} \isdef Q_{\mathfrak{El}} \matl (\ell_{\mathfrak{El}, 1} + \epsilon_{\mathfrak{El}})^{-2} & 0 \\ 0 & (\ell_{\mathfrak{El}, 2} + \epsilon_{\mathfrak{El}})^{-2} \matr Q_{\mathfrak{El}}^\rmT,
\end{equation*}
\normalsize
and which can be used to implement safe control input constraints shown in \eqref{eq:nonconvex_constraint}.
Note that $\epsilon_{\mathfrak{El}}$ increases the lengths of the semi-major and semi-minor aces of the ellipse corresponding to the unsafe set considered by the CBF.

\subsubsection{Convex Polytope}
\label{ssec:polytope}
A convex polytope $\mathfrak{P}$ is defined by a matrix composed by $n_\mathfrak{P}$ vertices, given by $v_{\mathfrak{P}} \in \BBR^{2 \times n_\mathfrak{P}}.$
Then, for a polytope $\mathfrak{P}$ obstacle, the unsafe set is given by
\begin{equation}
    \overline{S}_{\mathfrak{P}} = \left\{ \bigcap_{i = 1}^{n_\mathfrak{P}} \{p \in \BBR^2 \colon -A_{\mathfrak{P}, i} p - b_{\mathfrak{P}, i} \le 0\} \right\}, \label{eq:unsafe_set_polytope}
\end{equation}
where $-A_{\mathfrak{P}, i} \in \BBR^{1 \times 2}, -b_{\mathfrak{P}, i} \in \BBR$ are associated with the halfspace that is determined by the $i$-th edge of $\mathfrak{P}$ and that does not contain the mean of all $n_\mathfrak{P}$ vertices.
Hence, the corresponding matrix convex function is given by
\begin{align}
    \overline{H}_{\mathfrak{P}} (x) = 
    {\rm diag} &(-A_{\mathfrak{P}, 1} C_{\rm pos} x - b_{\mathfrak{P}, 1}, \ldots, \nn \\ 
    &-A_{\mathfrak{P}, n_\mathfrak{P}}  C_{\rm pos} x - b_{\mathfrak{P}, n_\mathfrak{P}}). \label{eq:unsafe_set_matrix_polytope}
\end{align}
Note that the projection of a state vector $x_k$ onto the set $\overline{S}_{\mathfrak{P}}$, given by \eqref{eq:convex_projection} can be written as
\begin{subequations}
\label{eq:proj_polytope}
\begin{alignat}{2}
    P_{\overline{S}_{\mathfrak{P}}} (x_k) = &&& \underset{z \in \BBR^n}{\arg\min}  \nrm{z - C_{\rm pos} x_k} \\
     \text{s.t. } &&\hspace{1em} & \begin{matrix} v_{\mathfrak{P}} \lambda_{\mathfrak{P}} = z, \\
    \lambda_{\mathfrak{P}} \in \BBR^{n_\mathfrak{P}}, \lambda_{\mathfrak{P}} \ge 0, \mathds{1}_{1 \times n_\mathfrak{P}} \lambda_{\mathfrak{P}} = 1,\end{matrix}
\end{alignat}
\end{subequations}
and thus can be formulated as a constrained linear least-squares problem; 
in this work, this optimization problem is solved by using the \texttt{lsqlin} solver from Matlab with the \texttt{active-set} algorithm option.
This problem is warm-started by providing the closest vertex to $C_{\rm pos} x_k$ as an initial solution.
Next, it follows from the unsafe set \eqref{eq:unsafe_set_polytope} that the safe set corresponding to the polytope obstacle is given by
\begin{equation}
    S_{\mathfrak{P}} = \left\{ \bigcup_{i = 1}^{n_\mathfrak{P}} \{p \in \BBR^2 \colon A_{\mathfrak{P}, i} p + b_{\mathfrak{P}, i} \ge 0\} \right\}.
\end{equation}
In order to derive a CBF for this union of sets, a discrete-time version of the CBF constraint proposed in (15) from \cite{ong2025combinatorial} is formulated.
Using a combinatorial CBF was required for the implementation of  the CBF associated with the polytope, and we will explore this connection in more detail in future work.
Hence, the safe input constraints associated with the polytope safe set $S_{\mathfrak{P}}$ with buffer distance $\epsilon_{\mathfrak{P}}$ that replace \eqref{eq:nonconvex_constraint} are given by
\begin{align}
    h_{\mathfrak{P}, i} & (f(x_k) + g(x_k)u) \nn \\ &\ge h_{\mathfrak{P}, i} (x_k) - \gamma \left( h_{\mathfrak{P}} (x_k) + \lvert h_{\mathfrak{P}, i} (x_k) - h_{\mathfrak{P}} (x_k) \rvert \right) \nn \\ &\forall i \in \{1, \ldots, n_\mathfrak{P}\}, \label{eq:nonconvex_constraint_polytope}
\end{align}
where $\gamma \in [0, 1],$
\begin{align}
     h_{\mathfrak{P}, i} (x_k) &\isdef A_{\mathfrak{P}, i} C_{\rm pos} x_k +  b_{\mathfrak{P}, i} + \eta_i \epsilon_\mathfrak{P} \\
     & \hspace{5em} \forall i \in \{1, \ldots, n_\mathfrak{P}\}, \nn \\
    h_{\mathfrak{P}, {\rm max}} (x_k) &\isdef \max_{i \in \{1, \ldots , n_\mathfrak{P}\}} h_{\mathfrak{P}, i} (x_k),
\end{align}
where, for all $i \in \{1, \ldots, n_\mathfrak{P}\}$, $\eta_i \in \{-1, 1\}$ is chosen so that the distance between the hyperplane given by $h_{\mathfrak{P}, i}$ and the center of $\mathfrak{P}$ is increased by $\epsilon_\mathfrak{P}$ relative to the case where $\eta_i = 0.$

\subsubsection{Spectrahedron}
\label{ssec:spectra}
A spectrahedron $\mathfrak{Sp}$ in $\BBR^2$ is defined by a center position $p_{\mathfrak{Sp}} \in \BBR^2,$ matrix size $n_{\mathfrak{Sp}}$, symmetric matrices $A_{0, \mathfrak{Sp}}, A_{1, \mathfrak{Sp}}, A_{2, \mathfrak{Sp}} \in \BBR^{ n_{\mathfrak{Sp}} \times n_{\mathfrak{Sp}}},$ and rotation angle $\theta_{\mathfrak{Sp}}$ in radians.
Then, for a polyhedron $\mathfrak{Sp}$ obstacle, the unsafe set is given by
\small
\begin{align}
    \overline{S}_{\mathfrak{Sp}} &= \left\{p \in \BBR^2 \colon A_{0, \mathfrak{Sp}} + z_1 A_{1, \mathfrak{Sp}} + z_2 A_{2, \mathfrak{Sp}} \succeq 0, \right. \nn \\
    & \quad \quad \left. \mbox{where } \matl z_1 \\ z_2  \matr = R_{\theta_{\mathfrak{Sp}}} (p - p_{\mathfrak{Sp}}) \right\}, \label{eq:unsafe_set_spectrahedron}
\end{align}
\normalsize
where $R_{\theta_{\mathfrak{Sp}}} \in \BBR^{2 \times 2}$ is the rotation matrix associated with $\theta_{\mathfrak{Sp}}.$
Hence, the corresponding matrix convex function is given by
\small
\begin{align}
    \overline{H}_{\mathfrak{Sp}} (x) &= - A_{\mathfrak{Sp}} \left(\matl 1 \\ R_{\theta_{\mathfrak{Sp}}} (C_{\rm pos} x - p_{\mathfrak{Sp}}) \matr \otimes I_{ n_{\mathfrak{Sp}}}\right), \label{eq:unsafe_set_matrix_spectrahedron} \\
    A_{\mathfrak{Sp}} & \isdef \matl A_{0, \mathfrak{Sp}} & A_{1, \mathfrak{Sp}} & A_{2, \mathfrak{Sp}} \matr. \nn
\end{align}
\normalsize
Note that the projection of a state vector $x_k$ onto the set $\overline{S}_{\mathfrak{Sp}}$ requires the solution of the convex, semidefinite optimization problem given by \eqref{eq:convex_projection}; in this work, this optimization problem is solved using the \texttt{MOSEK} solver interfaced with the \texttt{YALMIP} toolbox.
Next, it follows from the unsafe set \eqref{eq:unsafe_set_spectrahedron} that the safe set corresponding to the spectrahedron obstacle is given by
\small
\begin{align}
    S_{\mathfrak{Sp}} &= \left\{ p \in \BBR^2 \colon -(A_{0, \mathfrak{Sp}} + z_1 A_{1, \mathfrak{Sp}} + z_2 A_{2, \mathfrak{Sp}}) \nprec 0, \right. \nn \\
    & \quad \quad \left. \mbox{where } \matl z_1 \\ z_2  \matr = R_{\theta_{\mathfrak{Sp}}} (p - p_{\mathfrak{Sp}}) \right\},
\end{align}
\normalsize
since positive semidefiniteness does not hold if at least one of the eigenvalues of the associated real symmetric matrix is negative.
In order to derive a CBF for this indefinite inequality, a discrete-time version of the CBF constraint proposed in (19) from \cite{ong2025matrix} is formulated.
Using an indefinite matrix CBF was required for the implementation of the CBF associated with the spectrahedron, and we will explore this connection in more detail in future work.
Hence, the safe input constraint associated with the spectrahedron safe set $S_{\mathfrak{Sp}}$ with buffer ratio $\epsilon_{\mathfrak{Sp}} \in [0, 1)$ that replaces \eqref{eq:nonconvex_constraint} is given by
\begin{align}
    H_{\mathfrak{Sp}} & (f(x_k) + g(x_k)u) \nn \\ &\ge (1 + c_\perp) H_{\mathfrak{Sp}} (x_k) - (\gamma + c_\perp)  \ \lambda_{\mathfrak{Sp}, {\rm max}} \ I_{n_{\mathfrak{Sp}}}, \label{eq:nonconvex_constraint_spectrahedron}
\end{align}
where $\gamma \in [0, 1],$ $c_\perp \in [0, 1],$ and
\footnotesize
\begin{align*}
    &H_{\mathfrak{Sp}} \isdef  - A_{\mathfrak{Sp}} \left(\matl 1 \\ (1 - \epsilon_{\mathfrak{Sp}}) R_{\theta_{\mathfrak{Sp}}} (C_{\rm pos} x - p_{\mathfrak{Sp}}) \matr \otimes I_{ n_{\mathfrak{Sp}}}\right),\\
    &\lambda_{\mathfrak{Sp}, {\rm max}} \isdef {\rm max}({\rm eig}(H_{\mathfrak{Sp}} (x_k))).
\end{align*}
\normalsize
Note that, unlike the other obstacles, $\epsilon_{\mathfrak{Sp}}$ represents a ratio which increases the size of the unsafe set considered by the CBF the closer its value is to 1, since distance cannot be intuitively defined relative to any of the spectrahedron parameters.

\begin{figure}[!ht]
    \centering
    \includegraphics[width = 0.9\columnwidth]{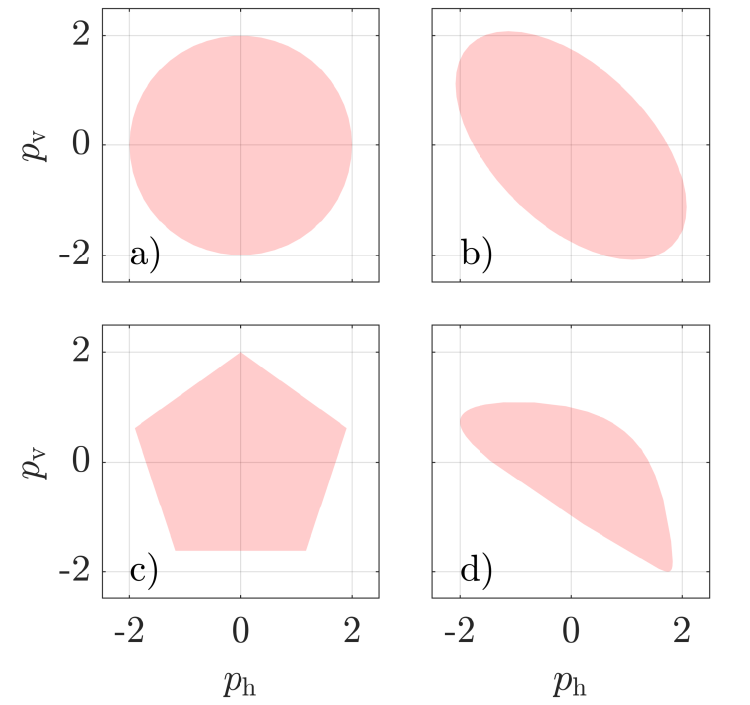}
    \caption{
    Typical obstacle geometries. 
    The figure shows a) circle (\ref{ssec:circle}), b) ellipse (\ref{ssec:ellipse}), c) convex polytope (\ref{ssec:polytope}), and d) spectrahedron (\ref{ssec:spectra}) shaped obstacles.
    %
    }
    \label{fig:ex_obstacles}
\end{figure}

\subsection{Discretized double integrators}\label{subsec:ex_double_integrator}

Robotic agents are frequently represented using double-integrator dynamics, which capture the relationship between acceleration, velocity, and position.
In this example, we consider the design of a safety-critical controller for an agent whose dynamics are governed by a double-integrator model.
In particular, consider two discretized double integrators
\begin{align}
    \matl p_{\rmh, k+1} \\ v_{\rmh, k+1} \matr &= A_\rmd \matl p_{\rmh, k} \\ v_{\rmh, k} \matr + B_\rmd u_{\rmh, k}, \label{eq:dynamics_example_1_h} \\
    \matl p_{\rmv, k+1} \\ v_{\rmv, k+1} \matr &= A_\rmd \matl p_{\rmv, k} \\ v_{\rmv, k} \matr + B_\rmd u_{\rmv, k}, \label{eq:dynamics_example_1_v},
\end{align}
where $p_{\rmh, k}, p_{\rmv, k} \in \BBR$ denote the translations in the horizontal and vertical directions, respectively, $v_{\rmh, k}, v_{\rmv, k} \in \BBR$ denote the velocities in the horizontal and vertical directions, respectively, $u_{\rmh, l}, u_{\rmv, k} \in \BBR$ are horizontal and vertical acceleration commands, respectively, and the discretized state transition matrices are 
\begin{align}
    A_\rmd \isdef \matl 1 & T_\rms \\ 0 & 1 \matr, \quad 
    B_\rmd \isdef \matl T_\rms^2/2 \\ T_\rms \matr.
    \label{eq:dynamics_example_1_AB}
\end{align}
Note that \eqref{eq:dynamics_example_1_h}, \eqref{eq:dynamics_example_1_v} can be written as
\begin{equation}
    x_{k+1} = A x_k + B u_k, \label{eq:dynamics_example_1}
\end{equation}
where $x_k \isdef \matl  p_{\rmh, k} & v_{\rmh, k} & p_{\rmv, k} & v_{\rmv, k}  \matr^\rmT,$ $u_k \isdef \matl u_{\rmh, k} & u_{\rmv, k} \matr^\rmT,$ $A \isdef {\rm diag} (A_\rmd, A_\rmd),$ and $B \isdef {\rm diag} (B_\rmd, B_\rmd).$
In this example, $C_{\rm pos} = \matl 1 & 0 & 0 & 0 \\ 0 & 0 & 1 & 0\matr,$ and we set $T_\rms = 0.01$ s.
Let $r_k \in \BBR^4$ be a reference state for $x_k.$
Hence, the objective of the nominal controller is to minimize $\sum_{k = 0}^{\infty} \Vert r_k - x_k \Vert.$

Consider a reference tracking nominal controller 
\begin{align*}
    u_{{\rm nom} , k} = f_{\rm lqr} (x_k, r_k),
\end{align*}
where $u_{{\rm nom}, k} \in \BBR$ is the requested control input and $f_{\rm lqr}$ encodes an LQR controller.
As mentioned at the beginning of this section, the details of the LQR controller implementation are omitted for brevity and since the main focus of the paper is the performance of the CBF.

The obstacles considered in this example are the following:
\begin{itemize}
    \item A circle obstacle $\mathfrak{C}$ with $p_\mathfrak{C} = \matl 18 & 16 \matr^\rmT$ m, and $r_\mathfrak{C} = 1.5$ m.
    \item An ellipse obstacle $\mathfrak{El}$ with $p_\mathfrak{El} = \matl 2.5 & 5 \matr^\rmT$ m, $v_{\mathfrak{El}} = \matl 1/\sqrt{2} & -1/\sqrt{2} \matr^\rmT,$ and 
    %
    %
    $\ell_{\mathfrak{El}, 1} = \sqrt{10}$ m, $\ell_{\mathfrak{El}, 2} = \sqrt{2},$
    \item A convex polytope obstacle $\mathfrak{P}$ given by a regular pentagon ($n_\mathfrak{P} = 5$) with center at $\matl 14 & 11 \matr^\rmT$ and a distance of 2 m from the center to each of its vertices.
    \item A spectrahedron obstacle $\mathfrak{Sp}$ with $p_{\mathfrak{Sp}} = \matl 7.5 & 7.5 \matr^\rmT$ m, $n_{\mathfrak{Sp}} = 3,$ $\theta_{\mathfrak{Sp}} = 0,$ and
    \footnotesize
    \begin{align*}
        A_{0, \mathfrak{Sp}} = 2 I_3, \quad & A_{1, \mathfrak{Sp}} = \matl  0 & 0.8 & 0 \\ 0.8 & 0 & 0.8 \\ 0 & 0.8 & 0 \matr,\\ & A_{2, \mathfrak{Sp}} = \matl  0 & 0 & 1.6 \\ 0 & 0 & 2.4 \\ 1.6 & 2.4 & 0 \matr.
    \end{align*}
    \normalsize
\end{itemize}

Two different CBF formulations are compared in the following example, the PDTE-MCBF formulation and a nonconvex CBF formulation:
\begin{itemize}
\item For the \textbf{PDTE-MCBF} formulation,
the control input $u_k$ is obtained by solving the constrained linear least-squares optimization problem
\begin{align}
    u_k &= \argmin_{\nu \in \BBR} \lVert \nu - u_{{\rm nom}, k} \rVert_2, \label{eq:ex1_pdte_mcbf} \\
    \mbox{s.t. } & H^\epsilon(A x_k + B \nu, x_k) \ggeq (1-\gamma) H^\epsilon(x_k,x_k), \nn
\end{align}
where $H^\epsilon$ is given by \eqref{eq:h_prime}, \eqref{eq:H_prime_set} with $Q = 4,$ each of the indices $i \in \{1, 2, 3, 4\}$ corresponding to each of the obstacles, $a_i, b_i^\epsilon$ given by \eqref{eq:a_i_b_i_pos}, and the projections $P_{\overline{S}_i}$ are calulated using \eqref{eq:convex_projection} with the corresponding $\overline{H}_i$ for each obstacle given by \eqref{eq:unsafe_set_matrix_circle}, \eqref{eq:unsafe_set_matrix_ellipse}, \eqref{eq:unsafe_set_matrix_polytope}, \eqref{eq:unsafe_set_matrix_spectrahedron}.
Algorithm \ref{alg:main_algorithm} is used to formulate and solve \eqref{eq:ex1_pdte_mcbf}.
This constrained linear least-squares optimization problem is solved at each iteration using the \texttt{lsqlin} Matlab solver with the \texttt{active-set} algorithm option.
This problem is warm-started by providing the nominal input $u_{{\rm nom}, k}$ as an initial solution.
Furthermore, the optimization problem formulations to calculate the projections corresponding to each obstacle are discussed in Subsection \ref{subsec:ex_obstacles}.
For this example, $\gamma = 0.2,$ and $\varepsilon = 0.4.$ \\
\item For the \textbf{nonconvex CBF} formulation,
the control input $u_k$ is obtained by solving the nonconvex, semidefinite optimization problem
\begin{align}
    u_k &= \argmin_{\nu \in \BBR} \lVert \nu - u_{{\rm nom}, k} \rVert_2, \\
    \mbox{s.t. } & \begin{matrix} h_{\mathfrak{C}} (A x_k + B \nu) \ge (1 - \gamma) h_{\mathfrak{C}} (x_k) \\ h_{\mathfrak{El}} (A x_k + B \nu) \ge (1 - \gamma) h_{\mathfrak{El}} (x_k) \\ \eqref{eq:nonconvex_constraint_polytope} \\ \eqref{eq:nonconvex_constraint_spectrahedron} \end{matrix},
\end{align}
with $h_{\mathfrak{C}}, h_{\mathfrak{El}}$ given by \eqref{eq:h_circle}, \eqref{eq:h_ellipse}, respectively, and $f(x_k) = A x_k, \ g(x_k) = B$ in \eqref{eq:nonconvex_constraint_polytope}, \eqref{eq:nonconvex_constraint_spectrahedron}.
This nonconvex, semidefinite optimization problem is solved at each iteration using the \texttt{bmibnb} solver from the \texttt{YALMIP} toolbox, interfaced with \texttt{MOSEK}.
This problem is warm-started by providing the nominal input $u_{{\rm nom}, k}$ as an initial solution.
For this example, $\gamma = 0.2,$ $\epsilon_{\mathfrak{C}} = \epsilon_{\mathfrak{El}} = \epsilon_{\mathfrak{P}} = 0.4,$ $\epsilon_{\mathfrak{Sp}} = 0.25,$ and $c_\perp = 0.05.$
\end{itemize}

For all simulations, we set $x_0 = 0,$ and $r_k \equiv \matl 18 & 0 & 20 & 0 \matr^\rmT.$
The simulations are run for all $k\in\{0, k_{\rm end}\},$ with $k_{\rm end} = 1200.$
The results of the simulations with the PDTE-MCBF and nonconvex CBF formulations are shown in Figures \ref{fig:ex_double_integrator_v1} and \ref{fig:ex_double_integrator_v2}, respectively.
The optimization problems run at each iteration returned feasible solutions in both simulations.
Figures \ref{fig:ex_double_integrator_v1} and \ref{fig:ex_double_integrator_v2} show the trajectory of the double integrators, the unsafe sets given by the obstacles, and the buffer sets, which correspond to the unsafe sets considered by the corresponding CBF formulations and facilitated by the buffer distances $\epsilon, \epsilon_{\mathfrak{C}}, \epsilon_{\mathfrak{El}}, \epsilon_{\mathfrak{P}}$ and the buffer ratio $\epsilon_{\mathfrak{Sp}}.$
The following can be concluded from both figures:
\begin{itemize}
\item The boundaries of the buffer sets resulting from the PDTE-MCBF formulation are more equidistant from the obstacle boundaries than the boundaries of the buffer sets resulting from the nonconvex CBF formulation;
this is better observed when comparing the buffer sets corresponding to the pentagon and spectahedron obstacles in both figures. 
\item The buffer sets resulting from the PDTE-MCBF formulation are similar to the Minkowski sum of the unsafe sets and a circle of radius $\epsilon$ m.
\item The tracking and obstacle avoidance performances of both CBF formulations are very similar.
\end{itemize}
Finally, Table \ref{tab:ex1_avg_runtime_per_iter} shows the average runtime per iteration for all CBF formulations for each of the simulations. In the case of PDTE-MCBF, the runtime without calculation of projected points refers to the runtime for solving \eqref{eq:ex1_pdte_mcbf} without calculating the projections $P_{\overline{S}_i}$, and the runtime with calculation of projected points includes the runtime for solving \eqref{eq:ex1_pdte_mcbf} considering the entirety of Algorithm \ref{alg:main_algorithm}.
The results in Table \ref{tab:ex1_avg_runtime_per_iter} show that the total runtime of the PDTE-MCBF formulation is more than an order of magnitude faster than the nonconvex CBF formulation.

\begin{figure}[!ht]
    \centering
    \includegraphics[width = \columnwidth]{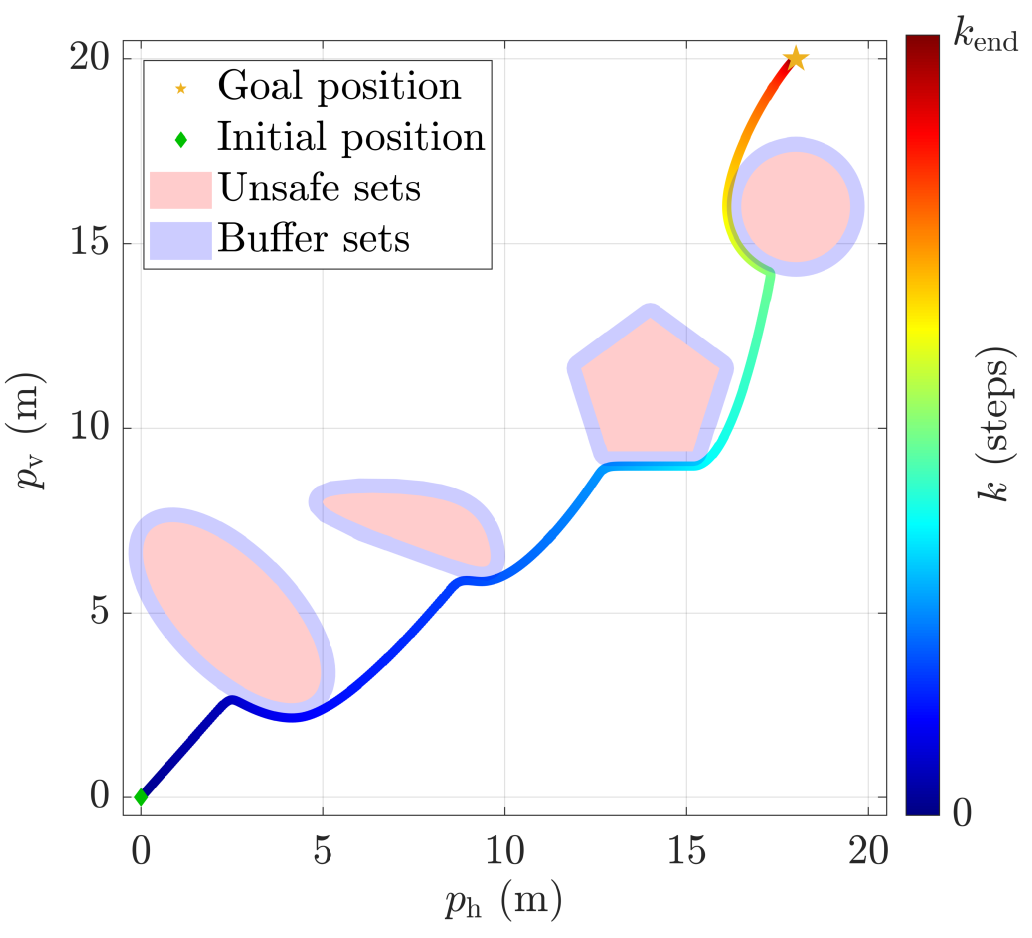}
    \caption{
    \textbf{Discretized double integrators with PDTE-MCBF formulation.}
    The figure shows the trajectory of an agent modeled by double integrators for all $k \in [0, k_{\rm end}]$ with $k_{\rm end} = 1200,$ the unsafe sets given by the obstacles, and the buffer sets, which are obtained by setting
    the buffer distance $\epsilon = 0.4.$
    }
    \label{fig:ex_double_integrator_v1}
\end{figure}

\begin{figure}[!ht]
    \centering
    \includegraphics[width = \columnwidth]{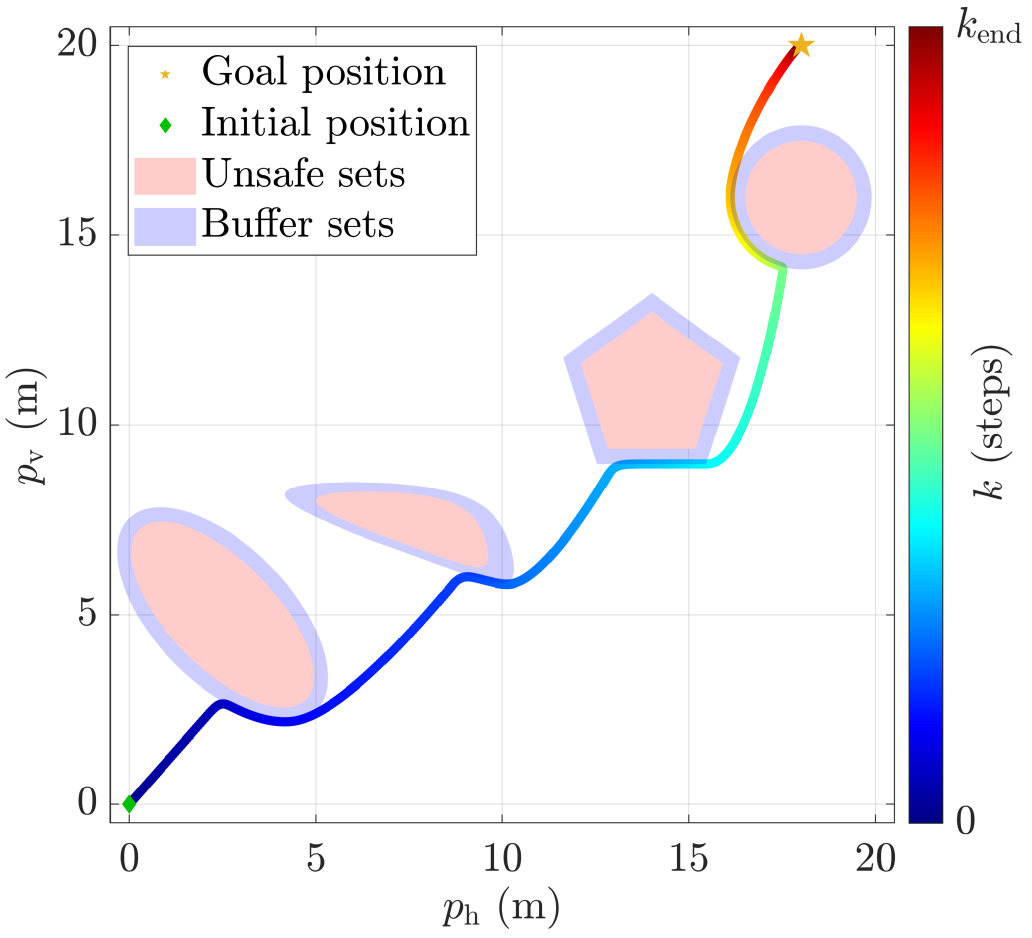}
    \caption{
    \textbf{Discretized double integrators with nonconvex CBF formulation.}
    This figure shows the trajectory of an agent modeled by double integrators for all $k \in [0, k_{\rm end}]$ with $k_{\rm end} = 1200,$ the unsafe sets given by the obstacles, and the buffer sets, which are obtained by setting 
    the buffer distances $\epsilon_{\mathfrak{C}} = \epsilon_{\mathfrak{El}} = \epsilon_{\mathfrak{P}} = 0.4$ and the buffer ratio $\epsilon_{\mathfrak{Sp}} = 0.25.$
    }
    \label{fig:ex_double_integrator_v2}
\end{figure}

\begin{table}[ht]
    \centering
    \renewcommand{\arraystretch}{1.2}
    \caption{Average runtime per iteration for all CBF formulations. The runtime without calculation of projected points refers to the runtime for solving \eqref{eq:ex1_pdte_mcbf} without calculating the projections $P_{\overline{S}_i}.$ The runtime with calculation of projected points includes the runtime for solving \eqref{eq:ex1_pdte_mcbf} considering the the entirety of Algorithm \ref{alg:main_algorithm}.}
    \resizebox{\columnwidth}{!}{%
    \begin{tabular}{|l|c|}
    \hline
    \rowcolor{cyan!70!black}
    \textcolor{white}{CBF formulation} 
    & 
    \textcolor{white}{$\begin{array}{c} \mbox{Average runtime} \\ \mbox{per iteration (s)}\end{array}$}
    \\
    \hhline{==}
        $\begin{array}{l} \mbox{PDTE-MCBF} \\ \mbox{(without calculation of projected points)}\end{array}$
        & 
        1.036 $\times 10^{-3}$
    \\
    \hline
        $\begin{array}{l} \mbox{PDTE-MCBF} \\ \mbox{(with calculation of projected points)}\end{array}$
        &
        6.002 $\times 10^{-2}$
    \\
    \hline
        \hspace{0.2em} Nonconvex CBF
        &
        2.543
    \\
    \hline
    \end{tabular}
    }
    \label{tab:ex1_avg_runtime_per_iter}
\end{table}

\subsection{Bicopter lateral flight}\label{subsec:ex_multicopter}

Consider the bicopter in the vertical plane shown in Figure \ref{fig:bicopter_diagram}, which consists of a rigid frame with two rotors that generate thrust along their respective axes. 
The bicopter has mass $m$, center of mass $c$, moment of inertia $J$ about $c$, and the distance between the rotors is $\ell_{\rm mc}$. 
Let $T_1, T_2$ denote the thrusts produced by the left and right rotors, respectively, as shown in Figure \ref{fig:bicopter_diagram}. 
Define the total thrust $T \isdef T_1 + T_2$ and the total moment $\tau \isdef (T_1 - T_2)/\ell_{\rm mc}$. 
Then, the dynamics of the bicopter in the vertical plane are then given by
\small
\begin{align}
    \dot{p}_\rmh &= v_\rmh, &
    \dot{v}_\rmh &= \frac{T}{m} \sin \theta,  \label{eq:bicopter_dyn_1} \\ 
    \dot{p}_\rmv &= v_\rmv, &
    \dot{v}_\rmv &= -\frac{T}{m} \cos \theta + g, \\
    \dot{\theta} &= \omega, &
    \dot{\omega} &= \frac{\tau}{J}, \label{eq:bicopter_dyn_6}
\end{align}
\normalsize
where $p_\rmh, p_\rmv \in \BBR$ are the horizontal and vertical positions of $c,$ respectively, $v_\rmh, v_\rmv \in \BBR$ are the horizontal and vertical velocities of $c,$ respectively, $\theta \in \BBR$ is the bicopter tilt, $\omega \in \BBR$ is the bicopter angular velocity, and $g$ is the acceleration due to gravity.
%

\begin{figure}[!ht]
    \centering
    \includegraphics[width = 0.6\columnwidth]{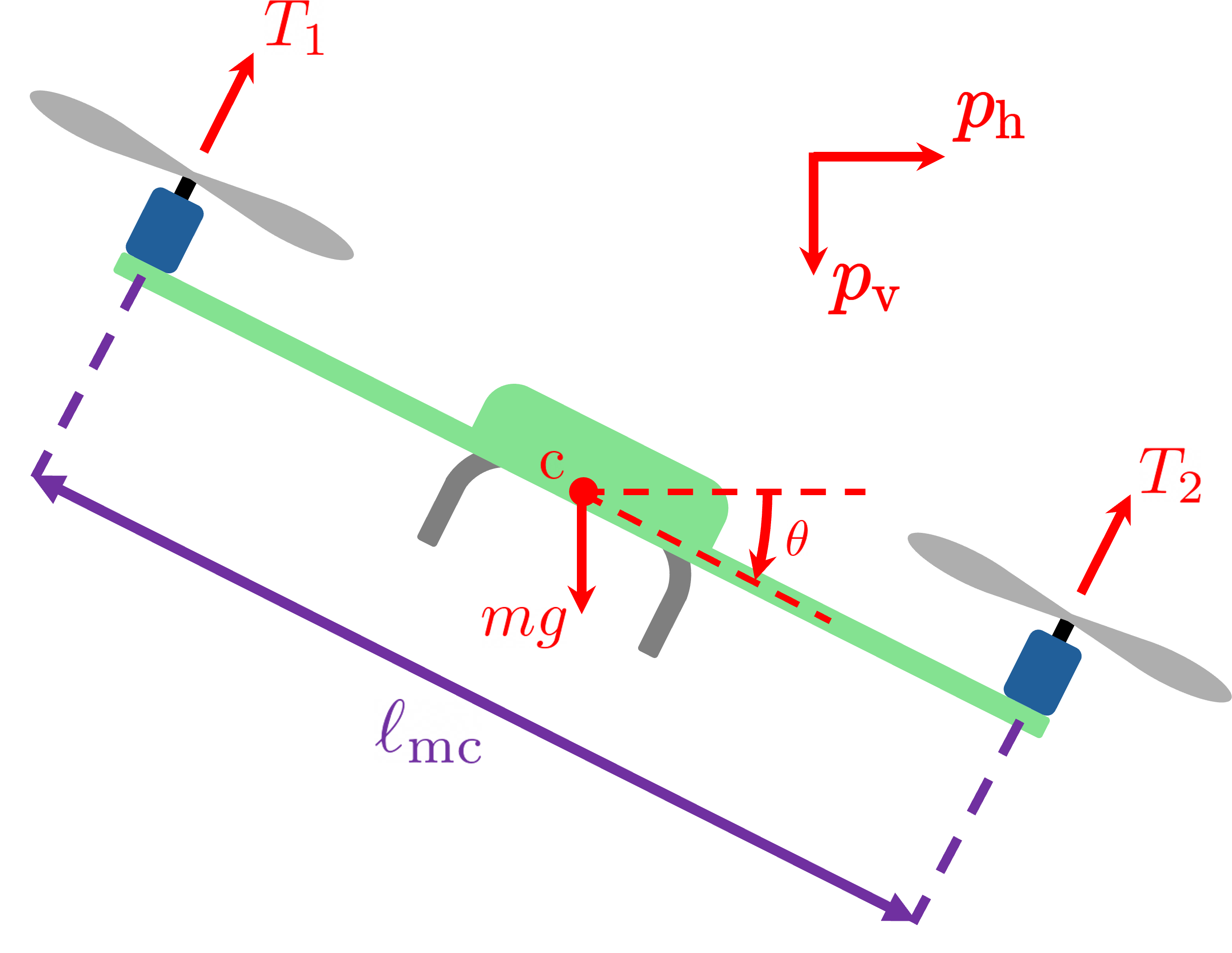}
    \caption{A bicopter system whose motion is constrained in a vertical plane.}
    \label{fig:bicopter_diagram}
\end{figure}

A discrete-time controller is implemented to control the continuous-time dynamics shown in \eqref{eq:bicopter_dyn_1}-\eqref{eq:bicopter_dyn_6}.
Hence, the states $p_\rmh, v_\rmh, p_\rmh, v_\rmh, \theta, \omega$ are sampled to produce the sampled states
\small
\begin{align*}
    p_{\rmh, k} &\isdef p_\rmh (k T_\rms), & v_{\rmh, k} &\isdef v_\rmh (k T_\rms), \\
    p_{\rmv, k} &\isdef p_\rmv (k T_\rms), & v_{\rmv, k} &\isdef v_\rmv (k T_\rms), \\
    \theta_k &\isdef \theta (k T_\rms), & \omega_k &\isdef \omega (k T_\rms),
\end{align*}
\normalsize
where $k \ge 0$ is the discrete-time step, and $T_\rms > 0$ is the sampling time.
The controller generates the total thrust $T_k \ge 0$ and the total torque $\tau_k \in \BBR.$
The continuous-time signals $T$ and $\tau$ applied to the bicopter are generated by applying a zero-order hold operation to $T_k$ and $\tau_k,$ that is, for all $k\ge0,$ 
\begin{equation}
    T(t) = T_k, \ \tau(t) = \tau_k, \mbox{ for all } t \in [kT_\rms, (k+1) T_\rms).
\end{equation}

The nominal controller is designed so that $p_{\rmh, k}$ and $p_{\rmv, k}$ follow reference signals $r_{\rmh, k}$ and $r_{\rmv, k},$ respectively, such that the objective of the nominal controller is to minimize $\sum_{k = 0}^{\infty} \left\Vert \matl r_{\rmh, k} & r_{\rmv, k} \matr^\rmT - \matl p_{\rmh, k} & p_{\rmv, k} \matr^\rmT \right\Vert.$
For this purpose, the inner-loop, outer-loop control architecture shown in Figure \ref{bicopter_cbf_blk_diag} is adopted.
An advantage of this architecture is that it allows the dynamics shown in \eqref{eq:bicopter_dyn_1}-\eqref{eq:bicopter_dyn_6} to be decoupled into linear systems, such that the resulting decoupled, discretized dynamics are given by
\begin{align}
    x_{\rmh, k+1} & = A_{\rm pos} x_{\rmh, k} + B_{\rm pos} u_{\rmh, k}, \label{eq:bicopter_dsg_1}\\
    x_{\rmv, k+1} & = A_{\rm pos}x_{\rmv, k} + B_{\rm pos} u_{\rmv, k}, \\
    x_{{\rm att}, k+1} & = A_{\rm att} x_{{\rm att}, k} + B_{\rm att} \tau_k, \label{eq:bicopter_dsg_3}
\end{align}
where $x_{\rmh, k} \isdef \matl p_{\rmh, k} & v_{\rmh, k} \matr^\rmT,$ $x_{\rmv, k} \isdef \matl p_{\rmv, k} & v_{\rmv, k} \matr^\rmT,$ $x_{{\rm att}, k} \isdef \matl \theta_k & \omega_k \matr^\rmT,$ $u_{\rmh, k}, u_{\rmv, k} \in \BBR$ are horizontal and vertical acceleration commands, respectively, and 
\begin{align*}
    A_{\rm pos} \isdef A_{\rm att} &\isdef \matl 1 & T_\rms \\ 0 & 1 \matr, \\
    B_{\rm pos} \isdef \matl T_\rms^2 / (2 m) \\ T_\rms / m \matr, & \quad B_{\rm att} \isdef \matl T_\rms^2 / (2 J) \\ T_\rms / J \matr.
\end{align*}
The dynamics shown in \eqref{eq:bicopter_dsg_1}--\eqref{eq:bicopter_dsg_3} only hold near hover conditions ($\theta \approx 0$ deg) and are used to design the outer-loop and inner-loop controllers

Let the outer-loop controller $G_{\rmc, {\rm ol}}$ be given by two LQR controllers for the horizontal and vertical states separately, such that
\begin{align}
    u_{{\rm nom}, \rmh , k} &= f_{{\rm lqr}, \rmi, \rmh} \left( \matl r_{\rmh, k} - p_{\rmh, k} \\ -v_{\rmh, k} \matr \right), \\
    u_{{\rm nom}, \rmv , k} &= f_{{\rm lqr}, \rmi, \rmv} \left( \matl r_{\rmv, k} - p_{\rmv, k} \\ -v_{\rmv, k} \matr \right),
\end{align}
where $f_{{\rm lqr}, \rmi, \rmh}, f_{{\rm lqr}, \rmi, \rmv}$ implement LQR controllers with integrator states and different sets of gains;
Algorithm 1 in \cite{predictive_CBF_2025} shows an implementation the LQR controllers mentioned above.
As mentioned at the beginning of this section, more details of the LQR controllers are omitted for brevity and since the main focus of the paper is the performance of the CBF.

Next, in this example, the function $f_{\rm cbf}$ is a safety filter that implements either the PDTE-MCBF formulation or the nonconvex CBF formulation.
The inputs of $f_{\rm cbf}$ are $u_{{\rm nom}, k} \isdef \matl u_{{\rm nom}, \rmh , k} & u_{{\rm nom}, \rmv , k} \matr^\rmT$ and $x_k \isdef \matl x_{\rmh, k}^\rmT & x_{\rmv, k}^\rmT \matr^\rmT,$ and the output of $f_{\rm cbf}$ is $u_k \isdef \matl u_{\rmh , k} & u_{\rmv , k} \matr^\rmT.$
In this example, $C_{\rm pos} = \matl 1 & 0 & 0 & 0 \\ 0 & 0 & 1 & 0\matr.$
The implementation details for both CBF formulations are given later.

The outer and inner loops are linked by a nonlinear mapping function $f_{\rm map}$ that can be used to obtain the thrust $T_k$ and a reference tilt value $\theta_{\rmr, k}$ from $u_k,$ such that
\begin{equation}
    \matl T_k \\ \theta_{\rmr, k} \matr = f_{\rm map} (u_k) = \matl \sqrt{ u_{\rmh, k}^2 + \left(mg -u_{\rmv, k}\right)^2 } \\ {\rm atan2}(u_{\rmh, k}, \ mg - u_{\rmv, k}) \matr. \label{eq:ex2_f_map}
\end{equation}
Then, the inner-loop controller $G_{\rmc, {\rm il}}$ is given by a LQR controller for the attitude states, such that
\begin{align}
    \tau_k = f_{{\rm lqr}, \rmi, {\rm att}} \left( \matl \theta_{\rmr, k} - \theta_k \\ -\omega_k \matr \right),
\end{align}
where $f_{{\rm lqr}, \rmi, {\rm att}}$ implements the LQR controller whose implementation is shown in Algorithm 1 in \cite{predictive_CBF_2025}.
As mentioned at the beginning of this section, more details of the LQR controllers are omitted for brevity and since the main focus of the paper is the performance of the CBF.

 \begin{figure} [h!]
    \centering
    \resizebox{\columnwidth}{!}{%
    \begin{tikzpicture}[>={stealth'}, line width = 0.25mm]

    \node [input, name=ref]{};
    \node [sum, fill=green!20, right =0.75cm of ref] (sum2) {};
    \node[draw = none] at (sum2.center) {$+$};
    \node [smallblock, fill=green!20, rounded corners, right = 0.4cm of sum2 , minimum height = 0.6cm , minimum width = 0.7cm] (controller) {$G_{\rmc, {\rm ol}}$};
    \node [smallblock, fill=green!20, rounded corners, right = 0.75cm of controller.east , minimum height = 0.6cm , minimum width = 0.7cm] (cbf) {$f_{\rm cbf}$};
    \node [smallblock, fill=green!20, rounded corners, right = 0.75cm of cbf.east , minimum height = 1cm, minimum width = 0.7cm] (map) {$f_{\rm map}$};
    \node [smallblock, fill=green!20, rounded corners, below right = 1.25cm and 1.25 cm of map.center , minimum height = 0.6cm, minimum width = 0.7cm] (controller_IL) {$G_{\rmc, {\rm il}}$};
    \node [sum, fill=green!20, left = 0.3cm of controller_IL.west] (sum3) {};
    \node[draw = none] at (sum3.center) {$+$};
    \draw[->] ([xshift = -0.8cm]sum2.west) -- node [above, xshift = -0.2cm] {\scriptsize $\matl r_{\rmh, k} \\ 0 \\ r_{\rmv, k} \\ 0 \matr$} (sum2.west);
    \draw[->] ([yshift = -1.5cm]sum2.south) -- node [xshift = -0.5cm, yshift = -0.2cm] {\scriptsize $\matl p_{\rmh, k} \\ v_{\rmh, k} \\ p_{\rmv, k} \\ v_{\rmv, k} \matr$} node [xshift = 0.25 cm, yshift = 0.5cm] {$-$} (sum2.south);
    \draw[->] ([yshift = -1.3cm]sum2.south) -| (cbf.south);
    \draw[->] (sum2.east) -- (controller.west);
    \draw[->] (controller.east) -- node [above] {\small $u_{\rmr, k}$} (cbf.west);
    \draw[->] (cbf.east) -- node [above] {\small $u_k$} (map.west);
    \draw[->] ([yshift = -0.25cm]map.east) -| node [near end, xshift = 0.4cm, yshift = 0.1cm]{\scriptsize $\matl \theta_{\rmr, k} \\ 0 \matr$}(sum3.north);
    \draw[->] ([xshift = -2em]sum3.west) -- node [above, xshift = -0.25cm]{\scriptsize $\matl \theta_k \\ \omega_k \matr$} node [below, xshift = 0.15cm, yshift = 0.05cm]{$-$} (sum3.west);
    \draw[->] (sum3.east) -- (controller_IL.west);
    \draw[->] ([yshift = 0.25cm]map.east) -- node[above, near end, xshift = 0.2cm]{\small$T_k$} ([xshift = 2.5cm, yshift = 0.25cm]map.east);
    \draw[->] (controller_IL.east) -- ([xshift = 0.2cm]controller_IL.east) |- node[above, near end, xshift = -0.05cm]{\small$\tau_k$} ([xshift = 2.5cm, yshift = -0.25cm]map.east);
    \end{tikzpicture}
    }  
    \caption{Inner-loop, outer-loop control architecture used for bicopter tracking control.}
    \label{bicopter_cbf_blk_diag}
\end{figure}
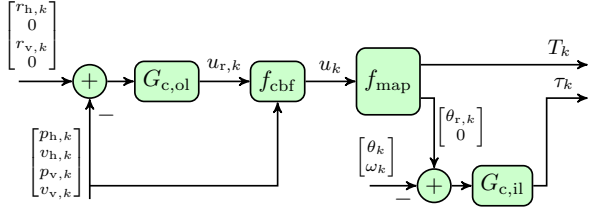

Next, safe set details are discussed.
In this example, the only obstacle considered is a circle obstacle  with $p_\mathfrak{C} = \matl 2 & -0.75 \matr^\rmT$ m, and $r_\mathfrak{C} = 1$ m,
which entails constraints on the position states, addressed by the previously introduced CBF formulations.
Aside from these constraints, velocity constraints and reference tilt constraints are considered to increase the likelihood that the safe inputs obtained by the CBFs do not cause a large deviation from hover conditions, which are assumed in the design of both the outer-loop and the inner-loop controllers.
Hence, for all steps $k\ge 0,$ the following velocity and reference tilt constraints are considered 
\begin{align}
    v_{\rmh, k} &\in [-v_{\rmh, {\rm max}}, v_{\rmh, {\rm max}}] \mbox{ m/s,} \label{eq:ex2_vh_constraint} \\
    v_{\rmv, k} &\in [-v_{\rmv, {\rm max}}, v_{\rmv, {\rm max}}] \mbox{ m/s,} \label{eq:ex2_vv_constraint} \\
    \theta_{\rmr, k} &\in [-\theta_{\rmr, {\rm max}}, \theta_{\rmr, {\rm max}}] \mbox{ deg,} \label{eq:ex2_tilt_r_constraint}
\end{align}
where $v_{\rmh, {\rm max}}, v_{\rmv, {\rm max}}, \theta_{\rmr, {\rm max}} > 0$ are the maximum absolute values of the horizontal and vertical velocities, and the reference tilt, respectively.
The CBF associated with the constraints \eqref{eq:ex2_vh_constraint}, \eqref{eq:ex2_vv_constraint} is given by
\begin{equation}
    h_{v}(x_k) = \matl 0 & 1 & 0 & 0 \\
                       0 & -1 & 0 & 0 \\
                       0 & 0 & 0 & 1 \\
                       0 & 0 & 0 & -1\matr x_k + 
                       \matl v_{\rmh, {\rm max}} \\ v_{\rmh, {\rm max}} \\ v_{\rmv, {\rm max}} \\ v_{\rmv, {\rm max}} \matr,
\end{equation}
and thus the safe input constraints associated with the velocity states are given by
\begin{equation}
    h_{v}(A x_k + B u_k) \ge (1-\gamma) h_{v}(x_k), \label{eq:ex2_vel_CBF_constraint}
\end{equation}
where $\gamma \in [0, 1],$ $A \isdef {\rm diag} (A_{\rm pos}, A_{\rm pos}),$ and $B \isdef {\rm diag} (B_{\rm pos}, B_{\rm pos}),$ which results in an affine, convex constraint.
Next, it follows from \eqref{eq:ex2_f_map} that safe input constraints associated with the reference tilt constraint \eqref{eq:ex2_tilt_r_constraint} is given by
\begin{equation}
    \matl 1 & -\rmt_{\theta_{\rmr, {\rm max}}} \\ -1 & -\rmt_{\theta_{\rmr, {\rm max}}} \matr u_k \ge - m g \ \rmt_{\theta_{\rmr, {\rm max}}} \matl 1 \\ 1 \matr, \label{eq:ex2_tilt_constraint}
\end{equation}
where $\rmt_{\theta_{\rmr, {\rm max}}} \isdef \tan \left(\tfrac{\pi}{180} \theta_{\rmr, {\rm max}}\right).$

Two different CBF formulations are compared in the following example, the PDTE-MCBF formulation and a nonconvex CBF formulation:
\begin{itemize}
\item For the \textbf{PDTE-MCBF} formulation,
the control input $u_k$ is obtained by solving the constrained linear least-squares optimization problem
\begin{align}
    u_k &= \argmin_{\nu \in \BBR} \lVert \nu - u_{{\rm nom}, k} \rVert_2, \label{eq:ex2_pdte_mcbf} \\
    \mbox{s.t. } & \begin{matrix} H^\epsilon(A x_k + B \nu, x_k) \ggeq (1-\gamma) H^\epsilon(x_k,x_k) \\ \eqref{eq:ex2_vel_CBF_constraint} \\ \eqref{eq:ex2_tilt_constraint} \end{matrix}, \nn
\end{align}
where $H^\epsilon$ is given by \eqref{eq:h_prime}, \eqref{eq:H_prime_set} with $Q = 1,$ and index $i = 1$ corresponding to the circle obstacle, $a_i, b_i^\epsilon$ given by \eqref{eq:a_i_b_i_pos}, and the projection $P_{\overline{S}_i}$ are calulated using \eqref{eq:convex_projection} with the corresponding $\overline{H}_i$ for the circle obstacle given by \eqref{eq:unsafe_set_matrix_circle}; in this case, the projection has a closed-loop solution given by \eqref{eq:proj_circle}.
Algorithm \ref{alg:main_algorithm} is used to formulate and solve \eqref{eq:ex2_pdte_mcbf}.
This constrained linear least-squares optimization problem is solved at each iteration using the \texttt{lsqlin} Matlab solver with the \texttt{active-set} algorithm option.
This problem is warm-started by providing the nominal input $u_{{\rm nom}, k}$ as an initial solution.
For this example, $\gamma = 0.1,$ and $\varepsilon = 0.2.$ \\
\item For the \textbf{nonconvex CBF} formulation,
the control input $u_k$ is obtained by solving the nonconvex, semidefinite optimization problem
\begin{align}
    u_k &= \argmin_{\nu \in \BBR} \lVert \nu - u_{{\rm nom}, k} \rVert_2, \label{eq:ex2_ncvx_CBF_prob} \\
    \mbox{s.t. } & \begin{matrix} h_{\mathfrak{C}} (A x_k + B \nu) \ge (1 - \gamma) h_{\mathfrak{C}} (x_k) \\ \eqref{eq:ex2_vel_CBF_constraint} \\ \eqref{eq:ex2_tilt_constraint} \end{matrix}, \nn
\end{align}
with $h_{\mathfrak{C}}$ given by \eqref{eq:h_circle}.
This nonconvex, nonlinear optimization problem is solved at each iteration using two different solvers in separate simulations.
The first considered solver is the \texttt{fmincon} Matlab solver with the \texttt{sqp} algorithm option and the maximum number of iterations set to $10^5.$
The second considered solver is the \texttt{bmibnb} solver from the \texttt{YALMIP} toolbox, interfaced with \texttt{MOSEK}.
In both instances, this problem is warm-started by providing the nominal input $u_{{\rm nom}, k}$ as an initial solution.
For this example, $\gamma = 0.1,$ and $\epsilon_{\mathfrak{C}} = 0.2.$
\end{itemize}

\begin{remark} \label{rem:solvers}
    The \texttt{lsqlin} and \texttt{fmincon} solvers can converge to unfeasible results that do not meet the optimization constraints and output these values.
    In contrast, the \texttt{YALMIP}-based solver does not output a numerical value if it determines that the problem is unfeasible. 
    In this case, the \texttt{YALMIP}-based solver is used a second time to solve \eqref{eq:ex2_ncvx_CBF_prob} with slack variables.
    If the \texttt{YALMIP}-based determines that the problem with slack variables is also unfeasible, the nominal control input is given as the output of the nonconvex, nonlinear optimization problem.
\end{remark}

For all simulations, $x (t) = 0, T (t) = mg, \tau (t) = 0$ for all $t \le 0,$ 
$u_{\rmh, k} = u_{\rmv, k} = \theta_{\rmr, k} = 0$ for all $k \le 0,$
$r_{\rmh, k} \equiv 18$ m, $r_{\rmv, k} \equiv 20$ m,
and $T_{\rm s} = 0.005$ s.
The simulations are run for all $t \in [0, t_{\rm end}]$ s, with $t_{\rm end} = 10$ s.
The Simulink\footnote{Special instructions for interfacing \texttt{YALMIP}-based solvers with Simulink are given in \href{https://yalmip.github.io/example/simulink/}{https://yalmip.github.io/example/simulink/}} environment is used for numerical simulation with the \texttt{ode45} solver to solve the bicopter continuous-time dynamics. 
The discrete-time dynamics corresponding to the controller and either of the CBF formulations are evaluated every $T_\rms$ seconds.

The results of the simulations with the PDTE-MCBF and nonconvex CBF formulations are shown in Figure \ref{fig:bicopter_ex2_closed_loop_response}.
Furthermore, the feasibility of the solutions that each solver converged to at each iteration step for the PDTE-MCBF and nonconvex CBF formulations is shown in Figure \ref{fig:bicopter_ex2_feasibility};
for the \texttt{YALMIP}-based solver, the problem solution is considered unfeasible if the first run without slack variables is considered unfeasible, as discussed in Remark \ref{rem:solvers}.
Finally, Table \ref{tab:ex2_bicopter_avg_runtime_per_iter} shows the average runtime per iteration for all CBF formulations for each of the solvers used in the simulations;
for the \texttt{YALMIP}-based solver, the runtime per iteration accounts for the runtime of the optimization problem with slack variables in the iterations where the first run without slack variables is considered unfeasible, as discussed in Remark \ref{rem:solvers}.

In all cases, the optimization problem becomes unfeasible as the bicopter enters the buffer set.
The safety filter is unable to prevent the bicopter from entering the buffer set since it only modifies the outer-loop controller outputs, which results in the inner-loop controller and the tilt angle dynamics becoming unmodeled input dynamics, which, in turn, can lead to safety violations, as discussed in \cite{seiler2021}.
However, while the nonconvex CBF formulation with the \texttt{fmincon} solver is unable to exit the buffer set and enters the unsafe set, the PDTE-MCBF formulation and the nonconvex CBF formulation with the \texttt{YALMIP}-based solver are able to recover from entering the buffer set and go back to the safe set before approaching the goal.
Furthermore, as shown in Table \ref{tab:ex2_bicopter_avg_runtime_per_iter}, the total runtime of the PDTE-MCBF formulation is faster than the nonconvex CBF formulation with the \texttt{fmincon} solver and is more than an order of magnitude faster than the nonconvex CBF formulation with the \texttt{YALMIP}-based solver.
A video illustrating these simulation results in the accompanying animation at
\href{https://youtu.be/eYDTqa__lE4}{https://youtu.be/eYDTqa\_\_lE4}.

\begin{figure}[!ht]
    \centering
    \includegraphics[width = 0.8\columnwidth]{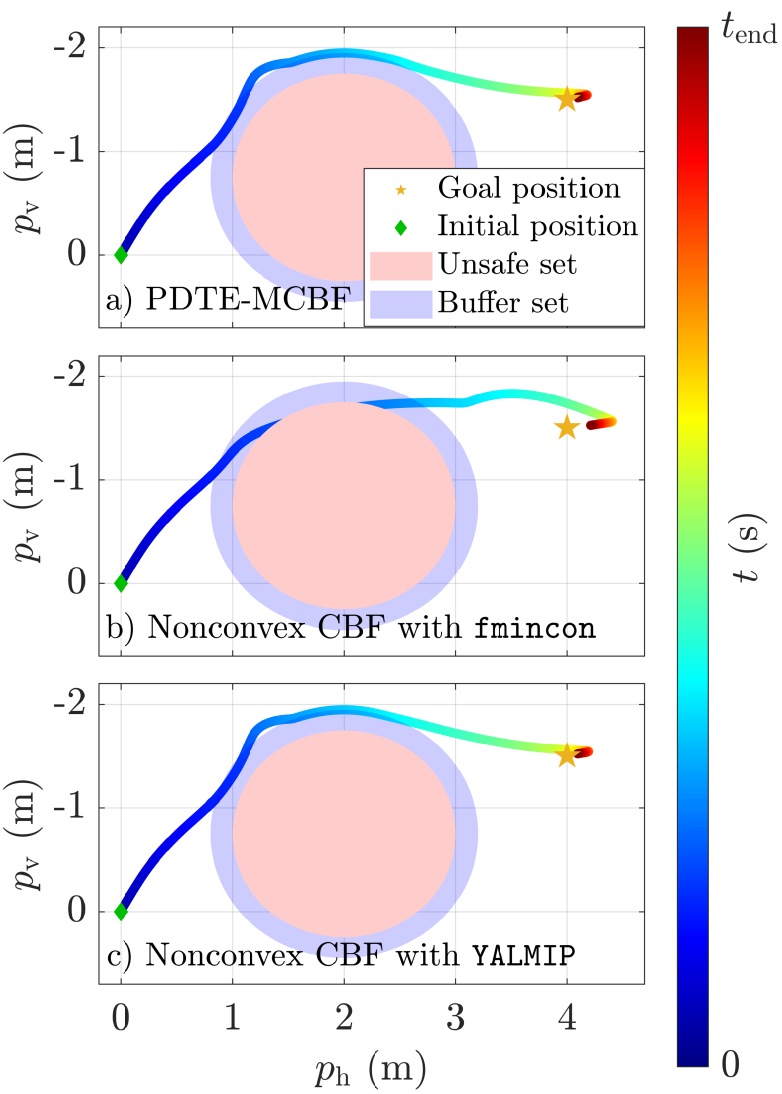}
    \caption{
    \textbf{PDTE-MCBF and nonconvex CBF closed-loop performance comparison in bicopter lateral flight system.}
    This figure shows the trajectory of the bicopters for all $t \in [0, t_{\rm end}]$ s, with $t_{\rm end} = 10$ s, the unsafe set given by the circle, and the buffer sets, which correspond to the unsafe sets considered by the CBF formulations and facilitated by the buffer distances $\epsilon = \epsilon_{\mathfrak{C}} = 0.2.$
    The plots show the results for a) the PDTE-MCBF formulation solved with \texttt{lsqlin}, b) the nonconvex CBF formulation solved with \texttt{fmincon}, and c) the nonconvex CBF formulation solved with the \texttt{YALMIP}-based solver.
    }
    \label{fig:bicopter_ex2_closed_loop_response}
\end{figure}

\begin{figure}[!ht]
    \centering
    \includegraphics[width = 0.85\columnwidth]{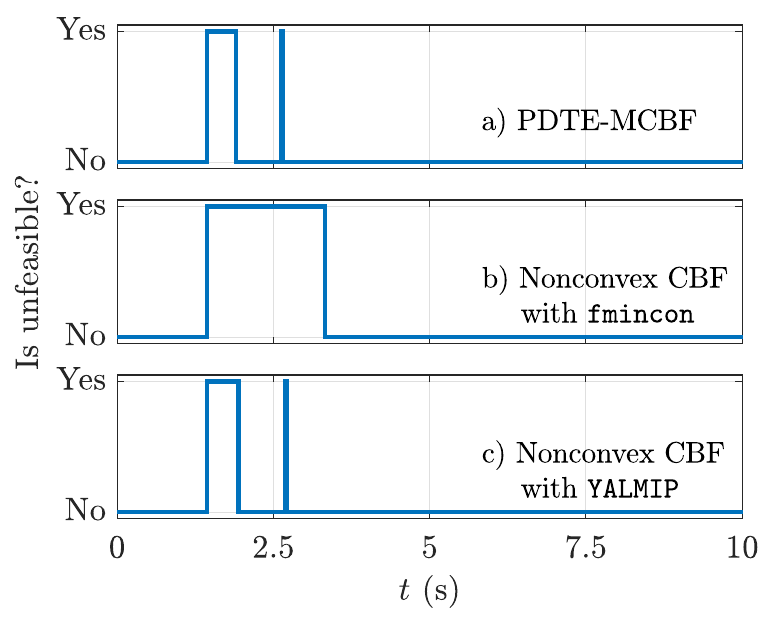}
    \caption{
    \textbf{PDTE-MCBF and nonconvex CBF closed-loop unfeasibility comparison in bicopter lateral flight system.}
    The iterations for which the implemented solvers converged to unfeasible solutions is shown in these plots.
    The plots show the results for a) the PDTE-MCBF formulation solved with \texttt{lsqlin}, b) the nonconvex CBF formulation solved with \texttt{fmincon}, and c) the nonconvex CBF formulation solved with the \texttt{YALMIP}-based solver.
    For the \texttt{YALMIP}-based solver, the problem solution is considered unfeasible if the first run without slack variables is considered unfeasible, as discussed in Remark \ref{rem:solvers}.
    }
    \label{fig:bicopter_ex2_feasibility}
\end{figure}

\begin{table}[ht]
    \centering
    \renewcommand{\arraystretch}{1.2}
    \caption{Average runtime per iteration for all CBF formulations with their respective solvers.}
    \resizebox{\columnwidth}{!}{%
    \begin{tabular}{|l|c|}
    \hline
    \rowcolor{cyan!70!black}
    \textcolor{white}{CBF formulation} 
    & 
    \textcolor{white}{$\begin{array}{c} \mbox{Average runtime} \\ \mbox{per iteration (s)}\end{array}$}
    \\
    \hhline{==}
        PDTE-MCBF with \texttt{lsqlin}
        & 
        5.251 $\times 10^{-6}$
    \\
    \hline
        Nonconvex CBF with \texttt{fmincon}
        &
        3.113 $\times 10^{-5}$
    \\
    \hline
        Nonconvex CBF with \texttt{YALMIP}
        &
        9.414 $\times 10^{-1}$
    \\
    \hline
    \end{tabular}
    }
    \label{tab:ex2_bicopter_avg_runtime_per_iter}
\end{table}

\section{Conclusion}
\label{sec:conclusions}

This paper introduced the notion of exponential discrete-time matrix control barrier functions, presented a novel method to compute safe control inputs using solely convex optimization called \emph{Projection-based Discrete-Time Exponential Matrix Control Barrier Function} (PDTE-MCBF), and demonstrated methods to handle disjunctive boolean constraints using discrete-time matrix control barrier functions. 
%
%
Numerical simulations showed the following advantages of the proposed method relative to nonconvex optimization methods:
\begin{itemize}
    \item The tracking and obstacle avoidance performances of both the PDTE-MCBF and the nonconvex CBF formulations are very similar.
    \item The boundaries of the buffer sets resulting from the PDTE-MCBF formulation are more equidistant from the obstacle boundaries than The boundaries of the buffer sets resulting from the nonconvex CBF formulation.
    \item The total runtime of the PDTE-MCBF formulation is more than an order of magnitude faster than the nonconvex CBF formulation.
\end{itemize}
Future work will investigate extensions of our results to multi-agent systems and systems with noise and uncertainty.

\section{Appendix: Matrix Convexity}
\label{sec:Appendix}

The following is an abbreviated overview of matrix convexity as defined in \cite[Sec. 3.6.2]{boyd2004convex}. More precisely, this definition of matrix convexity is simply \(\mathbb{S}^p\)-convexity for the proper cone \(\mathbb{S}^p\).

Let \(H : \R^n \to \mathbb{S}^p\) be a function with output in the space of symmetric real-valued \(p \times p\) matrices.

\begin{define}[{Matrix Convexity}]
    \label{def:matrix_convexity}
    A mapping \(H : \R^n \to \mathbb{S}^p\) is matrix convex if for all \(x,y \in \R^n\) and for all \(\theta \in [0,1]\), the following holds:
    \begin{align}
        H(\theta x + (1-\theta)y) \gleq \theta H(x) + (1 - \theta)H(y).
    \end{align}
    Equivalently, \(H\) is matrix convex if and only if \(x \mapsto y^\intercal H(x) y\) is convex in \(x\) for all \(y \in \R^p\). 
\end{define}

\begin{define}[Matrix Concavity]
    \label{def:matrix_concavity}
    A mapping \(H : \R^n \to \mathbb{S}^p\) is matrix concave if for all \(x,y \in \R^n\) and for all \(\theta \in [0,1]\), the following holds:
    \begin{align}
        H(\theta x + (1-\theta)y) \ggeq \theta H(x) + (1 - \theta)H(y).
    \end{align}
    Equivalently, \(H\) is matrix concave if and only if \(x \mapsto y^\intercal H(x) y\) is concave in \(x\) for all \(y \in \R^p\). 
\end{define}

\begin{lemma}
    \label{lem:scalar_ineq}
    Let \(A \in \mathbb{S}_+^p\). Then for all \(a,b \in \R\),
    \begin{align}
        a \leq b \iff aA \gleq bA.
    \end{align}
\end{lemma}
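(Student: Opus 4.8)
The plan is to split the biconditional into its two implications and reduce each to an elementary statement about the scalar quadratic forms $y^\intercal A y$, invoking only the definition of the Loewner order (namely that $M \ggeq 0$ iff $y^\intercal M y \geq 0$ for all $y \in \R^p$). No appeal to eigenvalues or to Weyl's inequality is needed; the whole argument is a one-scalar computation done uniformly over test vectors $y$.

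For the forward implication, I would assume $a \leq b$, so $b - a \geq 0$, and note that $bA - aA = (b-a)A$. Since $A \ggeq 0$, for every $y \in \R^p$ we have $y^\intercal\big((b-a)A\big)y = (b-a)\,\big(y^\intercal A y\big) \geq 0$, being a product of two nonnegative scalars. Hence $(b-a)A \ggeq 0$, i.e. $aA \gleq bA$. Note this direction uses nothing about $A$ beyond positive semidefiniteness and holds even when $A = 0$.

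For the reverse implication, I would assume $aA \gleq bA$, equivalently $(b-a)A \ggeq 0$, so that $(b-a)\,\big(y^\intercal A y\big) \geq 0$ for every $y \in \R^p$. The only real obstacle is the degenerate case $A = 0$: there the conclusion $a \le b$ cannot be recovered, so the statement strictly needs $A \neq 0$ (the intended reading, since the lemma is only applied to the nontrivial matrix-valued functions in the preceding definitions). Assuming $A \in \mathbb{S}_+^p \setminus \{0\}$, there exists $y_0 \in \R^p$ with $y_0^\intercal A y_0 > 0$; dividing $(b-a)\,\big(y_0^\intercal A y_0\big) \geq 0$ by this positive scalar gives $b - a \geq 0$, i.e. $a \leq b$. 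Combining the two implications yields the claimed equivalence, and I would flag the $A=0$ caveat either in the proof or by stating the hypothesis as $A \in \mathbb{S}_+^p \setminus \{0\}$.
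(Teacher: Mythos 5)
Your proof is correct and takes a different route from the paper. The paper proves the lemma via the eigendecomposition \(A = U\Lambda U^\intercal\): it writes \((b-a)A = U\pth{(b-a)\Lambda}U^\intercal\), notes \(\lambda_i(A)\geq 0\) and \(b-a\geq 0\), and concludes \((b-a)A \ggeq 0\), i.e.\ \(aA \gleq bA\). You instead work directly with the definition of the Loewner order through the quadratic forms \(y^\intercal A y\), which is slightly more elementary (no spectral theorem needed) and generalizes verbatim to the infinite-dimensional setting. More substantively, the paper's argument only establishes the forward implication \(a\leq b \implies aA \gleq bA\); the reverse implication of the stated ``iff'' is never addressed, and, as you correctly observe, it is in fact false when \(A = 0\) (there \(aA \gleq bA\) holds for all \(a,b\)). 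Your treatment of the converse via a test vector \(y_0\) with \(y_0^\intercal A y_0 > 0\), together with the explicit caveat that the hypothesis should read \(A \in \mathbb{S}_+^p\setminus\{0\}\), is therefore a genuine improvement on the paper's proof. It is worth noting that only the forward direction is ever used downstream (in the matrix-convexity lemma of the Appendix), so the paper's omission is harmless in context, but the lemma as stated needs either your extra hypothesis or a weakening of ``\(\iff\)'' to ``\(\implies\)''.
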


\begin{proof}
    Let \(U\Lambda U^T\) be the eigendecomposition of \(A\). We have
    \begin{align}
        (b-a)A = U\pth{(b-a)\Lambda}U^T.
    \end{align}
    Since \(A \in \mathbb{S}_+^p\), we have \(\lambda_i(A) \geq 0\). In addition, \(b-a \geq 0\) by assumption. It follows that \((b-a)A \ggeq 0\), which holds if and only if \(aA \gleq bA\).
\end{proof}

\begin{lemma}
    Let \(\{f_i : \R^n \to \R\}_{i=1}^q\) be convex functions, let \(\{A_i \in \mathbb{S}_+^p\}_{i=1}^q\) be positive semidefinite matrices, and let \(B \in \mathbb{S}^p\) be a symmetric matrix. Then the following function is matrix convex:
    \begin{align}
        H(x) \triangleq B + \sum_{i=1}^q A_i f_i(x).
    \end{align}
\end{lemma}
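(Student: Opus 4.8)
The plan is to verify matrix convexity directly from Definition~\ref{def:matrix_convexity}, exploiting the pointwise scalar characterization stated there: it suffices to show that $x \mapsto y^\intercal H(x) y$ is convex on $\R^n$ for every fixed $y \in \R^p$. Expanding, $y^\intercal H(x) y = y^\intercal B y + \sum_{i=1}^q \pth{y^\intercal A_i y}\, f_i(x)$. Since each $A_i \in \mathbb{S}_+^p$, the scalar coefficient $c_i \triangleq y^\intercal A_i y$ is nonnegative, while $y^\intercal B y$ is a constant independent of $x$. Hence $y^\intercal H(x) y$ is a nonnegative linear combination of the convex functions $f_i$ plus a constant, and therefore convex in $x$. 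As this holds for all $y \in \R^p$, the function $H$ is matrix convex.

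An equivalent and perhaps cleaner route — the one I would actually write out — argues directly with the Loewner order. Fix $x, z \in \R^n$ and $\theta \in [0,1]$. Convexity of each $f_i$ gives the scalar inequality $f_i(\theta x + (1-\theta)z) \leq \theta f_i(x) + (1-\theta) f_i(z)$. Applying Lemma~\ref{lem:scalar_ineq} with the positive semidefinite matrix $A_i$ converts this into the matrix inequality $A_i f_i(\theta x + (1-\theta)z) \gleq A_i\pth{\theta f_i(x) + (1-\theta) f_i(z)}$. Summing over $i = 1,\ldots,q$ and adding the constant matrix $B$ to both sides — which preserves the Loewner order since $\mathbb{S}_+^p$ is closed under addition — yields $H(\theta x + (1-\theta) z) \gleq B + \sum_{i=1}^q A_i\pth{\theta f_i(x) + (1-\theta) f_i(z)}$. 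Regrouping the right-hand side as $\theta\pth{B + \sum_i A_i f_i(x)} + (1-\theta)\pth{B + \sum_i A_i f_i(z)} = \theta H(x) + (1-\theta) H(z)$ gives precisely the defining inequality of Definition~\ref{def:matrix_convexity}.

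The only points needing (minor) care are that $\mathbb{S}_+^p$ is closed under addition, so summing the termwise matrix inequalities and translating by the fixed symmetric matrix $B$ does not disturb the ordering, and that Lemma~\ref{lem:scalar_ineq} is exactly the tool licensing the passage from a scalar inequality to a matrix inequality against a fixed PSD matrix. There is no genuine obstacle: the result is essentially the observation that, for $A \gleq$-positive, the map $t \mapsto tA$ is linear and Loewner-monotone, hence preserves scalar convexity in the matrix-convex sense, and that matrix convexity is stable under nonnegative combinations and translation by a constant symmetric matrix.
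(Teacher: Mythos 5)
Your second (Loewner-order) argument is exactly the paper's own proof: scalar convexity of each \(f_i\), Lemma \ref{lem:scalar_ineq} to pass to \(A_i f_i(\theta x + (1-\theta)z) \gleq \theta A_i f_i(x) + (1-\theta)A_i f_i(z)\), summation using closure of \(\mathbb{S}_+^p\) under addition, and translation by \(B\). Your first route via the quadratic-form characterization \(x \mapsto y^\intercal H(x) y\) is also correct (it is the technique the paper uses later for Lemma \ref{lem:matrix_convex_form}), so there is nothing to fix.
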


\begin{proof}
    Let \(\theta \in [0,1]\) and \(x,y \in \R^n\).
    By Lemma \ref{lem:scalar_ineq} and the convexity of each \(f_i\) it follows for all \(i=1,\ldots,q\) that
    \begin{align}
        A_i f_i(\theta x + (1-\theta)y) &\gleq A_i \pth{\theta f_i(x) + (1 - \theta)f_i(y)},\\
        &= \theta A_i f_i(x) + (1-\theta) A_i f_i(y).
    \end{align}
    Rearranging yields
    \small
    \begin{align}
        \theta A_i f_i(x) + (1-\theta) A_i f_i(y) - A_i f_i(\theta x + (1-\theta)y) \ggeq 0.
    \end{align}
    \normalsize
    Since the sum of PSD matrices is also PSD, we have
    \small
    \begin{align}
        \sum_{i=1}^q \theta A_i f_i(x) + (1-\theta) A_i f_i(y) - A_i f_i(\theta x + (1-\theta)y) \ggeq 0.
    \end{align}
    \normalsize
    Adding zero in the form \(0 = B - B\) yields
    \begin{align*}
        &\left(\sum_{i=1}^q \theta A_i f_i(x) + (1-\theta) A_i f_i(y) + B \right) \\
        & \hspace{2em}- \left(\sum_{i=1}^q A_i f_i(\theta x + (1-\theta)y) + B\right) \ggeq 0.
    \end{align*}
    Rearranging yields the desired inequality proving convexity:
    \begin{align*}
        &\sum_{i=1}^q A_i f_i(\theta x + (1-\theta)y) + B \gleq \\
        &\hspace{2em} \sum_{i=1}^q \theta A_i f_i(x) + (1-\theta) A_i f_i(y) + B
    \end{align*}
\end{proof}

\begin{lemma}
    \label{lem:matrix_convex_form}
    The mapping \(H : \R^n \to \mathbb{S}^p\) is matrix convex if \(H\) is in the form
    \begin{align}
        H(x) = B + \sum_{j=1}^J A_j f_j(x)
    \end{align}
    for some symmetric \(B \in \mathbb{S}^p\) and \(f_j : \R^n \to \R\), and at least one of the following conditions holds for each \(j \in 1,\ldots,J\):
    \begin{itemize}
        \item[1)] \(f_j\) is affine in \(x\) and \(A_j \in \mathbb{S}^p\),
        \item[2)] \(f_j\) is convex in \(x\) and \(A_j \ggeq 0\),
        \item[3)] \(f_j\) is concave in \(x\) and \(A_j \gleq 0\).
    \end{itemize}
\end{lemma}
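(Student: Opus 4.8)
The plan is to massage $H$ into the exact canonical form handled by the preceding lemma --- a symmetric constant matrix plus a finite sum of positive semidefinite matrices multiplied by convex scalar functions --- and then invoke that lemma. First I would partition $\{1,\ldots,J\}$ into disjoint subsets $J_1,J_2,J_3$ by assigning each $j$ to some $J_k$ for which condition $k$ holds (choosing arbitrarily when several of the conditions are available for the same $j$), so that
\begin{align*}
    H(x) = B + \sum_{j\in J_1} A_j f_j(x) + \sum_{j\in J_2} A_j f_j(x) + \sum_{j\in J_3} A_j f_j(x).
\end{align*}

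Next I would rewrite the $J_3$ and $J_1$ terms in the required form. For $j\in J_3$, concavity of $f_j$ makes $-f_j$ convex and $A_j\gleq 0$ makes $-A_j\ggeq 0$, so $A_j f_j(x) = (-A_j)(-f_j(x))$ is a positive semidefinite matrix times a convex function. For $j\in J_1$, I would use the spectral decomposition to write $A_j = A_j^+ - A_j^-$ with $A_j^+,A_j^-\ggeq 0$; since $f_j$ is affine, both $f_j$ and $-f_j$ are convex, and therefore $A_j f_j(x) = A_j^+ f_j(x) + A_j^-(-f_j(x))$ is a sum of positive semidefinite matrices times convex functions. The $J_2$ terms already have this form. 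After these substitutions, $H(x) = B + \sum_i \widetilde A_i\, \widetilde f_i(x)$ with every $\widetilde A_i\ggeq 0$ and every $\widetilde f_i$ convex, and the preceding lemma yields matrix convexity of $H$ directly.

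The step I expect to require the most care is the affine case $J_1$, where $A_j$ may be indefinite, so the preceding lemma does not apply to that term as stated; the positive/negative-part splitting above resolves it, and as an alternative one can note that an affine $f_j$ turns the defining inequality of Definition \ref{def:matrix_convexity} into an equality for that term irrespective of $A_j$, and simply add these equalities to the genuine inequalities coming from $J_2$ and (the transformed) $J_3$. The only other ingredients --- that a finite sum of matrix convex maps is matrix convex, and that adding the constant $B$ preserves matrix convexity --- follow immediately from Definition \ref{def:matrix_convexity} by summing the defining inequalities and using $\theta B + (1-\theta)B = B$.
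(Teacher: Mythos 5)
Your proposal is correct, but it takes a different route from the paper. The paper proves Lemma \ref{lem:matrix_convex_form} by scalarization: it invokes the equivalent characterization in Definition \ref{def:matrix_convexity} and shows that \(x \mapsto y^\intercal H(x)y = y^\intercal B y + \sum_{j}(y^\intercal A_j y) f_j(x)\) is convex for every fixed \(y \in \R^p\), treating the three conditions case by case at the scalar level (affine term; nonnegative scalar \(y^\intercal A_j y\) times convex \(f_j\); nonpositive scalar times concave \(f_j\)). You instead stay at the matrix level and reduce everything to the canonical form of the lemma immediately preceding Lemma \ref{lem:matrix_convex_form} (constant plus positive semidefinite matrices times convex scalar functions), handling condition 3 by the sign flip \(A_j f_j = (-A_j)(-f_j)\) and condition 1 by the positive/negative-part splitting \(A_j = A_j^+ - A_j^-\) with \(f_j\) and \(-f_j\) both convex; your alternative observation that affine \(f_j\) turns the defining Loewner inequality into an equality for that term is also valid. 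Both arguments are sound. Your reduction buys a cleaner modular structure --- the result becomes a corollary of the preceding lemma, and your explicit partition of the index set into \(J_1,J_2,J_3\) handles the fact that different \(j\) may satisfy different conditions more transparently than the paper's case write-up, which reads as if all terms fall in the same case and must be interpreted term-by-term. The paper's scalarization buys brevity: it avoids the spectral decomposition and needs no appeal to the earlier lemma, at the cost of leaving the term-by-term summation of convex scalar functions implicit.
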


\begin{proof}
    As per Definition \ref{def:matrix_convexity}, we proceed by proving that \(x \mapsto y^\intercal H(x) y\) is convex for all \(y \in \R^p\).
    Consider the function
    \begin{align}
        x &\mapsto y^\intercal \pth{B + \sum_{j=1}^J A_j f_j(x)} y, \\
        &\mapsto y^\intercal B y + \sum_{j=1}^J \pth{y^\intercal A_j y} f_j(x)
    \end{align}
    Case 1): If \(f_j\) is linear in \(x\), then \( y^\intercal B y + \sum_{j=1}^J \pth{y^\intercal A_j y} f_j(x)\) is an affine function of \(x\) which is both convex and concave.

    Case 2): Observe that \(y^\intercal A_j y \geq 0\) since \(A_j \ggeq 0\). This implies that \( y^\intercal B y + \sum_{j=1}^J \pth{y^\intercal A_j y} f_j(x)\) is the sum of convex functions each multiplied by a nonnegative scalar, added to a constant value. This results in a convex function.

    Case 3): Observe that \(y^\intercal A_j y \leq 0\) since \(A_j \gleq 0\). This implies that \( y^\intercal B y + \sum_{j=1}^J \pth{y^\intercal A_j y} f_j(x)\) is the sum of concave functions each multiplied by a nonpositive scalar, added to a constant value. This results in a convex function.
\end{proof}

\begin{lemma}
    The mapping \(H : \R^n \to \mathbb{S}^p\) is matrix concave if \(H\) is in the form
    \begin{align}
        H(x) = B + \sum_{j=1}^J A_j f_j(x)
    \end{align}
    for some symmetric \(B \in \mathbb{S}^p\) and at least one of the following conditions holds for each \(j \in 1,\ldots,J\):
    \begin{itemize}
        \item \(f_j\) is affine in \(x\) and \(A_j \in \mathbb{S}^p\),
        \item \(f_j\) is concave in \(x\) and \(A_j \ggeq 0\),
        \item \(f_j\) is convex in \(x\) and \(A_j \gleq 0\).
    \end{itemize}
\end{lemma}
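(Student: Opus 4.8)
The plan is to mirror the proof of Lemma \ref{lem:matrix_convex_form}, invoking the equivalent characterization of matrix concavity from Definition \ref{def:matrix_concavity}: it suffices to show that the scalar map \(x \mapsto y^\intercal H(x) y\) is concave in \(x\) for every fixed \(y \in \R^p\).

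First I would expand this map using linearity of the quadratic form, writing
\begin{align}
    y^\intercal H(x) y = y^\intercal B y + \sum_{j=1}^J \pth{y^\intercal A_j y} f_j(x).
\end{align}
The term \(y^\intercal B y\) is a constant independent of \(x\), so it is enough to show each summand \(\pth{y^\intercal A_j y} f_j(x)\) is concave in \(x\); then the sum of concave functions plus a constant is concave, and we are done.

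Next I would carry out the case analysis on the three conditions. In the first case \(f_j\) is affine and \(A_j \in \mathbb{S}^p\), so \(\pth{y^\intercal A_j y} f_j(x)\) is an affine function of \(x\), hence concave. In the second case \(f_j\) is concave and \(A_j \ggeq 0\), so \(y^\intercal A_j y \geq 0\), and a nonnegative multiple of a concave function is concave. In the third case \(f_j\) is convex and \(A_j \gleq 0\), so \(y^\intercal A_j y \leq 0\), and a nonpositive multiple of a convex function is concave. In all three cases the summand is concave in \(x\), which completes the argument.

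The proof is essentially routine; the only thing to watch is the sign bookkeeping relating the Loewner order on \(A_j\) to the sign of the scalar \(y^\intercal A_j y\), matched against the convexity or concavity of \(f_j\) so that each product comes out concave. As an alternative, one could deduce the claim directly from Lemma \ref{lem:matrix_convex_form} applied to \(-H(x) = -B + \sum_{j=1}^J (-A_j) f_j(x)\), using that \(H\) is matrix concave if and only if \(-H\) is matrix convex and that replacing \(A_j\) by \(-A_j\) interchanges \(\ggeq\) with \(\gleq\); but the direct case analysis above is just as short, so I would present that.
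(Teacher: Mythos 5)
Your proposal is correct and matches the paper's approach: the paper's proof simply states that the result follows by the same argument as Lemma \ref{lem:matrix_convex_form} via Definition \ref{def:matrix_concavity}, which is exactly the case analysis on \(x \mapsto y^\intercal H(x) y\) that you spell out. The sign bookkeeping in your three cases is right, so nothing further is needed.
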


\begin{proof}
    Follows from similar arguments as Lemma \ref{lem:matrix_convex_form}, using Definition \ref{def:matrix_concavity}.
\end{proof}

\printbibliography


\end{document}